\newtheorem{theorem}{Theorem}
\newtheorem{proposition}[theorem]{Proposition}
\newtheorem{definition}[theorem]{Definition}
\newtheorem{lemma}[theorem]{Lemma}
\newtheorem{corollary}[theorem]{Corollary}
\newtheorem{claim}{Claim}[theorem]
\newcommand{\gbalgo}{the \texttt{git bisect} algorithm}
\newcommand{\vertex}[1]{\boldsymbol{#1}}
\newcommand{\ancestors}[1]{|#1|}
\newcommand{\faulty}{{faulty commit}}
\newcommand{\gb}{\texttt{git bisect}}
\newcommand{\golden}{\texttt{golden bisect}}
\newcommand{\score}{\mathsf{score}}
\definecolor{ibm_blue}{HTML}{648FFF}
\definecolor{ibm_purple}{HTML}{785EF0}
\definecolor{ibm_magenta}{HTML}{DC267F}
\definecolor{ibm_orange}{HTML}{FE6100}
\definecolor{ibm_yellow}{HTML}{FFB000}
\definecolor{deep_purple}{HTML}{623CEA}
\definecolor{grey}{HTML}{EFF1ED}
\definecolor{light_grey}{HTML}{f4efee}
\title{\textbf{Theoretical analysis of git bisect}}
\author{Julien Courtiel}
\author{Paul Dorbec}
\author{Romain Lecoq}
\affil{\small Normandie Univ, UNICAEN, ENSICAEN, CNRS, GREYC, 14000 Caen, France.}
\date{}
\begin{document}

\maketitle

\begin{abstract}
   In this paper, we consider the problem of finding a regression in a version
   control system (VCS), such as \texttt{git}.
   The set of versions is modelled by a Directed Acyclic Graph (DAG) where
   vertices represent versions of the software, and arcs are the changes between different versions. We
   assume that somewhere in the DAG, a bug was introduced, which persists
   in all of its subsequent versions. It is possible to query a vertex to
   check whether the corresponding version carries the bug. Given a DAG and
   a bugged vertex, the Regression Search Problem consists in finding the
   first vertex containing the bug in a minimum number of queries in the
   worst-case scenario. This problem is known to be NP-complete.

   We study the algorithm used in \texttt{git} to address this problem, known as
   \texttt{git bisect}. We prove that in a general setting, \texttt{git bisect} can use an
   exponentially larger number of queries than an optimal algorithm. We
   also consider the restriction where all vertices have indegree at most
   2 (i.e. where merges are made between at most two branches at a time in the VCS),
   and prove that in this case, \texttt{git bisect} is a
   $\frac{1}{\log_2(3/2)}$-approximation algorithm, and that this bound is
   tight. We also provide a better approximation algorithm for this case.

   Finally, we give an alternative proof of the NP-completeness of the Regression Search Problem, via a variation with bounded indegree.
\end{abstract}

\section{Introduction}

In the context of software development, it is essential to resort to Version Control Systems (VCS, in short), like \texttt{git} or \texttt{mercurial}. VCS enable many developers to work concurrently on the same system of files. Notably, all the \emph{versions} of the project (that is to say the different states of the project over time) are saved by the VCS, as well as the different changes between versions.

Furthermore, many VCS offer the possibility of creating \emph{branches} (i.e. parallel lines of development) and \emph{merging} them, so that individuals can work on their own part of the project, with no risk of interfering with other developers work.
Thereby the overall structure can be seen as a Directed Acyclic Graph (DAG), where the vertices are the versions, also named in this context \emph{commits}, and the arcs model the changes between two versions.

\smallskip

The current paper deals with a problem often occurring in projects of large size: searching the origin of a so-called \emph{regression}.
Even with intensive testing techniques, it seems unavoidable to find out long-standing bugs which have been lying undetected for some time.
Conveniently, one tries to fix this bug by finding the commit in which the bug appeared for the first time.
The idea is that there should be few differences between the code source of the commit that introduced the bug and the one from a previous bug-free commit, which makes it easier to find and fix the bug.

The identification of the faulty commit is possible by performing \emph{queries} on existing commits. A query allows to figure out the status of the commit: whether it is \emph{bugged} or it is \emph{clean}. A single query can be very time-consuming: it may require running tests, manual checks, or the compilation of an entire source code.
In some large projects, performing a query on a single commit can take up to a full day (for example, the Linux kernel project~\cite{gb-website}).
This is why it is essential to find the commit that introduced the bug with as few queries as possible.

The problem of finding an optimal solution in terms of number of queries, known as the Regression Search Problem, was proved to be NP-complete by Carmo, Donadelli, Kohayakawa and Laber in~\cite{CDKY:searching-in-posets}.
However, whenever the DAG is a tree (oriented from the leaves to the root),
the computational complexity of the Regression Search Problem is polynomial \cite{Ben-Asher97optimalsearch,Onak-Parys}, and even linear \cite{MOW:linear}.

To our knowledge, very few papers in the literature deal with the Regression Search Problem in the worst-case scenario, as such. The Decision Tree problem, which is known to be NP-complete \cite{hyafil-rivest} as well as its approximation version \cite{laber:DT-approximation}, somehow generalises the Regression Search Problem, with this difference that the Decision Tree problem aims to minimise the average number of queries instead of the worst-case number of queries.

Many variations of the Regression Search problem exist:
\begin{itemize}
\item the costs of the queries may vary \cite{DKUZ:weighted,EKS:probabilistic-binary-search};
\item the queries return the wrong result (say it is clean while the vertex is bugged or the converse) with a certain probability \cite{EKS:probabilistic-binary-search};
\item one can just try to find a bugged vertex with at least one clean parent \cite{bendk}.
\end{itemize}

The most popular VCS today, namely \texttt{git}, proposes a tool for this problem: an algorithm named \gb. It is a heuristic inspired by binary search that narrows down at each query the range of the possible faulty commits.
This algorithm is widely used and shows excellent experimental results, though to our knowledge, no mathematical study of its performance have been carried out up to now.

In this paper, we fill this gap by providing a careful analysis on
the number of queries that \gb\ uses compared to an optimal strategy.
This paper does not aim to find new approaches for the Regression Search Problem.

First, we show in Section~\ref{sec:general} that, in the general case,
\gb\ may be very inefficient, testing about half the commits
where an optimal logarithmic number of commits can be used to identify exactly the faulty vertex.
But in all the cases where such bad performance occurs, there are large merges between more than two branches,\footnote{According to \href{https://www.destroyallsoftware.com/blog/2017/the-biggest-and-weirdest-commits-in-linux-kernel-git-history}{this blog}, a merge of 66 branches happened in the Linux kernel repository.} also named \emph{octopus merges}.
However, such merges are highly uncommon and inadvisable, so we carry out the study of \gb\ performances with the assumption that the DAG does not contain any octopus merge, that is, every vertex has indegree at most two.
Under such an assumption, we are able to prove in Section~\ref{sec:binary} that \gb\ is an approximation algorithm for the problem, never using more than ${\frac 1 {\log_2(3/2)} \approx 1.71}$ times the optimal number of queries for large enough repositories.
We also provide a family of DAGs for which the number of queries used by \gb\ tends to $\frac 1 {\log_2(3/2)}$ times the optimal number of queries.

This paper also describes in Section~\ref{sec:golden} a new algorithm, which is a refinement of \gb. This new algorithm, which we call \golden,  offers a mathematical guaranteed ratio of $\frac{1}{\log_2(\phi)} \approx 1.44$ for DAGs with indegree at most $2$ where $\phi= \frac{1+\sqrt{5}}{2}$ is the golden ratio. The search of new efficient algorithms for the Regression Search Problem seems to be crucial in software engineering (as evidenced by~\cite{bendk}); \golden\ is an example of progress in this direction.

The good performances of \gb\ and \golden\ in the binary case raise a last question.
Is the problem still NP-complete if the inputs are restricted to binary DAGs?
In Section~\ref{sec:np_complete}, we consider a variation, the Confined Regression Search Problem (CRSP), which is NP-complete even in the binary case.
This variation is equivalent to the Regression Search Problem (RSP) in the general case,
so this gives a new proof of the complexity of this problem.
However, this does not extend to RSP in the binary case.

\subsection{Formal definitions}

Throughout the paper, we refer to VCS repositories as graphs, and more precisely as \emph{Directed Acyclic Graphs} (DAGs), i.e., directed graphs with no cycle. The set $V$ of vertices corresponds to the versions of the software. An arc goes from a vertex $\vertex p$ to another vertex $\vertex v$ if $\vertex v$ is obtained by a modification from $\vertex p$. We then say that $\vertex p$ is a \emph{parent} of $\vertex v$. A vertex may have multiple parents in the case of a merge. An \emph{ancestor} of $\vertex v$ is $\vertex v$ itself or an ancestor of a parent ${\text{of } \vertex v}$.\footnote{Usually, $\vertex v$ is not considered an ancestor of itself. Though, for simplifying the terminology, we use this special convention here.} Equivalently, a vertex is an ancestor of $\vertex v$ if and only if it is co-accessible from $\vertex v$ (i.e., there exists a path from this vertex to $\vertex v$).

We use the convention to write vertices in bold (for example $\vertex v$), and the number of ancestors of a vertex with its name between two vertical bars (for example $\ancestors v$).

In our DAGs, we consider that a bug has been introduced at some vertex, named the \emph{\faulty}. This vertex is unique, and its position is unknown. The \faulty\  is supposed to transmit the bug to each of its \emph{descendants} (that is, its children, its grand-children, and so on). Thus, vertices have two possible statuses: \emph{bugged} or \emph{clean}. A vertex is bugged if and only if it has the \faulty\ as an ancestor. Other vertices are clean. This is illustrated by Figure~\ref{fig:bug_spread}.

\begin{figure}[ht]
   \centering
   \scalebox{0.9}{
      \begin{tikzpicture}
	[
		scale=.6,
		minimum size=0.75cm,
		every node/.style={circle,draw,scale=0.8,text=black,line width=1.1pt},
		buggedNode/.style={text opacity=1,opacity=0.1,draw opacity=1,fill=ibm_magenta, color=ibm_magenta,text=black, line width=1.2pt},
		normalNode/.style={text opacity=1,opacity=0,draw opacity=1,color=darkgray,text=black, line width=1.1pt},
	]

    \tikzset{
        cross/.pic = {
        \draw[rotate = 45,color=ibm_magenta] (-#1,0) -- (#1,0);
        \draw[rotate = 45,color=ibm_magenta] (0,-#1) -- (0, #1);
        }
    }
	\tikzset{
        halfcross/.pic = {
        \draw[rotate = 45,color=ibm_magenta] (-#1,0) -- (#1,0);
        }
    }

    \draw (8,6) pic[black,opacity=1] {cross=11pt};
    \draw (19.5, 4) pic[black,opacity=1] {halfcross=11pt};

	\node[normalNode] (1) at (0, 6) {$1$};
	\node[normalNode] (2) at (2, 6) {$2$};
	\node[normalNode] (3) at (8,8) {$6$};
	\node[normalNode] (4) at (4,6) {$3$};
	\node[normalNode] (5) at (6,6) {$4$};
	\node[buggedNode] (6) at (8,6) {$5$};
	\node[buggedNode] (7) at (10, 6) {$7$};
	\node[normalNode] (8) at (0, 2) {$8$};
	\node[normalNode] (9) at (2,2) {$9$};
	\node[normalNode] (10) at (4, 2) {$10$};
	\node[normalNode] (11) at (6, 2) {$11$};
	\node[normalNode] (12) at (8, 2) {$12$};
	\node[normalNode] (13) at (10, 2) {$13$};
	\node[buggedNode] (14) at (12, 4) {$14$};
	\node[buggedNode] (15) at (14.5, 4) {$15$};
	\node[buggedNode] (16) at (17, 4) {$16$};
	\node[buggedNode] (17) at (19.5, 4) {$21$};
	\node[normalNode] (18) at (7.5, -0.5) {$17$};
	\node[normalNode] (19) at (11, -0.5) {$18$};
	\node[buggedNode] (20) at (14.5, -0.5) {$19$};
	\node[buggedNode] (21) at (18, -0.5) {$20$};

	\path[->,color=black]
		(1) edge (2)
		(2) edge (4)
		(4) edge (5)
		(5) edge (6)
		(6) edge (7)
		(3) edge (7)
		(7) edge (14)
		
		(8) edge (9)
		(9) edge (10)
		(10) edge (11)
		(11) edge (12)
		(12) edge (13)
		(13) edge (14)
		(14) edge (15)
		(15) edge (16)
		(16) edge (17)
		
		(18) edge (19)
		(19) edge (20)
		(20) edge (21)
		(21) edge (17)
		
		(5) edge (18)
		(9) edge (18)
		(14) edge (20)
	;

\end{tikzpicture}
   }
   \caption{An example of a DAG. The bugged vertices are coloured. The strikeout vertex ($\vertex{21}$) is the marked vertex, known to be bugged. The crossed  vertex ($\vertex 5$) is the \faulty.}
   \label{fig:bug_spread}
\end{figure}

We consider the problem of identifying the \faulty\ in a DAG $D$, where a bugged vertex $\vertex b$ is identified.
It is addressed by performing \emph{queries} on vertices of the graph. Each query states whether the vertex is bugged or clean, and thus whether or not the \faulty\ belongs to its ancestors or not. Once we find a bugged vertex whose parents are all clean, it is the \faulty.

The aim of the Regression Search Problem is to design a strategy for finding the \faulty\ in a minimal number of queries.

Formally, a \emph{strategy} (see for example \cite{cjlm}) for a DAG $D$  is a binary tree $S$ where the nodes are labelled by the vertices of $D$.
Inner nodes of $S$ represent queries. The root of $S$ is the first performed query. If the queried vertex is bugged, then the following strategy is given by the left subtree. If it is clean, the strategy continues on the right subtree.
Whenever the subtree is reduced to a leaf, a single candidate remains. The label of the leaf gives the only possible \faulty.

For example, Figure~\ref{fig:strategy_example} shows a strategy tree for a directed path of size $5$, where the identified bugged vertex is the last one. Suppose that the \faulty\ is $\vertex 4$. In this strategy, we first query $\vertex 2$. Since it is clean, we query next $\vertex 4$, which appears to be bugged. We finally query $\vertex 3$: since it is clean, we infer that the \faulty\ is $\vertex 4$. We have found the \faulty\ with $3$ queries. Remark that if the \faulty\  was $\vertex 1$, $\vertex 2$ or $\vertex 5$, the strategy would use only $2$ queries.

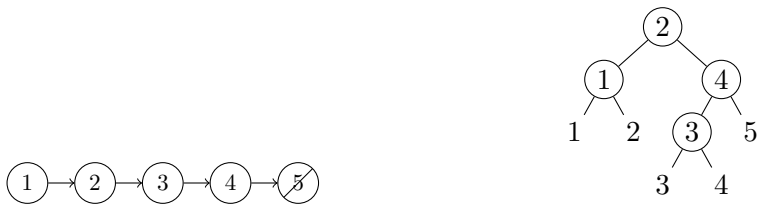
\begin{figure}[ht]
   \centering
   \begin{subfigure}{.45\textwidth}
      \centering
      \hfill
      \scalebox{1}{
         \begin{tikzpicture}
    [scale=.6,auto=left,every node/.style={circle,draw,scale=0.8}]

    \node (1) at (1, 1) {1};
    \node (2) at (2.5, 1) {2};
    \node (3) at (4, 1) {3};
    \node (4) at (5.5, 1) {4};
    \node[forbidden sign] (5) at (7, 1) {5};

    \path[->] 
        (1) edge (2)
        (2) edge (3)
        (3) edge (4)
        (4) edge (5)
    ;

\end{tikzpicture}
      }
   \end{subfigure}
   \hfill
   \begin{subfigure}{.45\textwidth}
      \centering
      \scalebox{1}{
         \begin{forest} 
    for tree={inner sep=2pt,l=10pt,l sep=5pt,circle,draw}
    [2
        [1
            [1,draw=none]
            [2,draw=none]
        ]
        [4
            [3
                [3,draw=none]
                [4,draw=none]
            ]
            [5,draw=none]
        ]
    ]
\end{forest}
      }
   \end{subfigure}
   \caption{\emph{Left.} A directed path on $5$ vertices. \emph{Right.} A possible strategy for the Regression Search Problem on the path on $5$ vertices.}
   \label{fig:strategy_example}
\end{figure}

For a given strategy, the \emph{number of queries in the worst-case scenario} corresponds to the height of the tree. In the above example,
this number is 3, occurring when the \faulty\ is $\vertex 3$ or $\vertex 4$.

The Regression Search Problem is formally defined as follows.

\begin{definition} \textbf{Regression Search Problem.}  \\
   \textbf{Input.} A DAG $D$, a marked vertex $\vertex b$ known to be bugged, and an integer $k$.

   \noindent\textbf{Output.} Whether there is a strategy that finds the faulty commit in at most $k$ queries in the worst-case scenario.
   \label{def:regression_search_problem}
\end{definition}

Since the \faulty\ is necessarily an ancestor of $\vertex b$, it is convenient to directly study the induced subgraph on $\vertex b$'s ancestors. In this case, $\vertex b$ is a \emph{sink} (i.e. a vertex with no outgoing edge) accessible from all vertices in the DAG. Thus, when the bugged vertex is not specified, it is assumed to be the only sink of the DAG.

In the following, \emph{optimal strategies} are strategies that use the least number of queries in the worst case scenario.
For example, if the input DAG is a directed path of size $n$, an optimal strategy uses $\lceil \log_2(n) \rceil$ queries in the worst-case scenario. Indeed, a simple binary search enables to remove half of the vertices at each query.

\begin{figure}[ht]
   \centering
   \scalebox{1}{
      \begin{tikzpicture}
    [scale=.6,auto=left,every node/.style={circle,draw,scale=0.8}]

    \node (1) at (0,3) {1};
    \node (2) at (0,2) {2};
    \node (3) at (0,1) {3};
    \node (4) at (0,0) {4};
    \node (5) at (0,-1) {5};
    \node[forbidden sign] (6) at (2,1) {6};

    \path[->] 
        (1) edge (6)
        (2) edge (6)
        (3) edge (6)
        (4) edge (6)
        (5) edge (6)
    ;

\end{tikzpicture}
   }
   \caption{An octopus of size $6$.}
   \label{fig:octopus_graph}
\end{figure}
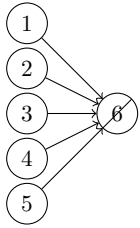

A second interesting example is what we refer to as an \emph{octopus}. In this digraph, there is a single sink and all other vertices are parent of the sink (see Figure~\ref{fig:octopus_graph}).
When the \faulty\ is the sink, we must query all other vertices to make sure that the sink is faulty, regardless of the strategy. Thus, any optimal strategy uses $n-1$ queries in the worst-case scenario.

These two examples actually constitute extreme cases for the Regression Search Problem, as shown by the following proposition.

\begin{proposition}
   For any DAG $D$ where the marked bugged vertex has $n$ ancestors, an optimal strategy that finds the \faulty\ uses at least $\lceil \log_{2}(n)\rceil$ queries, and at most $n-1$ queries.
   \label{prop:bounding_nb_queries}
\end{proposition}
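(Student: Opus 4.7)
The plan is to establish the two bounds separately; neither requires any machinery beyond the definitions.

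For the lower bound, I would use a counting argument on the leaves of the strategy tree. Every ancestor $\vertex v$ of $\vertex b$ is a legitimate candidate for the \faulty: placing the \faulty\ at $\vertex v$ produces a consistent assignment of statuses in which $\vertex b$ is indeed bugged, so a correct strategy must handle each of these $n$ scenarios. Any two distinct scenarios producing the same sequence of query answers would reach the same leaf, which carries a single label and so could correctly identify at most one of them. Hence a correct strategy must devote a distinct leaf to each of the $n$ possible \faulty{}s, and a binary tree with $n$ leaves has height at least $\lceil \log_{2} n \rceil$.

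For the upper bound, I would exhibit an explicit non-adaptive strategy of height $n - 1$. Enumerate the ancestors of $\vertex b$ distinct from $\vertex b$ as $\vertex v_{1}, \ldots, \vertex v_{n-1}$, and build the full binary tree of depth $n-1$ that queries $\vertex v_{i}$ at every internal node of depth $i - 1$. At the end of any root-to-leaf path, the status of every vertex of the DAG is known (recalling that $\vertex b$ is a priori bugged). One then labels each leaf with the unique bugged vertex all of whose parents are clean, which is precisely the \faulty\ under the bug propagation rule. Leaves reached by sequences of answers that are impossible (for instance one in which a vertex is clean but a parent of it is bugged) may be labelled arbitrarily, since they will never be traversed.

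The main subtlety, though a mild one, is to justify that the \faulty\ is uniquely recoverable from the full status vector. This follows from the definition of bug propagation: the bugged vertices are exactly the descendants of the \faulty, and because $D$ is acyclic, the \faulty\ is the unique bugged vertex whose parents are all clean. Beyond this observation, there is no real obstacle, and the proposition reduces to a careful reading of the model.
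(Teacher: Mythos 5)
Your proof is correct and follows essentially the same route as the paper's: the lower bound via the observation that a correct strategy tree must have at least $n$ leaves (hence height at least $\lceil \log_2 n\rceil$), and the upper bound by querying the $n-1$ ancestors of $\vertex b$ other than $\vertex b$ itself. Your version merely spells out the details (distinct scenarios need distinct leaves; the \faulty\ is recoverable from the full status vector) that the paper leaves implicit.
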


\begin{proof}
   Remember that a strategy is a binary tree with at least $n$ leaves, and the number of queries in the worst-case scenario corresponds to the height of the tree. But the height of such a binary tree is necessarily at least $\lceil \log_{2}(n)\rceil$, which proves the lower bound.

   As for the upper bound, it is sufficient to query the $n-1$ ancestors of the marked bugged vertex to identify the \faulty.
\end{proof}

From a complexity point of view, the Regression Search Problem is hard: Carmo, Donadelli, Kohayakawa and Laber proved in \cite{CDKY:searching-in-posets} that the Regression Search Problem is NP-complete.\footnote{In reality, the problem they studied has an extra restriction: a query cannot be performed on a vertex which was eliminated from the set of candidates for the \faulty\ (which occurs for example when an ancestor is known to be bugged).
However, the gadget they used in the proof of NP-completeness also works for our problem where we do not necessarily forbid such queries.}
We also provide in Section \ref{sec:np_complete} an alternative proof of its \mbox{NP-completeness} (see Corollary~\ref{cor:rsp_crsp_crspbin_npcomplete}).

\subsection{Description of \gb}

As said in the introduction, some VCS provide a tool for the Regression Search Problem. The most known tool is \gb, but an equivalent exists in \texttt{mercurial} (\texttt{hg bisect}~\cite{hg-bisect}).

The algorithm \gb\ is a greedy algorithm based on the classical binary search.
At each step, it keeps only the subgraph where the \faulty\ lies and queries the vertex that split the digraph in the most balanced way.

To be more precise, let us define the notion of \emph{score}.

\begin{definition}[Score]
   Given a DAG $D$ with $n$ vertices, the score of a vertex $\vertex x$ of $D$ is  designated by $\score_D(\vertex x)$ or just $\score(\vertex x)$ if there is no ambiguity. Its value is
   \[\min(\ancestors x,n-\ancestors x),\] where $\ancestors x$ is the number of ancestors of $\vertex x$ (recall that $\vertex x$ is an ancestor of itself).
   \label{def:score}
\end{definition}

\begin{figure}[ht]
   \centering
   \scalebox{0.9}{
      \begin{tikzpicture}
	[
		scale=.6,
		minimum size=0.75cm,
		every node/.style={circle,draw,scale=0.8,text=black,line width=1.1pt},
		buggedNode/.style={text opacity=1,opacity=0.1,draw opacity=1,fill=ibmMagenta, color=ibmMagenta,text=black, line width=1.2pt},
		normalNode/.style={text opacity=1,opacity=0,draw opacity=1,color=darkgray,text=black, line width=1.1pt},
	]

	\tikzset{
        halfcross/.pic = {
        \draw[rotate = 45,color=darkgray] (-#1,0) -- (#1,0);
        }
    }

    \draw (19.5, 4) pic[black,opacity=1] {halfcross=11pt};

	\node[normalNode,label={[label distance=-0.2cm]90:1/\textcolor{gray}{20}}] (1) at (0, 6) {$1$};
	\node[normalNode,label={[label distance=-0.2cm]90:2/\textcolor{gray}{19}}] (2) at (2, 6) {$2$};
	\node[normalNode,label={[label distance=-0.2cm]90:1/\textcolor{gray}{20}}] (3) at (8,8) {$6$};
	\node[normalNode,label={[label distance=-0.2cm]90:3/\textcolor{gray}{18}}] (4) at (4,6) {$3$};
	\node[normalNode,label={[label distance=-0.2cm]90:4/\textcolor{gray}{17}}] (5) at (6,6) {$4$};
	\node[normalNode,label={[label distance=-0.2cm]90:5/\textcolor{gray}{16}}] (6) at (8,6) {$5$};
	\node[normalNode,label={[label distance=-0.2cm]90:7/\textcolor{gray}{14}}] (7) at (10, 6) {$7$};
	\node[normalNode,label={[label distance=-0.2cm]90:1/\textcolor{gray}{20}}] (8) at (0, 2) {$8$};
	\node[normalNode,label={[label distance=-0.2cm]90:2/\textcolor{gray}{19}}] (9) at (2,2) {$9$};
	\node[normalNode,label={[label distance=-0.2cm]90:3/\textcolor{gray}{18}}] (10) at (4, 2) {$10$};
	\node[normalNode,label={[label distance=-0.2cm]90:4/\textcolor{gray}{17}}] (11) at (6, 2) {$11$};
	\node[normalNode,label={[label distance=-0.2cm]90:5/\textcolor{gray}{16}}] (12) at (8, 2) {$12$};
	\node[normalNode,label={[label distance=-0.2cm]90:6/\textcolor{gray}{15}}] (13) at (10, 2) {$13$};
	\node[normalNode,label={[label distance=-0.2cm]90:\textcolor{gray}{14}/7}] (14) at (12, 4) {$14$};
	\node[normalNode,label={[label distance=-0.2cm]90:\textcolor{gray}{15}/6}] (15) at (14.5, 4) {$15$};
	\node[normalNode,label={[label distance=-0.2cm]90:\textcolor{gray}{16}/5}] (16) at (17, 4) {$16$};
	\node[normalNode,label={[label distance=-0.2cm]90:\textcolor{gray}{21}/0}] (17) at (19.5, 4) {$21$};
	\node[normalNode,label={[label distance=-0.3cm]-90:7/\textcolor{gray}{14}}] (18) at (7.5, -0.5) {$17$};
	\node[normalNode,label={[label distance=-0.3cm]-90:8/\textcolor{gray}{13}}] (19) at (11, -0.5) {$18$};
	\node[normalNode,label={[label distance=-0.3cm]-90:\textcolor{gray}{17}/4}] (20) at (14.5, -0.5) {$19$};
	\node[normalNode,label={[label distance=-0.3cm]-90:\textcolor{gray}{18}/3}] (21) at (18, -0.5) {$20$};

	\path[->,color=black]
		(1) edge (2)
		(2) edge (4)
		(4) edge (5)
		(5) edge (6)
		(6) edge (7)
		(3) edge (7)
		(7) edge (14)
		
		(8) edge (9)
		(9) edge (10)
		(10) edge (11)
		(11) edge (12)
		(12) edge (13)
		(13) edge (14)
		(14) edge (15)
		(15) edge (16)
		(16) edge (17)
		
		(18) edge (19)
		(19) edge (20)
		(20) edge (21)
		(21) edge (17)
		
		(5) edge (18)
		(9) edge (18)
		(14) edge (20)
	;

\end{tikzpicture}
   }
   \caption{The notation $a/b$ along each vertex indicates that $a$ is the number of ancestors of the vertex, and $b$ is the number of non-ancestors. The score (see Definition~\ref{def:score}) is displayed in black.}
   \label{fig:score}
\end{figure}

For example, let us refer to Figure~\ref{fig:score}: vertex $\vertex 5$ has $5$ ancestors ($\vertex 1$, $\vertex 2$, $\vertex 3$, $\vertex 4$ and $\vertex 5$). So $\score(\vertex 5)=\min(5,21-5) = 5$.

If vertex $\vertex x$ is queried and appears to be bugged, then there remain $\ancestors x$ candidates for the \faulty: the ancestors of $\vertex x$. If the query of $\vertex x$ reveals on the contrary that it is clean, then the number of candidates for the \faulty\ is $n-\ancestors x$, which is the number of non-ancestors. This is why the score of $\vertex x$ can be interpreted as the least number of vertices to be eliminated from the set of possible candidates for the \faulty, when $\vertex x$ is queried. For a DAG, each vertex has a score and the maximum score is the score with the maximum value among all.

A detailed description of \gb\ is given by Algorithm~\ref{algo:gitbisect}.

\begin{algorithm}[ht]
    \caption{\gb}
    \textbf{Input.} A DAG $D$ and a bugged vertex $\vertex b$. \\
    \textbf{Output.} The \faulty\ of $D$. \\
    \textbf{Steps:}
    \begin{enumerate}
        \item Remove from $D$ all non-ancestors of $\vertex b$.
        \item If $D$ has only one vertex, return this vertex. \label{item:debut-gb}
        \item Compute the score for each vertex of $D$.
        \item Query the vertex with the maximum score. If there are several vertices which have the maximum score, select any one then query it.
        \item If the queried vertex is bugged, remove from $D$ all non-ancestors of the queried vertex. Otherwise, remove from $D$ all ancestors of the queried vertex.
        \item Go to Step~\ref{item:debut-gb}.
    \end{enumerate}
    \label{algo:gitbisect}
\end{algorithm}

As an example of an execution, consider the DAG from Figure~\ref{fig:score}. Vertex $\vertex{18}$ has the maximum score ($\score(\vertex{18})=8$) so constitutes the first vertex to be queried. If we assume that the \faulty\ is $\vertex{5}$, then the query reveals that $\vertex{18}$ is clean. So all ancestors of $\vertex{18}$ are removed (that are $\vertex 1, \vertex 2, \vertex 3, \vertex 4, \vertex 8, \vertex 9, \vertex {17}, \vertex{18}$). Vertex $\vertex{14}$ is then queried because it has the new maximum score $5$, and so on.

\begin{figure}[ht]
  \centering
  \scalebox{.7}{
     \begin{forest} 
    for tree={inner sep=2pt,l=10pt,l sep=5pt,circle,draw,minimum size=1.6em}
    [18
    [4
        [2
            [1
                [1,draw=none]
                [2,draw=none]
            ]
            [3
                [3,draw=none]
                [4,draw=none]
            ]
        ]
        [9
            [8
                [8,draw=none]
                [9,draw=none]
            ]
            [17
                [17,draw=none]
                [18,draw=none]
            ]
        ]
    ]
    [14
        [13
            [11
                [10
                    [10,draw=none]
                    [11,draw=none]
                ]
                [12
                    [12,draw=none]
                    [13,draw=none]
                ]
            ]
            [5
                [5,draw=none]
                [6
                    [6,draw=none]
                    [7
                        [7,draw=none]
                        [14,draw=none]
                    ]
                ]
            ]
        ]
        [16
            [15
                [15,draw=none]
                [16,draw=none]
            ]
            [19
                [19,draw=none]
                [20
                    [20,draw=none]
                    [21,draw=none]
                ]
            ]
        ]
    ]
    ]
\end{forest}
  }
  \caption{The \gb\ strategy corresponding to the graph of Figure \ref{fig:score}. In case of score equality, the convention we choose consists in querying the vertex with the smallest label.}
  \label{fig:complete_strategy_example}
\end{figure}
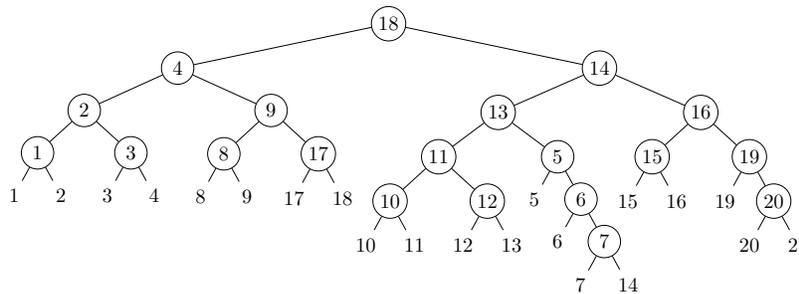

The whole \gb\ strategy tree is shown in Figure~\ref{fig:complete_strategy_example}.

Notice that for this DAG, \gbalgo\ uses $6$ queries in the worst-case scenario, which is not optimal for this \emph{well-chosen} example as we are going to see later.

The greedy idea behind \gb\ (choosing the query which partitions the commits as evenly as possible) is quite widespread in the literature. For example, it was used to find a $(\log(n)+1)$-approximation for the Decision Tree Problem \cite{AH:approximating-DT}, in particular within the framework of geometric models \cite{AMMRS:geometric}.

\section{Worst-case number of queries}
\label{sec:general}

This section addresses the complexity analysis of \gb\ in the worst-case scenario.

In Sections \ref{sec:general}, \ref{sec:binary} and \ref{sec:golden}, we consider algorithms that prune all non-ancestors of the marked vertex $\vertex b$.
Therefore, all results are stated for DAGs that have one sink, which is the marked vertex, and for which the number of vertices $n$ is also the number of candidates for the \faulty.

\subsection{The comb construction}

We describe in this subsection a way to enhance any DAG in such a way the Regression Search Problem can always be solved in a logarithmic number of queries.

\begin{definition}[Comb addition]
   Let $D$ be a Directed Acyclic Graph with $n$ vertices. Let $\vertex{v_1}<\vertex{v_2}<\ldots<\vertex{v_n}$ be a topological ordering of $D$, that is a linear ordering of the vertices such that if $\vertex{v_i} \to \vertex{v_j}$ is an arc, then $\vertex{v_i}<\vertex{v_j}$.

   We say that we add a comb to $D$ if we add to $D$:
   \begin{itemize}
      \item $n$ new vertices $\vertex{u_1},\ldots,\vertex{u_n}$;
      \item the arcs $\vertex{v_i} \to \vertex{u_i}$ for $i \in \{1,\ldots,n\}$;
      \item the arcs $\vertex{u_i} \to \vertex{u_{i+1}}$ for $i \in \{1,\ldots,n-1\}$.
   \end{itemize}
   The resulting graph is designated by $comb(D)$. The new identified bugged vertex of $comb(D)$ is $\vertex{u_n}$.
   \label{def:comb}
\end{definition}

An example of comb addition is shown by Figure~\ref{fig:comb_addition}.

The comb addition depends on the initial topological ordering, but the latter will not have any impact on the following results. This is why we take the liberty of writing $comb(D)$ without any mention of the topological ordering.

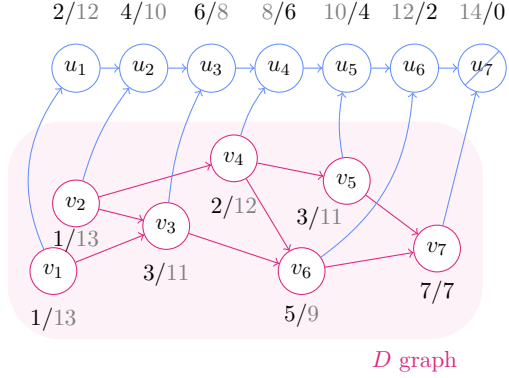
\begin{figure}[ht]
   \centering
   \begin{tikzpicture}
	[
		scale=.6,
		every node/.style={circle,draw,scale=0.8},
		uNode/.style={color=ibm_blue,text=black},
		vNode/.style={color=ibm_magenta,text=black,fill=white},
		buggedNode/.style={forbidden sign}
	]

	\draw[fill,opacity=0.06,color=ibm_magenta,rounded corners=20pt] (-0.5,-1) rectangle (10,3.8); 
	\node[rectangle,draw=none,color=ibm_magenta] at (8.5,-1.5) {$D$ graph};

	\node[vNode,label={[label distance=-0.2cm]-90:1/\textcolor{gray}{13}}] (v1) at (0.5, 0.5) {$v_1$};
	\node[vNode,label={[label distance=-0.4cm]-90:1/\textcolor{gray}{13}}] (v2) at (1, 2) {$v_2$};
	\node[vNode,label={[label distance=-0.2cm]-90:3/\textcolor{gray}{11}}] (v3) at (3.0, 1.5) {$v_3$};
	\node[vNode,label={[label distance=-0.2cm]-90:2/\textcolor{gray}{12}}] (v4) at (4.5, 3.0) {$v_4$};
	\node[vNode,label={[label distance=-0.2cm]-95:3/\textcolor{gray}{11}}] (v5) at (7.0, 2.5) {$v_5$};
	\node[vNode,label={[label distance=-0.2cm]-90:5/\textcolor{gray}{9}}] (v6) at (6.0, 0.5) {$v_6$};
	\node[vNode,label={[label distance=-0.2cm]-90:7/7}] (v7) at (9.0, 1.0) {$v_7$};
	
	\path[->,color=ibm_magenta] 
		(v1) edge (v3)
		(v2) edge (v3)
		(v3) edge (v6)
		(v2) edge (v4)
		(v4) edge (v6)
		(v4) edge (v5)
		(v5) edge (v7)
		(v6) edge (v7)
	;
	
	\node[uNode,label={[label distance=-0.1cm]90:2/\textcolor{gray}{12}}] (u1) at (1, 5.0) {$u_1$};
	\node[uNode,label={[label distance=-0.1cm]90:4/\textcolor{gray}{10}}] (u2) at (2.5, 5.0) {$u_2$};
	\node[uNode,label={[label distance=-0.02cm]90:6/\textcolor{gray}{8}}] (u3) at (4, 5.0) {$u_3$};
	\node[uNode,label={[label distance=-0.02cm]90:\textcolor{gray}{8}/6}] (u4) at (5.5, 5.0) {$u_4$};
	\node[uNode,label={[label distance=-0.1cm]90:\textcolor{gray}{10}/4}] (u5) at (7, 5.0) {$u_5$};
	\node[uNode,label={[label distance=-0.1cm]90:\textcolor{gray}{12}/2}] (u6) at (8.5, 5.0) {$u_6$};
	\node[buggedNode,uNode,label={[label distance=-0.1cm]90:\textcolor{gray}{14}/0}] (u7) at (10, 5.0) {$u_7$};

	\path[->,color=ibm_blue] 
		(u1) edge (u2)
		(u2) edge (u3)
		(u3) edge (u4)
		(u4) edge (u5)
		(u5) edge (u6)
		(u6) edge (u7)
	;

	\path[->, color=ibm_blue] 
		(v1) edge[bend left=30] (u1)
		(v2) edge[bend left=10] (u2)
		(v3) edge[bend left=10] (u3)
		(v4) edge[bend left=10] (u4)
		(v5) edge[bend left=10] (u5)
		(v6) edge[bend right=25] (u6)
		(v7) edge (u7)
	;

\end{tikzpicture}
   \caption{Illustration of the comb addition. The initial digraph is highlighted in pink.}
   \label{fig:comb_addition}
\end{figure}

\begin{theorem}
   Let $D$ be a Directed Acyclic Graph with $n$ vertices and such that the number of queries used by \gbalgo\ is $x$. If we add a comb to $D$, then the resulting DAG $comb(D)$ is such that:
   \begin{itemize}
      \item the optimal strategy uses only $\lceil \log_2(2n) \rceil$ queries;
      \item when $n$ is odd, \gbalgo\ uses $x+1$ queries.
   \end{itemize}
   \label{theo:comb}
\end{theorem}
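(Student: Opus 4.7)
The plan is to first determine the score of every vertex in $comb(D)$. The ancestors of $\vertex{u_i}$ are exactly $\{\vertex{u_1}, \ldots, \vertex{u_i}, \vertex{v_1}, \ldots, \vertex{v_i}\}$, since every $\vertex{v_j}$ with $j \le i$ reaches $\vertex{u_i}$ through the path $\vertex{v_j} \to \vertex{u_j} \to \vertex{u_{j+1}} \to \cdots \to \vertex{u_i}$. Hence $\ancestors{u_i} = 2i$ and $\score(\vertex{u_i}) = 2\min(i, n-i)$. The comb addition creates no new arc into any $\vertex{v_i}$, so the ancestor set of $\vertex{v_i}$ in $comb(D)$ is the same as in $D$; since $\vertex{v_1} < \cdots < \vertex{v_n}$ is topological, $\ancestors{v_i} \le i$, and $\score(\vertex{v_i}) = \ancestors{v_i}$ because $2n - \ancestors{v_i} \ge n$.

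For the optimal-strategy bound, I would exhibit a binary search along the $u$-path. The key invariant is that after some queries the candidate set has the shape $\{\vertex{u_a}, \ldots, \vertex{u_b}, \vertex{v_a}, \ldots, \vertex{v_b}\}$ of size $2(b-a+1)$. Querying $\vertex{u_{\lfloor (a+b)/2 \rfloor}}$ halves this count in both branches (bugged leaves the lower half, clean leaves the upper half) and preserves the invariant. Iterating, $\lceil \log_2(2n) \rceil$ queries suffice to isolate the \faulty, matching the generic lower bound of Proposition~\ref{prop:bounding_nb_queries}.

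For the second bullet, I would show that when $n$ is odd, $\vertex{v_n}$ is the \emph{unique} vertex of $comb(D)$ achieving the maximum score. Indeed $\score(\vertex{v_n}) = n$, while $\score(\vertex{v_i}) \le i < n$ for $i < n$, and no $\vertex{u_i}$ reaches score $n$ since $2\min(i,n-i) = n$ would force $i = n/2$, impossible for odd $n$. Thus \gb\ queries $\vertex{v_n}$ first. If $\vertex{v_n}$ is bugged, the remaining DAG is exactly $D$ (same $n$ vertices, same arcs), and because \gb's behaviour is entirely determined by the score function, it proceeds identically to its original execution on $D$ and uses $x$ further queries in the worst case. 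If $\vertex{v_n}$ is clean, the remaining DAG is the directed path $\vertex{u_1} \to \cdots \to \vertex{u_n}$, on which \gb\ is the usual binary search and uses $\lceil \log_2(n) \rceil$ queries. The worst case is therefore $1 + \max(x, \lceil \log_2(n) \rceil)$, which equals $x+1$ because Proposition~\ref{prop:bounding_nb_queries} applied to $D$ forces $x \ge \lceil \log_2(n) \rceil$.

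The delicate point is the uniqueness of the first query: this is exactly where the parity hypothesis enters. For even $n$, $\vertex{u_{n/2}}$ ties with $\vertex{v_n}$, and the outcome would depend on the tie-breaking rule; the argument as stated would break down. Once $\vertex{v_n}$ is locked in as the first query, the recursion splits cleanly into a faithful copy of $D$ on one branch and a directed path on the other, and the remaining bookkeeping is routine.
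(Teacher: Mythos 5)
Your proposal is correct and follows essentially the same route as the paper's proof: the same ancestor count $\ancestors{u_i}=2i$, the same binary search along the $\vertex{u_i}$ chain for the optimal bound (your candidate-set invariant is just the paper's observation that each residual graph is again of the form $comb(D')$), and the same parity argument showing $\vertex{v_n}$ is the unique maximum-score vertex, followed by the case split into $D$ versus a directed path and the inequality $x \ge \lceil \log_2(n) \rceil$.
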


\noindent\textit{\textbf{Proof idea}}
On one hand, the optimal strategy for $comb(D)$ can be naturally achieved with a binary search on the $\vertex{u_i}$ vertices.
On the other hand, with the assumption that $n$ is odd, \gb\ will necessarily query $\vertex{v_n}$ first since its score is $n$, and the scores of the $\vertex{u_i}$ vertices are all even. This explains why \gbalgo\ uses $x+1$ queries.

\begin{proof}[Detailed proof]
   We keep the same notation as Definition~\ref{def:comb}. For a DAG $D$ and a subset of vertices $X\subseteq V$, the \emph{induced subgraph} of $D$ on $X$, denoted by $D[X]$, is the digraph with vertex set $X$, and with an arc from vertex $\vertex u$ to vertex $\vertex v$ if and only if the corresponding arc is in $D$.

   \begin{claim}
      For all $i$, $\vertex{u_i}$ has $2i$ ancestors, which are all the vertices $\vertex{u_j}$ and $\vertex{v_j}$ with $j\le i$. The ancestors of $\vertex{v_i}$ do not change. \label{claim:comb_ancestors}
   \end{claim}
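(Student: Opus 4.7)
The plan is to prove the two halves of the claim separately and in the simplest order: first the easy observation about $\vertex{v_i}$, then a straightforward induction for $\vertex{u_i}$.

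For the ancestors of $\vertex{v_i}$, I would note that the only arcs added by the comb construction have the form $\vertex{v_j}\to\vertex{u_j}$ or $\vertex{u_j}\to\vertex{u_{j+1}}$. None of these arcs is incoming to any $\vertex{v_i}$, so $\vertex{v_i}$ has the same parents in $comb(D)$ as in $D$. More generally, any directed path in $comb(D)$ that passes through some $\vertex{u_j}$ stays in the $\vertex{u}$-layer forever (since $\vertex{u}$-vertices only have outgoing arcs toward other $\vertex{u}$-vertices), and therefore cannot reach a $\vertex{v_i}$. Hence the set of ancestors of $\vertex{v_i}$ is unchanged.

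For $\vertex{u_i}$, I would proceed by induction on $i$, using the fact that the parents of $\vertex{u_i}$ in $comb(D)$ are exactly $\vertex{v_i}$ together with $\vertex{u_{i-1}}$ when $i>1$ (for $i=1$, only $\vertex{v_1}$). The base case $i=1$ follows because $\vertex{v_1}$ is topologically minimal in $D$ so its only ancestor is itself, giving ancestor set $\{\vertex{v_1},\vertex{u_1}\}$. For the inductive step, I would write
\[
\text{ancestors}(\vertex{u_i}) \;=\; \{\vertex{u_i}\} \,\cup\, \text{ancestors}(\vertex{v_i}) \,\cup\, \text{ancestors}(\vertex{u_{i-1}}).
\]
By the first half of the claim, the ancestors of $\vertex{v_i}$ (in $comb(D)$, equal to those in $D$) are contained in $\{\vertex{v_1},\dots,\vertex{v_i}\}$ by the topological ordering. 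By the induction hypothesis, the ancestors of $\vertex{u_{i-1}}$ are exactly $\{\vertex{u_j},\vertex{v_j}:j\le i-1\}$. Combining gives the inclusion $\subseteq$. The reverse inclusion $\supseteq$ is immediate: for any $j\le i$, the path $\vertex{v_j}\to\vertex{u_j}\to\vertex{u_{j+1}}\to\cdots\to\vertex{u_i}$ witnesses that both $\vertex{v_j}$ and each $\vertex{u_k}$ with $j\le k\le i$ are ancestors of $\vertex{u_i}$. The counting is then trivial: the set $\{\vertex{u_j},\vertex{v_j}:j\le i\}$ consists of $2i$ distinct vertices.

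I do not foresee a real obstacle here; the only thing that requires a moment of care is the observation that no path in $comb(D)$ can enter the $\vertex{u}$-layer and come back, which is what makes the invariance of ancestors of the $\vertex{v_i}$ immediate and also keeps the induction clean.
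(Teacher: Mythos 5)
Your proposal is correct and follows essentially the same route as the paper: observe that no arc of $comb(D)$ points into a $\vertex{v_i}$ (so their ancestor sets are unchanged), then induct on $i$ using that the parents of $\vertex{u_i}$ are $\vertex{v_i}$ and $\vertex{u_{i-1}}$, with the topological ordering guaranteeing that every ancestor of $\vertex{v_i}$ is some $\vertex{v_j}$ with $j\le i$. Your version is slightly more explicit about the two inclusions and the disjointness underlying the count of $2i$, but the argument is the same.
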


   Observe first that no $\vertex{v_i}$ is the head of an arc added in $comb(D)$. Inductively, we infer that the ancestors of $\vertex{v_i}$ do not change.

   As for $\vertex{u_i}$, we prove the claim by induction. Indeed, vertex $\vertex{u_i}$ has two parents which are $\vertex{u_{i-1}}$ and $\vertex{v_i}$. By induction hypothesis, we can see that all the vertices $\vertex{u_j}$ and $\vertex{v_j}$ with $j<i$ are ancestors of $\vertex{u_i}$ since they are the ancestors of $\vertex{u_{i-1}}$. Moreover all ancestors $\vertex{v_j}$ of $\vertex{v_i}$ satisfy $j \le i$ (by topological ordering). Consequently $\vertex{u_i}$ has $2i$ ancestors: itself, $\vertex{v_i}$ and all the ancestors of $\vertex{u_{i-1}}$.

   \begin{claim}
      The optimal number of queries is $\lceil \log_2(2n) \rceil$ for $comb(D)$.
   \end{claim}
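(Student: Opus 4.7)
The plan is to establish the two matching bounds separately. For the lower bound, Claim~\ref{claim:comb_ancestors} gives that the marked vertex $\vertex{u_n}$ has exactly $2n$ ancestors in $comb(D)$, so by Proposition~\ref{prop:bounding_nb_queries} any strategy requires at least $\lceil \log_2(2n) \rceil$ queries.

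For the upper bound, I would exhibit an explicit strategy performing a binary search along the ``spine'' $\vertex{u_1}, \ldots, \vertex{u_n}$. The key point, again from Claim~\ref{claim:comb_ancestors}, is that the set of ancestors of $\vertex{u_i}$ is precisely $\{\vertex{v_j}, \vertex{u_j} : 1 \le j \le i\}$. Consequently, querying $\vertex{u_i}$ cleanly splits any candidate set of the ``interval'' form $\{\vertex{v_j}, \vertex{u_j} : j \in [a,b]\}$ (with $a \le i \le b$) into two sub-intervals of the same form: $\{\vertex{v_j}, \vertex{u_j} : j \in [a,i]\}$ if $\vertex{u_i}$ is bugged, and $\{\vertex{v_j}, \vertex{u_j} : j \in [i+1,b]\}$ if clean. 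I would then define recursively a procedure $\mathsf{Search}(a,b)$ as follows: if $a < b$, query $\vertex{u_m}$ with $m = \lfloor (a+b)/2 \rfloor$ and recurse on the appropriate half; if $a = b$, make one final query on $\vertex{v_a}$, returning $\vertex{v_a}$ if bugged and $\vertex{u_a}$ if clean. Invoked from $\mathsf{Search}(1,n)$, the depth satisfies the recurrence $T(n) = 1 + T(\lceil n/2 \rceil)$ with $T(1) = 1$, which resolves to $T(n) = \lceil \log_2 n \rceil + 1 = \lceil \log_2(2n) \rceil$.

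The point requiring some care is verifying the terminal step: after the narrowing, only $\vertex{v_a}$ and $\vertex{u_a}$ remain, and since $\vertex{u_a}$ is not an ancestor of $\vertex{v_a}$ (indeed $\vertex{v_a}$ is a parent of $\vertex{u_a}$), the final query on $\vertex{v_a}$ unambiguously distinguishes the two outcomes (bugged selects $\vertex{v_a}$ as the \faulty, clean selects $\vertex{u_a}$). The invariant that the candidate set at each recursive call is exactly $\{\vertex{v_j}, \vertex{u_j} : j \in [a,b]\}$ follows by straightforward induction on the recursion depth, directly from the splitting property derived above.
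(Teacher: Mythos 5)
Your proof is correct and follows essentially the same route as the paper's: a binary search on the spine $\vertex{u_1},\ldots,\vertex{u_n}$ justified by the fact that the ancestors of $\vertex{u_i}$ are exactly $\{\vertex{v_j},\vertex{u_j} : j \le i\}$, with optimality coming from Proposition~\ref{prop:bounding_nb_queries}. The only cosmetic difference is that the paper phrases the recursion structurally (the residual graph after a query is again $comb(D')$ for an induced subgraph $D'$) whereas you track the candidate interval explicitly; both resolve to the same recurrence and the same terminal query on $\vertex{v_a}$.
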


	Let us prove this claim for every digraph $D$ by induction on the number $n$ of vertices of $D$.

	The case $n=1$ is obvious: if $D$ has only $1$ vertex, we query $\vertex{v_1}$ to know whether $\vertex{u_1}$ or $\vertex{v_1}$ is the \faulty. The number of queries is then $\lceil \log_2(2 \times 1) \rceil = 1$.

	Now fix $n>1$ and let us assume that the claim holds for every digraph $D$ of size smaller than $n$. We choose as the first query the vertex $\vertex{u_i}$ where $i = \lceil \frac n 2 \rceil$.

   Depending on whether $\vertex{u_i}$ is bugged or clean, the digraph after this query is
   either $comb(D)[\vertex{u_1},\dots,\vertex{u_i},\vertex{v_1},\dots,\vertex{v_i}]$ or $comb(D)[\vertex{u_{i+1}},\dots,\vertex{u_n},\vertex{v_{i+1}},\dots,\vertex{v_n}]$.

   Notice that in any case, the resulting digraph is of the form $comb(D')$. Indeed, we just have to choose $D' := D[\vertex{v_1},\dots,\vertex{v_i}]$ or $D' := D[\vertex{v_{i+1}},\dots,\vertex{v_n}]$, and keep the same topological ordering.

   Now we can use the induction hypothesis on $comb(D')$, which has at most $\left\lceil \frac n 2 \right\rceil$ vertices: we can find a strategy in at most $\lceil \log_2(2 \left \lceil \frac n 2 \right \rceil ) \rceil$ queries to find the \faulty\  in $comb(D')$.

   The overall number of queries for $comb(D)$ with this strategy is then at most $1 + \left\lceil \log_2\left( 2 \lceil \frac n 2 \rceil  \right) \right\rceil$, which is equal to $\lceil \log_2(2n) \rceil$ whenever $n \geq 1$. By Proposition~\ref{prop:bounding_nb_queries}, a strategy with this number of queries must be optimal.

   \begin{claim}
      If $n$ is odd, \gbalgo\ necessarily uses $x+1$ queries.
   \end{claim}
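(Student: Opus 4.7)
The plan is to force the first \gb\ query to be $\vertex{v_n}$ by a parity argument on the scores, and then do a case analysis on the outcome of this query.

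First I would compute the scores of all vertices of $comb(D)$ using Claim~\ref{claim:comb_ancestors}. The new graph has $2n$ vertices. Each $\vertex{u_i}$ has exactly $2i$ ancestors, so $\score(\vertex{u_i}) = \min(2i, 2n-2i)$ is always \emph{even}. For each $\vertex{v_i}$, the number of ancestors $a_i$ is unchanged by the comb addition and satisfies $a_i \le n$, hence $2n - a_i \ge n \ge a_i$, giving $\score(\vertex{v_i}) = a_i$. Since $\vertex{v_n}$ is the sink of $D$, all $n$ vertices of $D$ are ancestors of $\vertex{v_n}$, so $\score(\vertex{v_n}) = n$. Because $n$ is odd, no $\vertex{u_i}$ can reach score $n$ (this would require $i = n/2$), and no other $\vertex{v_i}$ can either, since $\vertex{v_n}$ is not among its ancestors. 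Thus $\vertex{v_n}$ is the unique vertex of maximum score, and \gb\ is forced to query it first.

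Next I would analyse the two possible outcomes of this first query. If $\vertex{v_n}$ is bugged, \gb\ discards all non-ancestors of $\vertex{v_n}$, that is, all the $\vertex{u_i}$'s, leaving exactly the DAG $D$ with $\vertex{v_n}$ as the marked bugged vertex. By hypothesis \gb\ then uses $x$ further queries in the worst case, for a total of $x+1$. If $\vertex{v_n}$ is clean, \gb\ discards all of $D$ and is left with the directed path $\vertex{u_1} \to \vertex{u_2} \to \cdots \to \vertex{u_n}$. On a path of length $n$, \gb\ performs an exact binary search and terminates in $\lceil \log_2 n \rceil$ queries. Since Proposition~\ref{prop:bounding_nb_queries} gives $x \ge \lceil \log_2 n \rceil$, the overall worst-case number of queries is $\max(x+1,\, \lceil \log_2 n \rceil + 1) = x+1$, as claimed.

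I do not anticipate a serious obstacle here: the only delicate point is the verification that $\vertex{v_n}$ is the \emph{strictly} maximum-score vertex, which hinges entirely on the parity of $n$ combined with the fact that all $\vertex{u_i}$-scores are even. Everything else reduces to tracking which vertices survive the pruning step and invoking the given behaviour of \gb\ on $D$ and on a path.
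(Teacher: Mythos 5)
Your proof is correct and follows essentially the same route as the paper's: both force the first query to be $\vertex{v_n}$ by observing that its score is $n$ while every $\vertex{u_i}$ has even score and every other $\vertex{v_i}$ has fewer than $n$ ancestors, then split on the outcome of that query and use $x \ge \lceil \log_2 n\rceil$ from Proposition~\ref{prop:bounding_nb_queries} to conclude the worst case is $x+1$. No gaps.
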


   By Claim~\ref{claim:comb_ancestors}, $\vertex{v_n}$ has $n$ ancestors, and digraph $comb(D)$ has $2n$ vertices. So $\score(\vertex{v_n})=n$ (hence maximal).

   Vertex $\vertex{v_n}$ is the only one to have a maximal score. Indeed, on the one hand, any vertex of the form $\vertex{v_i}$ with $i < n$ has fewer than $n$ ancestors. On the other hand, $\vertex{u_i}$ having $2i$ ancestors, its score must be even, and therefore cannot be maximal if $n$ is odd.

   Thus \gbalgo\ is going to choose $\vertex{v_n}$ as first query. If this vertex turns out to be clean, it remains a directed path of length $n$, inducing $\lceil \log_2(n) \rceil$ \gb\ queries. If $\vertex{v_n}$ is bugged, then the resulting graph is $D$, for which the worst-case number of \gb\ queries is $x$. Therefore, since $x \geq \lceil \log_2(n) \rceil$ by Proposition~\ref{prop:bounding_nb_queries}, the number of \gb\ queries for $comb(D)$ in the worst-case scenario is $x+1$.
\end{proof}

If the initial number of vertices $n$ is even, there is no guarantee that \gb\  will perform $x+1$ queries on $comb(D)$ -- it depends on whether the first queried vertex is $\vertex{v_n}$ or $\vertex{u_{n/2}}$.

However a referee rightly mentioned that the odd hypothesis could be (almost) removed by tweaking the comb construction whenever $n$ is even. Indeed, by deleting the edge from $\vertex{ v_{n/2} }$ to $\vertex{ u_{n/2}}$, \gb\ is forced to use $x+1$ queries in the worst-case scenario, while
the following strategy uses $\lceil \log_2(2n) + 1 \rceil$ queries : run a binary search on the path formed by vertices $\vertex{u_1}, \ldots, \vertex{u_n}$, then query all remaining parents of the identified vertex $\vertex{u_i}$ (that is possibly zero, two or one parents depending on whether the identified vertex is respectively $\vertex{u_{n/2}}$, $\vertex{u_{n/2+1}}$ or any other $\vertex{u_i}$) .

\subsection{A pathological example for \gb}

The following corollary shows the existence of digraphs for which \gbalgo\ totally fails. The optimal number of queries is linear, while \gbalgo\ effectively uses an exponential number of queries.

\begin{theorem}
   For any integer $k > 2$, there exists a DAG such that the optimal number of queries is $k$, while \gbalgo\ uses $2^{k-1}-1$ queries in the worst-case scenario.
   \label{theo:pathological}
\end{theorem}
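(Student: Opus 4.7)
My plan is to apply the comb construction of Theorem~\ref{theo:comb} to the octopus graph described just after Proposition~\ref{prop:bounding_nb_queries}. Recall that in the octopus $O_n$ on $n$ vertices, the unique sink has $n$ ancestors while every other vertex has exactly one ancestor (itself); hence every non-sink vertex has score $1$, and \gb queries parents one at a time. In the worst case, namely when the \faulty\ is the sink itself, every parent must be confirmed clean before the sink can be identified, forcing \gb to perform exactly $x := n-1$ queries on $O_n$.

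Next, I would set $n := 2^{k-1} - 1$, which is odd whenever $k \geq 2$, so the odd hypothesis of Theorem~\ref{theo:comb} is satisfied. That theorem then yields the DAG $comb(O_n)$, which has $2n = 2^k - 2$ vertices, on which \gb uses $x+1 = 2^{k-1}-1$ queries in the worst case, and on which an optimal strategy uses $\lceil \log_2(2n) \rceil$ queries. By Proposition~\ref{prop:bounding_nb_queries}, this upper bound is also a matching lower bound, so the optimal count equals $\lceil \log_2(2^k - 2) \rceil$ exactly.

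It remains to verify that $\lceil \log_2(2^k - 2) \rceil = k$, and this is the only place where the hypothesis $k > 2$ is actually needed: the inequalities $2^{k-1} < 2^k - 2 \leq 2^k$ hold precisely when $k > 2$, giving the desired ceiling (note that for $k = 2$ one would instead get $\lceil \log_2(2) \rceil = 1$, not $2$). Since Theorem~\ref{theo:comb} packages the heavy lifting, the only substantive step is choosing the correct base DAG: an octopus, because \gb is maximally inefficient there, with size $2^{k-1}-1$, to match both the parity constraint and the target number of queries. I do not expect any delicate obstacle beyond this choice; the rest is a routine verification combining the two earlier results.
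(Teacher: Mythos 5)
Your proposal is correct and follows exactly the paper's own argument: take an octopus on $2^{k-1}-1$ vertices (odd, so Theorem~\ref{theo:comb} applies), on which \texttt{git bisect} needs $2^{k-1}-2$ queries, and pass to its comb. The extra details you supply — why the octopus forces $n-1$ queries and why $\lceil\log_2(2^k-2)\rceil=k$ requires $k>2$ — are sound and merely make explicit what the paper leaves implicit.
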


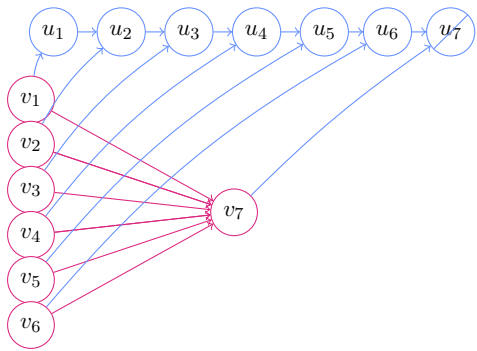
\begin{figure}[ht]
   \centering
   \begin{tikzpicture}
	[
		scale=.6,
		every node/.style={circle,draw,scale=0.8},
		uNode/.style={color=ibm_blue,text=black},
		vNode/.style={color=ibm_magenta,text=black,fill=white},
		buggedNode/.style={forbidden sign}
	]

	\node[vNode] (v1) at (0,0) {$v_1$};
	\node[vNode] (v2) at (0,-1) {$v_2$};
	\node[vNode] (v3) at (0,-2) {$v_3$};
	\node[vNode] (v4) at (0,-3) {$v_4$};
	\node[vNode] (v5) at (0,-4) {$v_5$};
	\node[vNode] (v6) at (0,-5) {$v_6$};
	\node[vNode] (v7) at (4.5,-2.5) {$v_7$};

	\path[->,color=ibm_magenta] 
		(v1) edge (v7)
		(v2) edge (v7)
		(v3) edge (v7)
		(v2) edge (v7)
		(v4) edge (v7)
		(v4) edge (v7)
		(v5) edge (v7)
		(v6) edge (v7)
	;

	\node[uNode] (u1) at (0.5, 1.5) {$u_1$};
	\node[uNode] (u2) at (2, 1.5) {$u_2$};
	\node[uNode] (u3) at (3.5, 1.5) {$u_3$};
	\node[uNode] (u4) at (5, 1.5) {$u_4$};
	\node[uNode] (u5) at (6.5, 1.5) {$u_5$};
	\node[uNode] (u6) at (7.9, 1.5) {$u_6$};
	\node[buggedNode,uNode] (u7) at (9.3, 1.5) {$u_7$};

	\path[->,color=ibm_blue] 
		(u1) edge (u2)
		(u2) edge (u3)
		(u3) edge (u4)
		(u4) edge (u5)
		(u5) edge (u6)
		(u6) edge (u7)
	;

	\path[->,color=ibm_blue] 
		(v1) edge[bend left=11] (u1)
		(v2) edge[bend left=10] (u2)
		(v3) edge[bend left=10] (u3)
		(v4) edge[bend left=10] (u4)
		(v5) edge[bend left=10] (u5)
		(v6) edge[bend left=10] (u6)
		(v7) edge[bend left=5] (u7)
	;

\end{tikzpicture}
   \caption{$Comb(D)$ graph where $D$ is an octopus of size $7$.}
   \label{fig:octopus_with_comb}
\end{figure}

\begin{proof}
   Choose $D$ as an octopus with $2^{k-1} - 1$ vertices. The number of \gb\ queries in $D$ is $2^{k-1} - 2$ in the worst-case scenario.
   The wanted digraph is then $comb(D)$ (see Figure~\ref{fig:octopus_with_comb} for an illustration). Indeed, by Theorem~\ref{theo:comb}, \gbalgo\ uses $2^{k-1} - 1$ \gb\ queries to find the \faulty\ in $comb(D)$, while an optimal strategy uses $\left\lceil \log_2\left( 2^k - 2\right)\right\rceil=k$ queries.
\end{proof}

This also shows that \gbalgo\ is not a $C$-approximation algorithm for the Regression Search Problem, for any constant $C$.

\section{Approximation ratio for binary DAGs}
\label{sec:binary}

\subsection{Results}

The pathological input for \gbalgo\ has a very particular shape (see Figure~\ref{fig:octopus_with_comb}): it involves a vertex with a gigantic indegree. However, in the context of VCS, this structure is quite rare. It means that many branches have been merged at the same time (the famous \textit{octopus merge}). Such an operation is strongly discouraged, in addition to the fact that we just showed that \gb\ becomes inefficient in this situation.

This motivates to define a new family of DAGs, closer to reality:

\begin{definition}[Binary digraph]
   A digraph is \emph{binary} if each vertex has indegree (that is, the number of ingoing edges) at most $2$.
\end{definition}

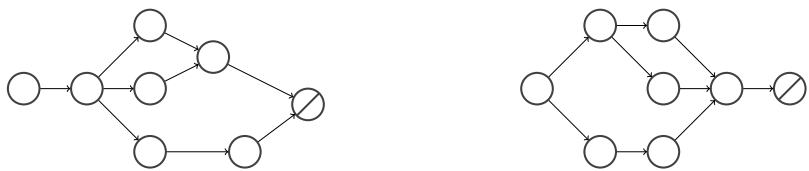
\begin{figure}[ht]
   \centering
   \begin{subfigure}[t]{.45\textwidth}
      \centering
      \scalebox{0.7}{
         \begin{tikzpicture}
	[
		scale=.6,
		every node/.style={circle,draw,scale=0.8},
		minimum size=0.75cm,
		normalNode/.style={text opacity=0,opacity=0,draw opacity=1,color=darkgray,text=black, line width=1.1pt}
	]

	\node[normalNode] (1) at (4, 6) {};
	\node[normalNode] (2) at (0, 4) {};
	\node[normalNode] (3) at (2, 4) {};
	\node[normalNode] (4) at (4, 4) {};
	\node[normalNode] (5) at (6, 5) {};
	\node[normalNode,forbidden sign] (6) at (9, 3.5) {};
	\node[normalNode] (7) at (4, 2) {};
	\node[normalNode] (8) at (7, 2) {};

	\path[->,color=black]
		(1) edge (5)
		(2) edge (3)
		(4) edge (5)
		(5) edge (6)
		(3) edge (1)
		(3) edge (4)
		(3) edge (7)
		(7) edge (8)
		(8) edge (6)
	;

\end{tikzpicture}
      }
   \end{subfigure}
   \hspace{0.3cm}
   \begin{subfigure}[t]{.45\textwidth}
      \centering
      \scalebox{0.7}{
         \begin{tikzpicture}
	[
		scale=.6,
		every node/.style={circle,draw,scale=0.8},
		minimum size=0.75cm,
		normalNode/.style={text opacity=0,opacity=0,draw opacity=1,color=darkgray,text=black, line width=1.1pt}
	]

	\node[normalNode] (1) at (2, 6) {};
	\node[normalNode] (2) at (4, 6) {};
	\node[normalNode] (3) at (0, 4) {};
	\node[normalNode] (4) at (4, 4) {};
	\node[normalNode] (5) at (6, 4) {};
	\node[normalNode,forbidden sign] (6) at (8, 4) {};
	\node[normalNode] (7) at (2, 2) {};
	\node[normalNode] (8) at (4, 2) {};

	\path[->,color=black]
		(1) edge (2)
		(2) edge (5)
		(4) edge (5)
		(8) edge (5)
		(5) edge (6)
		(3) edge (1)
		(3) edge (7)
		(7) edge (8)
		(1) edge (4)
	;

\end{tikzpicture}
      }
   \end{subfigure}
   \caption{\emph{Left.} A binary DAG. \emph{Right.} A non-binary DAG.}
   \label{fig:binary_dag_and_non_binary_dag}
\end{figure}

Figure~\ref{fig:binary_dag_and_non_binary_dag} illustrates this definition.  If we restrict the DAG to be binary, \gb\ proves to be efficient.

\begin{theorem}
   On any binary DAG with $n$ vertices, the number of queries of \gbalgo\ is at most $\frac{\log_2(n)}{\log_2(3/2)}$.
   \label{theo:bound_gb_binary_dag}
\end{theorem}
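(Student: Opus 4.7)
I would proceed by strong induction on $n$. The base case $n = 1$ is trivial: no query is needed and $\log_2(1)/\log_2(3/2) = 0$. For the inductive step, the heart of the argument is the following structural lemma: \emph{in any binary DAG with $n$ vertices, the maximum of $\score$ over all vertices is at least $n/3$}. Granted this, \gb\ queries a vertex whose score is at least $n/3$, so the surviving DAG has at most $2n/3$ vertices. Applying the induction hypothesis to this smaller binary DAG then gives a total of at most
\[
1 + \frac{\log_2(2n/3)}{\log_2(3/2)} = \frac{\log_2(n)}{\log_2(3/2)}
\]
queries, as required.

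To prove the key lemma, I would pick the vertex $\vertex{x}$ minimising $\ancestors{x}$ subject to $\ancestors{x} \ge \lceil n/3 \rceil$; such a vertex exists because the marked sink has $n$ ancestors. By the minimality of $\ancestors{x}$, every parent $\vertex{p}$ of $\vertex{x}$ satisfies $\ancestors{p} < \lceil n/3 \rceil$. Since the DAG is binary, $\vertex{x}$ has at most two parents, whence
\[
\ancestors{x} \leq 1 + \ancestors{p_1} + \ancestors{p_2} \leq 2\lceil n/3 \rceil - 1.
\]
Combining $\lceil n/3 \rceil \leq \ancestors{x} \leq 2\lceil n/3 \rceil - 1$ with $\score(\vertex{x}) = \min(\ancestors{x}, n - \ancestors{x})$ yields the claimed lower bound on the maximum score.

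The main technical subtlety, and the main obstacle I would have to confront, is the integer rounding in the case $n \equiv 1 \pmod 3$: there, the inequality $n - 2\lceil n/3 \rceil + 1 \ge n/3$ fails by $1/3$, and the crude argument delivers only maximum score $\ge (n-1)/3$ rather than the clean $n/3$. I plan to resolve this by exploiting the fact that the number of queries performed by \gb\ is integer-valued whereas the target bound $\log_2(n)/\log_2(3/2)$ is real-valued, so the $O(1)$ rounding slack can be absorbed into the real-valued target; equivalently, one may strengthen the induction hypothesis to an integer-friendly statement that collapses back to the advertised bound.
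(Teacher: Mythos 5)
Your overall strategy is the same as the paper's: find a vertex whose number of ancestors is pinned near $n/3$, conclude that \gb\ removes a constant fraction of the vertices at each step, and iterate. The problem is that your key lemma is false as stated, and the repair you sketch is precisely where the real work lies. The claw graph (a sink, its unique parent, and that parent's two sources, $n=4$) has maximum score $1 = \frac{n-1}{3} < \frac{n}{3}$, so the best one can extract from your (correct) counting argument is $\score(\vertex x) \geq \frac{n-1}{3}$, exactly as in the paper's Lemma~\ref{lem:good_score_vertex}. This changes the recursion from $n_{k+1} \leq \frac{2n_k}{3}$ to $n_{k+1} \leq \frac{2n_k+1}{3}$, and the clean inductive step $1 + \log_{3/2}(2n/3) = \log_{3/2}(n)$ no longer closes: for $n \equiv 1 \pmod 3$ one has $\lfloor (2n+1)/3 \rfloor > 2n/3$, so the induction hypothesis applied to the surviving DAG gives strictly more than $\log_{3/2}(n)$.

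Your proposed fix --- ``the $O(1)$ rounding slack can be absorbed into the real-valued target'' --- is not a proof, and the obvious ways of making it precise fail. Substituting $m_k = n_k - 1$ turns the recursion into $m_{k+1} \leq \frac{2}{3} m_k$, which shows the slack does not accumulate, but it only yields about $\lfloor \log_{3/2}(n-1) \rfloor + 1$ queries, and $\log_{3/2}(n-1) + 1 > \log_{3/2}(n)$ for every $n \geq 4$ (e.g.\ for $n=9$ this route gives $6$ while the target is $5.42$). Likewise, any strengthened hypothesis of the form ``at most $\log_{3/2}(n-1) + c$ queries'' with a fixed $c>0$ cannot imply the theorem for large $n$. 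The paper closes this gap with a genuinely two-part argument: an explicit computation of the iterated bound $F(n)$ for $n \leq 8$ (Table~\ref{table:Fn}), followed by the observation that once at most $8$ vertices remain only $4$ more queries are needed while $\log_{3/2}(8) \approx 5.13 > 5$, so the $x+5$ total fits under $\log_{3/2}(n-1)$. Some finite verification of this kind (or an equally concrete replacement) is unavoidable; as it stands, your proof is missing it.
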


\begin{corollary}
   The algorithm \gb\ is a  $\frac{1}{\log_2(3/2)} \approx 1.71$-approximation algorithm on binary DAGs.
\end{corollary}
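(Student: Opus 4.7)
The plan is to derive this corollary directly by combining the upper bound on the number of \gb\ queries given by Theorem~\ref{theo:bound_gb_binary_dag} with the general lower bound on any optimal strategy given by Proposition~\ref{prop:bounding_nb_queries}. Since an approximation ratio is a ratio between the number of queries used by the algorithm and the optimal number of queries, the two bounds are exactly what is needed.

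More concretely, first I would fix an arbitrary binary DAG $D$ with $n$ vertices and let $q_{\gb}(D)$ denote the number of queries used by \gb\ in the worst case, and $q_{opt}(D)$ denote the optimal worst-case number of queries. Theorem~\ref{theo:bound_gb_binary_dag} immediately gives $q_{\gb}(D) \leq \log_2(n)/\log_2(3/2)$. Next, I would invoke Proposition~\ref{prop:bounding_nb_queries} to obtain $q_{opt}(D) \geq \lceil \log_2(n) \rceil \geq \log_2(n)$.

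Taking the ratio then yields
\[
\frac{q_{\gb}(D)}{q_{opt}(D)} \;\leq\; \frac{\log_2(n)/\log_2(3/2)}{\log_2(n)} \;=\; \frac{1}{\log_2(3/2)},
\]
which holds for every binary DAG (with $n \geq 2$, so that the denominator is nonzero; the trivial case $n=1$ requires no queries and the ratio is vacuous). This establishes the $\frac{1}{\log_2(3/2)}$-approximation guarantee.

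There is no real obstacle here: the corollary is essentially a one-line consequence of Theorem~\ref{theo:bound_gb_binary_dag} combined with the universal logarithmic lower bound of Proposition~\ref{prop:bounding_nb_queries}. The only minor point worth mentioning is the edge case $n=1$, which is handled trivially.
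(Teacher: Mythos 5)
Your proposal is correct and is exactly the argument the paper intends: the corollary follows immediately by dividing the upper bound of Theorem~\ref{theo:bound_gb_binary_dag} by the lower bound $\lceil \log_2(n)\rceil \ge \log_2(n)$ of Proposition~\ref{prop:bounding_nb_queries}, which is why the paper states it without a separate proof. Your handling of the trivial case $n=1$ is a reasonable (if unnecessary) extra precaution.
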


\subsection{Bounding the number of queries}

The key ingredient of the proof lies in the next lemma, which exhibits a core property of binary DAGs. It states that if the DAG is binary, there must be a vertex with a ``good'' score, i.e., that removes at least approximately one third of the remaining vertices at each query. The overall number of queries is then equal to $\log_{3/2} (n)$.

\begin{lemma}
   In every binary DAG with $n$ vertices, there exists a vertex $\vertex{v}$ whose number of ancestors, $|v|$, satisfies the double inequality $\frac{n}{3} \leq \ancestors v \le \frac{2n+1}{3}$.
   \label{lem:good_score_vertex}
\end{lemma}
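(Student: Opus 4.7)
The plan is to construct the desired vertex explicitly by descending from the unique sink $\vertex{b}$ along a path of heaviest parents. The base case $n=1$ is immediate since the sink has one ancestor and $1/3 \leq 1 \leq 1$; assume henceforth $n \geq 2$. Start with $\vertex{v} := \vertex{b}$, so that $\ancestors{v} = n > (2n+1)/3$. While $\ancestors{v} > (2n+1)/3$, replace $\vertex{v}$ by one of its parents with largest ancestor count (ties broken arbitrarily). Because the DAG is acyclic, every parent $\vertex{p}$ of a vertex $\vertex{w}$ satisfies $\ancestors{p} < \ancestors{w}$ (since $\vertex{w}$ is an ancestor of itself but not of $\vertex{p}$), so $\ancestors{v}$ strictly decreases at each iteration; the process thus terminates at a vertex $\vertex{v}^\star$ with $\ancestors{v^\star} \leq (2n+1)/3$, giving the upper bound for free.

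For the lower bound $\ancestors{v^\star} \geq n/3$, I would look at the vertex $\vertex{v}'$ we were at just before moving to $\vertex{v}^\star$; it exists because $\ancestors{b} = n > (2n+1)/3$ for $n \geq 2$, so at least one iteration took place. By construction $\ancestors{v'} > (2n+1)/3$ and $\vertex{v}^\star$ is a parent of $\vertex{v}'$ of maximum ancestor count. The binary hypothesis forces $\vertex{v}'$ to have indegree $1$ or $2$, which prompts two cases. If $\vertex{v}'$ has a unique parent $\vertex{p}$, then $\ancestors{v'} = \ancestors{p} + 1$, and hence $\ancestors{v^\star} = \ancestors{v'} - 1 > (2n-2)/3 \geq n/3$ whenever $n \geq 2$. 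If $\vertex{v}'$ has two parents $\vertex{p_1}, \vertex{p_2}$ with $\ancestors{p_1} \geq \ancestors{p_2}$, so that $\vertex{v}^\star = \vertex{p_1}$, then the inclusion of the ancestor sets gives $\ancestors{v'} \leq 1 + \ancestors{p_1} + \ancestors{p_2} \leq 1 + 2\ancestors{v^\star}$, whence $\ancestors{v^\star} \geq (\ancestors{v'} - 1)/2 > (n-1)/3$.

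The hard part will be the rounding in this last step: the real-valued inequality $\ancestors{v^\star} > (n-1)/3$ falls just short of the target $n/3$. The remedy is to exploit that $\ancestors{v^\star}$ is a positive integer and perform a short case analysis on $n \bmod 3$. Writing $n = 3k$, $n = 3k+1$ or $n = 3k+2$, the strict bound $\ancestors{v^\star} > (n-1)/3$ gives respectively $\ancestors{v^\star} \geq k = n/3$, $\ancestors{v^\star} \geq k+1 > n/3$, and $\ancestors{v^\star} \geq k+1 > n/3$. Combining the two bounds establishes the lemma.
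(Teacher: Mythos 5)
Your proof is correct and follows essentially the same route as the paper's: the paper picks a vertex $\vertex{u}$ with $\ancestors{u}\geq (2n+2)/3$ of minimal ancestor count and takes its heaviest parent, which is exactly the last step of your descent, and both arguments rest on the same observation that a heaviest parent inherits at least half of the strict ancestors, followed by the same integrality rounding from $(n-1)/3$ up to $n/3$. The only cosmetic difference is that you split the one-parent and two-parent cases, while the paper treats them uniformly.
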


The reader can look at Figure~\ref{fig:good_score_vertex} for an illustrative example.

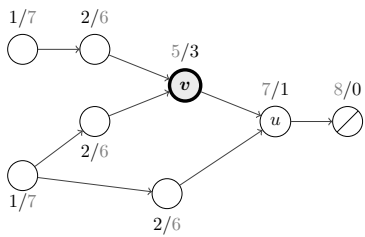
\begin{figure}[ht]
   \centering
   \scalebox{0.8}{
      \begin{tikzpicture}
	[
		scale=.6,
		every node/.style={circle,draw,scale=0.8},
		vNode/.style={color=black,text=black},
	]

	\node[vNode,label={[label distance=-0.2cm]90:1/\textcolor{gray}{7}},text opacity=0] (v1) at (0, 0) {$v$};
	\node[vNode,label={[label distance=-0.3cm]-90:1/\textcolor{gray}{7}},text opacity=0] (v2) at (0, -3.5) {$v$};
	\node[vNode,label={[label distance=-0.2cm]90:2/\textcolor{gray}{6}},text opacity=0] (v3) at (2, 0) {$v$};
	\node[vNode,label={[label distance=-0.2cm]-90:2/\textcolor{gray}{6}},text opacity=0] (v4) at (2, -2) {$v$};
	\node[vNode,label={[label distance=-0.2cm]90:\textcolor{gray}{5}/3},ultra thick,fill,opacity=0.08,text opacity=1,draw opacity=1] (v5) at (4.5, -1) {$\boldsymbol{v}$};
	\node[vNode,label={[label distance=-0.2cm]-90:2/\textcolor{gray}{6}},text opacity=0] (v6) at (4, -4) {$v$};
	\node[vNode,label={[label distance=-0.2cm]90:\textcolor{gray}{7}/1},text opacity=1] (v7) at (7, -2) {$u$};
	\node[vNode,label={[label distance=-0.2cm]90:\textcolor{gray}{8}/0},text opacity=0,forbidden sign] (v8) at (9, -2) {$v$};
	
	\path[->,color=darkgray] 
	(v1) edge (v3)
	(v2) edge (v4)
	(v3) edge (v5)
	(v4) edge (v5)
	(v5) edge (v7)
	(v7) edge (v8)
	(v2) edge (v6)
	(v6) edge (v7)
	;

\end{tikzpicture}
   }
   \caption{The highlighted vertex $\vertex{v}$ is the only one to have its number of ancestors in $[\frac{n}{3},\frac{2n+1}{3}]$, where $n=8$. }
   \label{fig:good_score_vertex}
\end{figure}

\begin{proof}\hspace{-.09cm}\footnote{The authors wish to thank the referee who suggested this more condensed proof.}
   If $n = 1$, the only vertex $\vertex{v}$ of the DAG has $\ancestors v = 1$, which satisfies the bound.
   Then, if $n \geq 2$, let $\vertex{u}$ be a vertex such that $\ancestors u \geq (2n+2)/3$, chosen so that $\ancestors u$ is as small as possible.
   Since $\ancestors u \geq 2$, the vertex $\vertex{u}$ has one or two parents.
   Let $\vertex{v}$ be the parent of $\vertex{u}$ with the most ancestors.

   Since $\ancestors v < \ancestors u$ and, by minimality of $\ancestors u$, we have $\ancestors v \leq (2n+1)/3$.
   Furthermore, at least half of the $\ancestors{u}-1$ strict ancestors of $\vertex{u}$ must be ancestors of $\vertex{v}$.
   It follows that $\ancestors{v} \ge \frac{(\ancestors{u}-1)}{2} \ge \frac{(2n-1)}{6} > \frac{(n-1)}{3}$, i.e., that $\ancestors{v} \geq \frac{n}{3}$.
\end{proof}

This lemma is sufficient to prove the logarithmic upper bound  for the number of \gb\ queries.

\begin{proof}[Proof of Theorem~\ref{theo:bound_gb_binary_dag}]
   Let $D$ be a DAG with $n$ vertices, and $D_k$ the digraph obtained from $D$ after $k$ \gb\ queries. Let $n_k$ be the number of vertices in $D_k$.
   After each query, \gbalgo\ chooses the vertex $\vertex v$ given by Lemma~\ref{lem:good_score_vertex} or a vertex with a better score. In any case, the score of the chosen vertex in $D_k$ is greater or equal than $\dfrac{n_k - 1} 3$.  This is why
   \begin{equation}
      n_{k+1} \leq \frac{2 n_{k}+1}{3}.
      \label{eq:rec_nk}
   \end{equation}
   We can then show by induction that
   \begin{equation}
      n_k \leq 1 +     \left( \frac 2 3 \right)^k (n-1).
      \label{eq:ineq_nk}
   \end{equation}

   We distinguish two cases from here.

   \textbf{1. Case $\boldsymbol{n \leq 8}$.} We can check the small cases by repetitively using Inequality~\eqref{eq:rec_nk} and keeping the integral part of the right member of the inequality (since we work with integers), thereby obtaining an upper bound $F(n)$ on the least integer $k$ for which $n_k = 1$. For example, if $n=n_0 = 5$, we see that $n_1 \leq \frac {11} 3$, hence $n_1 \leq 3$. Then $n_2 \leq \frac 7 3$ and so $n_2 \leq 2$, and finally $n_3 \leq 1$, which means that the number of queries for a DAG of size $5$ is at most $F(5)=3$. The first values of this upper bound are listed in Table~\ref{table:Fn}.   We remark that this is consistent with the $\log_{3/2}(n)$ bound of Theorem~\ref{theo:bound_gb_binary_dag}.

   \begin{table}[ht!]
      \noindent
      \resizebox{\textwidth}{!}{
      \begin{tabular}{|c|ccccccccccccc|}
         \hline
         $n$ & $1$ & $2$ & $3$ & $4$ & $5$ & $6$ & $7$ & $8$ & $9$ & $10$ & $11$ & $12$ & $13$ \\
         \hline  
         Upper bound $F(n)$ & $0$ & $1$ & $2$ & $3$ & $3$ & $4$ & $4$ & $4$ & $5$ & $5$ & $5$ & $5$ & $6$ \\ 
         \hline
         \shortstack{ \\ Approximation \\ for $\log_{\frac{3}{2}}(n)$  } & $0$ & $1.71$ & $2.71$ & $3.42$ & $3.97$ & $4.42$ & $4.80$ & $5.13$ & $5.42$ & $5.68$ & $5.91$ & $6.12$ & $6.33$ \\ 
         \hline
         \shortstack{ \\ Approximation \\ for $\log_{\phi}(n) + 1$  } & $1$ & $2.44$ & $3.28$ & $3.88$ & $4.34$ & $4.72$ & $5.04$ & $5.32$ & $5.56$ & $5.78$ & $5.98$ & $6.16$ & $6.33$ \\
         \hline
      \end{tabular}
   }
   \caption{Checking Theorem~\ref{theo:bound_gb_binary_dag} and  Theorem~\ref{theo:bound_golden_binary_dag} for small sizes}\label{table:Fn}
   \end{table}

   \textbf{2. Case $\boldsymbol{n \geq 9}$.} Let $x$ be the largest number of queries such that $n_x \geq 9$. This means that after $x+1$ queries, the DAG will have at most $8$ vertices and by Table~\ref{table:Fn}, we see that a maximum of $4$ extra queries can be required to find the \faulty\ from this point. Therefore, the number of \gb\  queries for $D$ is at most $x+5$. 
      
   Setting $k=x$ in Inequality~\eqref{eq:ineq_nk} shows that 
   $8 \leq  \left( \frac 2 3 \right)^x (n-1)$, hence $x + \log_{\frac 3 2}(8) \leq \log_{\frac 3 2}(n-1)$. Since $\log_{3/2}\left(8\right) \simeq 5.13$, we see that the number of \gb\ queries of $D$ is indeed bounded by $x+5 \leq \log_{\frac 3 2}(n-1)$.
\end{proof}

\subsection{Tight case}

The upper bound of Theorem~\ref{theo:bound_gb_binary_dag} is asymptotically sharp, as stated by the following proposition.

\begin{proposition}
   For any integer $k \ge 1$, there exists a \emph{binary} DAG $J_k$ such that
   \begin{itemize}
      \item the number of \gb\ queries on $comb(J_k)$ is $k + \left\lceil \log_2(k+1) \right\rceil +3$;
      \item an optimal strategy for $comb(J_k)$ uses at most $\log_2(\frac 3 2 )\,k + \log_2(3k+7) + 2$ queries.
   \end{itemize}
   (Remember that the $comb$ operation is described by Definition~\ref{def:comb}.)
   \label{prop:tight_case}
\end{proposition}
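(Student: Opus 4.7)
The plan is to exhibit a family of binary DAGs $J_k$ playing the role of the octopus of Theorem~\ref{theo:pathological} inside the binary world: a graph on which \gb\ is forced into the worst case of Lemma~\ref{lem:good_score_vertex} at every step, eliminating only an $n/3$-fraction per query and thus using roughly $\log_{3/2} n$ queries in total. As in Theorem~\ref{theo:pathological}, stacking a comb on top via Theorem~\ref{theo:comb} will then inflate the \gb\ count by one while keeping the optimum logarithmic in $n$, which is what produces the multiplicative gap of $1/\log_2(3/2)$ asymptotically.

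I would build $J_k$ by an internal recursion $J_k^{(0)}, J_k^{(1)}, \ldots, J_k^{(k)}$ and set $J_k := J_k^{(k)}$. The base $J_k^{(0)}$ is a small binary DAG of size $\Theta(k)$, tuned so that \gb\ uses exactly $\lceil \log_2(k+1)\rceil + 2$ queries on it and so that $|J_k|$ is odd (the latter makes Theorem~\ref{theo:comb} contribute the cleaner ``$+1$''). For $i \ge 1$, the DAG $J_k^{(i)}$ is obtained from $J_k^{(i-1)}$ by grafting, in parallel, an auxiliary binary structure of roughly $|J_k^{(i-1)}|/2$ new vertices (essentially a directed path, up to parity adjustments), and adding a new sink whose two parents are the endpoint of that structure and the old sink of $J_k^{(i-1)}$. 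The size recurrence $|J_k^{(i)}| \le \frac{3}{2}\,|J_k^{(i-1)}| + O(1)$ integrates to $n_k := |J_k| \le (3k+7)(3/2)^k$, so by Theorem~\ref{theo:comb} the optimum on $comb(J_k)$ is at most $\lceil \log_2(2n_k)\rceil \le 2 + \log_2(3k+7) + k\log_2(3/2)$, yielding the second bullet. For the first bullet, the core is an induction on $i$ showing that \gb\ on $J_k^{(i)}$ starts by querying the old sink of $J_k^{(i-1)}$, and that the adversarial ``bugged'' answer leaves exactly $J_k^{(i-1)}$. This reduces to a score comparison: the fresh sink has score $0$; every vertex on the auxiliary chain has score at most its position $\lceil|J_k^{(i-1)}|/2\rceil$; and every vertex strictly inside $J_k^{(i-1)}$ (including the older sinks at levels $i-2, i-3, \ldots$) must be verified to have strictly smaller score than the old sink, which has score $\lceil|J_k^{(i-1)}|/2\rceil + 1$. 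Unrolling $k$ times and appending the $\lceil \log_2(k+1) \rceil + 2$ queries of the base case gives $k + \lceil \log_2(k+1) \rceil + 2$ queries on $J_k$, and the comb of Theorem~\ref{theo:comb} adds the final $+1$ for the first bullet.

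The main obstacle is precisely the inductive score comparison: naive constructions allow older sinks to have score comparable to, or even exceeding, the old sink's, which would short-circuit the recursion by letting \gb\ strip several levels in one query. The fix is to pad each auxiliary chain just enough so that the old sink's score strictly dominates all older sinks', while still keeping the size recurrence close enough to $3/2$-growth to match the asymptotic ratio $1/\log_2(3/2)$ rather than the weaker $1/\log_2(\phi)$ one would otherwise get. A secondary nuisance is designing the base case $J_k^{(0)}$ of size $\Theta(k)$ on which \gb\ uses exactly $\lceil \log_2(k+1) \rceil + 2$ queries, and controlling the parity of $n_k$ along the recursion so that Theorem~\ref{theo:comb} applies as stated; these are routine once the skeleton is fixed.
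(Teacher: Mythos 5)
Your overall plan (build $J_k$ level by level with size growth $3/2$, force \gb\ to peel one level per query, then apply Theorem~\ref{theo:comb}) is the right one, and your base case and the final accounting match the paper. But the construction you propose for the levels does not work, and the obstacle you flag as "the main obstacle" is not fixable within your architecture. In your design, level $i$ adds one disjoint auxiliary chain plus a new sink whose two parents are the chain's endpoint and the old sink $s_{i-1}$. For \gb\ to query $s_{i-1}$, its score $n_i-n_{i-1}$ must dominate the score of the older sink $s_{i-2}$, which is $\min(n_{i-2},\,n_i-n_{i-2}) = n_{i-2}$; this forces $n_i \ge n_{i-1}+n_{i-2}$, i.e.\ a Fibonacci-type recurrence with growth factor at least $\phi$. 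No amount of padding escapes this: your structure degenerates into (a decoration of) the Fibonacci trees of Section~\ref{sec:golden}, each query removes at least a $1/\phi^2\approx 38\%$ fraction rather than $1/3$, and the resulting ratio is only $1/\log_2(\phi)\approx 1.44$, not the claimed $1/\log_2(3/2)\approx 1.71$.

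The paper's construction resolves exactly this tension with a genuinely different, three-armed gadget. The backbone merges three paths at a centre $\vertex c$, and level $d$ grafts \emph{three} new paths of length $\ell_d\approx n_{d-1}/6$ (one on each arm), cross-wired so that a single new vertex $\vertex{z'_d}$ accumulates all $3\ell_d$ new vertices as ancestors while $\vertex c$ is not among them. The symmetry keeps the three persistent candidates $\vertex{x_k}$, $\vertex{y_k}$, $\vertex c$ pinned at score exactly $(n_d-1)/3$ at every level simultaneously, so the fresh vertex $\vertex{z'_d}$ (score $3\ell_d$, just above $n_d/3$) is the unique maximum; querying it when clean strips precisely the three new paths and returns $J_k^{d-1}$. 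It is this three-way balance — impossible with a single grafted chain and a two-parent sink — that achieves the $3/2$ growth and hence the tight ratio. Without it, your argument proves a true but strictly weaker statement.
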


Figure \ref{fig:J33} shows what $J_k$ looks like for $k=3$. For this example, \gb\ uses 7 queries in the worst-case scenario (which occurs for example when $\vertex c$ is bugged).

\begin{proof}
   We first describe a family $J_k$ of graphs that fulfil the properties of
   Proposition~\ref{prop:tight_case}.

   We start by defining $J_k^0$, the \emph{backbone} of $J_k$. It is formed by taking three directed paths on $k+1$ vertices $\vertex{x_1} \rightarrow \vertex{x_2} \rightarrow \dots \rightarrow \vertex{x_{k+1}}$, $\vertex{y_1} \rightarrow  \dots \rightarrow \vertex{y_{k+1}}$ and $\vertex{z_0} \rightarrow \dots \rightarrow \vertex{z_{k}}$ and merging the three vertices $\vertex{x_{k+1}}$, $\vertex{y_{k+1}}$ and $\vertex{z_0}$ into a vertex $\vertex c$ (see Figure~\ref{fig:J30} for an example with $k=3$).

   We construct our final graph $J_k$ from its backbone through $k+1$ successive digraphs: $J_k^0, J_k^1, \dots, J_k^k$. For each $d$ starting from $1$ to $k$, let us define
   \[\ell_d = \left\lfloor \frac{n_{d-1}} 6 + 1 \right\rfloor \]
   where  $n_{d-1}$ stands for the number of vertices in $J_k^{d-1}$.
   Add a directed path on $\ell_d$ vertices towards each backbone vertex at distance $d$ from $\vertex c$, namely $\vertex{x_{k+1-d}}$, $\vertex{y_{k+1-d}}$, and $\vertex{z_{d}}$. Then, add edges from the new parents of $\vertex{x_{k+1-d}}$ and of $\vertex{y_{k+1-d}}$ to the first vertex of the path newly attached to $\vertex{z_d}$. Also, the new parent of $\vertex{z_d}$ is denoted by $\vertex{z'_d}$. The reader can refer to Figure~\ref{fig:generic_binary_dag_J} for a better understanding of the notation.

   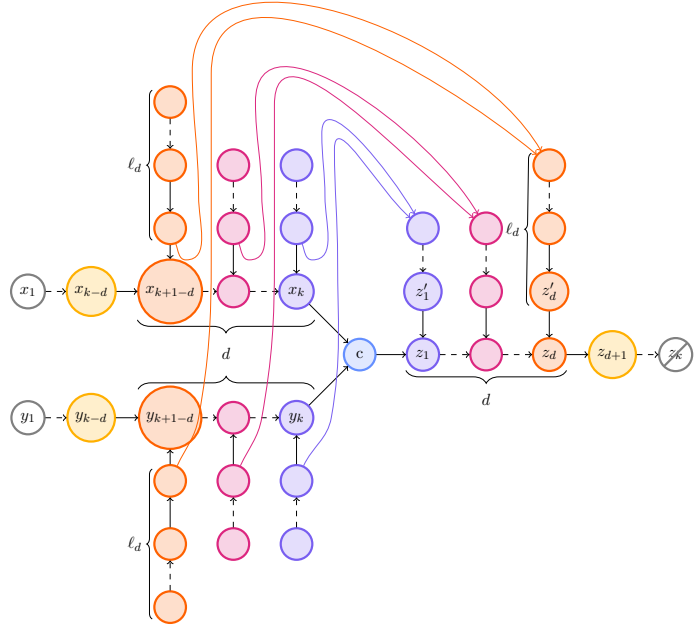
\begin{figure}[ht]
      \centering
      \scalebox{0.7}{
         \begin{tikzpicture}
	[
		scale=.6,
		every node/.style={
			circle,
			draw,
			scale=0.8,
			text=black,
			line width=1.2pt,
			opacity=.2,
			text opacity=1,
			draw opacity=1
		},
		d0Node/.style={fill=ibm_blue,color=ibm_blue,text=black,line width=1.2pt},
		d1Node/.style={fill=ibm_purple,color=ibm_purple,text=black,line width=1.2pt},
		d2Node/.style={fill=ibm_magenta,color=ibm_magenta,text=black,line width=1.2pt},
		d3Node/.style={fill=ibm_orange,color=ibm_orange,text=black,line width=1.2pt},
		d4Node/.style={fill=ibm_yellow,color=ibm_yellow,text=black,line width=1.2pt},
		minimum size=0.75cm
	]

	\node[color=gray,text=black] (xe) at (-10.5, 2) {$x_1$};
	\node[d4Node] (x0) at (-8.5, 2) {$x_{k-d}$};
	\node[d3Node] (x1) at (-6, 2) {$x_{k+1-d}$};
	\node[d2Node] (x2) at (-4, 2) {};
	\node[d1Node,text opacity=1] (x3) at (-2, 2) {$x_k$};

	\node[color=gray,text=black] (ye) at (-10.5, -2) {$y_1$};
	\node[d4Node] (y0) at (-8.5, -2) {$y_{k-d}$};
	\node[d3Node] (y1) at (-6, -2) {$y_{k+1-d}$};
	\node[d2Node] (y2) at (-4, -2) {};
	\node[d1Node,text opacity=1] (y3) at (-2, -2) {$y_k$};

	\node[d0Node,text opacity=1] (c) at (0, 0) {c};

	\node[d1Node] (z1) at (2, 0) {$z_1$};
	\node[d2Node] (z2) at (4, 0) {};
	\node[d3Node,text opacity=1] (z3) at (6, 0) {$z_d$};
	\node[d4Node] (z4) at (8, 0) {$z_{d+1}$};
	\node[color=gray,forbidden sign, text=black] (ze) at (10, 0) {$z_k$};

	\node[d1Node] (x31) at (-2, 4) {};
	\node[d1Node] (x32) at (-2, 6) {};

	\node[d1Node] (y31) at (-2, -4) {};
	\node[d1Node] (y32) at (-2, -6) {};

	\node[d1Node] (z11) at (2, 2) {$z'_1$};
	\node[d1Node] (z12) at (2, 4) {};

	\node[d2Node] (x21) at (-4, 4) {};
	\node[d2Node] (x22) at (-4, 6) {};

	\node[d2Node] (y21) at (-4, -4) {};
	\node[d2Node] (y22) at (-4, -6) {};

	\node[d2Node] (z21) at (4, 2) {};
	\node[d2Node] (z22) at (4, 4) {};

	\node[d3Node] (x11) at (-6, 4) {};
	\node[d3Node] (x12) at (-6, 6) {};
	\node[d3Node] (x13) at (-6, 8) {};

	\node[d3Node] (y11) at (-6, -4) {};
	\node[d3Node] (y12) at (-6, -6) {};
	\node[d3Node] (y13) at (-6, -8) {};

	\node[d3Node,text opacity=1] (z31) at (6, 2) {$z'_d$};
	\node[d3Node] (z32) at (6, 4) {};
	\node[d3Node] (z33) at (6, 6) {};

	\path[->,color=black] 
		(x1) edge[dashed] (x2)
		(x2) edge[dashed] (x3)
		(y1) edge[dashed] (y2)
		(y2) edge[dashed] (y3)
		(x3) edge (c)
		(y3) edge (c)
		(c) edge (z1)
		(z1) edge[dashed] (z2)
		(z2) edge[dashed] (z3)
		
		(xe) edge[dashed] (x0)
		(x0) edge (x1)
		(ye) edge[dashed] (y0)
		(y0) edge (y1)
		(z3) edge (z4)
		(z4) edge[dashed] (ze)

		(x31) edge (x3)
		(x32) edge[dashed] (x31)
		(y31) edge (y3)
		(y32) edge[dashed] (y31)
		(z11) edge (z1)
		(z12) edge[dashed] (z11)

		(x21) edge (x2)
		(x22) edge[dashed] (x21)
		(y21) edge (y2)
		(y22) edge[dashed] (y21)
		(z21) edge (z2)
		(z22) edge[dashed] (z21)

		(x11) edge (x1)
		(x12) edge (x11)
		(x13) edge[dashed] (x12)
		(y11) edge (y1)
		(y12) edge (y11)
		(y13) edge[dashed] (y12)
		(z31) edge (z3)
		(z32) edge (z31)
		(z33) edge[dashed] (z32)
	;

	\draw [decorate, decoration={brace,mirror,amplitude=7}] 
    ([yshift=-9mm]x1.west) --node[below=4mm, draw=none, text opacity=1]{$d$} ([yshift=-9mm]x3.east);

	\draw [decorate, decoration={brace,amplitude=7}] 
    ([yshift=9mm]y1.west) --node[above=3mm, draw=none, text opacity=1]{} ([yshift=9mm]y3.east);

	\draw [decorate, decoration={brace,mirror,amplitude=7}] 
    ([yshift=-5mm]z1.west) --node[below=3mm, draw=none, text opacity=1]{$d$} ([yshift=-5mm]z3.east);

	\draw[decorate,decoration={brace,mirror,amplitude=3pt,raise=5pt}]
    ([xshift=1mm]y11.north west) -- ([xshift=1mm]y13.south west) node [draw=none,midway,xshift=-0.2cm,left, text opacity=1] {$\ell_d$};
	
	\draw[decorate,decoration={brace,mirror,amplitude=3pt,raise=5pt}]
    ([xshift=1mm]x13.north west) -- ([xshift=1mm]x11.south west) node [draw=none,midway,xshift=-0.2cm,left, text opacity=1] {$\ell_d$};
	
	\draw[decorate,decoration={brace,mirror,amplitude=3pt,raise=5pt}]
    ([xshift=1mm]z33.north west) -- ([xshift=1mm]z31.south west) node [draw=none,midway,xshift=-0.2cm,left, text opacity=1] {$\ell_d$};


	\draw[->, color=ibm_purple] (x31) to[out=290,in=180] (-1.5,3) 
										to[out=0,in=270] (-1.1,5) 
										to[out=90,in=-90] (-1.2,7)
										to[out=90,in=120] (z12);
	\draw[->, color=ibm_purple] (y31) to[out=65,in=-100] (-0.9,-2) 
										to[out=80,in=-90] (-0.7,3)
										to[out=90,in=-90] (-0.8,6.5)
										to[out=90,in=140] (z12);
	

	\draw[->, color=ibm_magenta] (x21) to[out=290,in=180] (-3.5,3) 
										to[out=0,in=270] (-3.1,5) 
										to[out=90,in=-90] (-3.3,8)
										to[out=90,in=120] (z22);
	\draw[->, color=ibm_magenta] (y21) to[out=65,in=-100] (-3.1,-2) 
										to[out=80,in=-90] (-2.7,3)
										to[out=90,in=-90] (-2.7,7.9)
										to[out=90,in=140] (z22);

	\draw[->, color=ibm_orange] (x11) to[out=290,in=180] (-5.5,3) 
										to[out=0,in=270] (-5.1,5) 
										to[out=90,in=-90] (-5.3,9)
										to[out=90,in=120] (z33);

	\draw[->, color=ibm_orange] (y11) to[out=65,in=-100] (-5.1,-2) 
										to[out=80,in=-90] (-4.7,3)
										to[out=90,in=-90] (-4.7,8.9)
										to[out=90,in=140] (z33);

\end{tikzpicture}
      }
      \caption{the $d$-th step in the construction of $J_k$.}
      \label{fig:generic_binary_dag_J}
   \end{figure}

   We wish the number of vertices in the final graph $J_k$ to be odd in order to use Theorem~\ref{theo:comb}. If $J_k^k$ has an odd number of vertices, then we keep the digraph as such. If this number turns to be even, we just replace $\ell_k$ by $\ell_k + 1$ in the last step, which increases the number of vertices by $3$, and so makes it odd. The resulting digraph is denoted by $J_k$ .

   \begin{figure}
      \centering
      \begin{subfigure}[t]{.5\textwidth}
         \scalebox{0.65}{
            \begin{tikzpicture}
	[
		scale=.6,
		every node/.style={
			circle,
			draw,
			scale=0.8,
			text=black,
			line width=1.2pt,
			opacity=.2,
			text opacity=1,
			draw opacity=1
		},
		d0Node/.style={fill=ibm_blue,color=ibm_blue,text=black,line width=1.2pt},
		d1Node/.style={fill=ibm_purple,color=ibm_purple,text=black,line width=1.2pt},
		d2Node/.style={fill=ibm_magenta,color=ibm_magenta,text=black,line width=1.2pt},
		d3Node/.style={fill=ibm_orange,color=ibm_orange,text=black,line width=1.2pt},
		minimum size=0.75cm
	]

	\node[d3Node] (x1) at (-6, 2) {$x_1$};
	\node[d2Node] (x2) at (-4, 2) {$x_2$};
	\node[d1Node] (x3) at (-2, 2) {$x_3$};
	\node[d3Node] (y1) at (-6, -2) {$y_1$};
	\node[d2Node] (y2) at (-4, -2) {$y_2$};
	\node[d1Node] (y3) at (-2, -2) {$y_3$};
	\node[d0Node] (c) at (0, 0) {$c$};
	\node[d1Node] (z1) at (2, 0) {$z_1$};
	\node[d2Node] (z2) at (4, 0) {$z_2$};
	\node[d3Node,forbidden sign] (z3) at (6, 0) {$z_3$};

	\node[draw=none] (centering) at (-2, -6) {};
	
	\path[->,color=black]
		(x1) edge (x2)
		(x2) edge (x3)
		(y1) edge (y2)
		(y2) edge (y3)
		(x3) edge (c)
		(y3) edge (c)
		(c) edge (z1)
		(z1) edge (z2)
		(z2) edge (z3)
	;

\end{tikzpicture}
         }
         \caption{Step 0: $J_3^0$}
         \label{fig:J30}
      \end{subfigure}%
      \begin{subfigure}[t]{.5\textwidth}
         \scalebox{0.65}{
            \begin{tikzpicture}
	[
		scale=.6,
		every node/.style={
			circle,
			draw,
			scale=0.8,
			text=black,
			line width=1.2pt,
			opacity=.2,
			text opacity=0,
			draw opacity=1
		},
		d0Node/.style={fill=ibm_blue,color=ibm_blue,text=black,line width=1.2pt},
		d1Node/.style={fill=ibm_purple,color=ibm_purple,text=black,line width=1.2pt},
		d2Node/.style={fill=ibm_magenta,color=ibm_magenta,text=black,line width=1.2pt},
		d3Node/.style={fill=ibm_orange,color=ibm_orange,text=black,line width=1.2pt},
		minimum size=0.75cm,
	]

	\node[d3Node,text opacity=1] (x1) at (-6, 2) {$x_1$};
	\node[d2Node,text opacity=1] (x2) at (-4, 2) {$x_2$};
	\node[d1Node,text opacity=1] (x3) at (-2, 2) {$x_3$};
	\node[d3Node,text opacity=1] (y1) at (-6, -2) {$y_1$};
	\node[d2Node,text opacity=1] (y2) at (-4, -2) {$y_2$};
	\node[d1Node,text opacity=1] (y3) at (-2, -2) {$y_3$};
	\node[d0Node,text opacity=1] (c) at (0, 0) {$c$};
	\node[d1Node,text opacity=1] (z1) at (2, 0) {$z_1$};
	\node[d2Node,text opacity=1] (z2) at (4, 0) {$z_2$};
	\node[d3Node,text opacity=1,forbidden sign] (z3) at (6, 0) {$z_3$};

	\node[d1Node] (x31) at (-2, 4) {};
	\node[d1Node] (x32) at (-2, 6) {};

	\node[d1Node] (y31) at (-2, -4) {};
	\node[d1Node] (y32) at (-2, -6) {};

	\node[d1Node,text opacity=1] (z11) at (2, 2) {$z'_1$};
	\node[d1Node] (z12) at (2, 4) {};

	\path[->,color=black] 
		(x1) edge (x2)
		(x2) edge (x3)
		(y1) edge (y2)
		(y2) edge (y3)
		(x3) edge (c)
		(y3) edge (c)
		(c) edge (z1)
		(z1) edge (z2)
		(z2) edge (z3)

		(x31) edge (x3)
		(x32) edge (x31)
		(y31) edge (y3)
		(y32) edge (y31)
		(z11) edge (z1)
		(z12) edge (z11)
	;

	\draw[->,color=ibm_purple] (x31) .. controls (-1,0) and (1.,8)  .. (z12);
	\draw[->,color=ibm_purple] (y31) to[out=80,in=-90] (1,0) to[out=90,in=120] (z12);

\end{tikzpicture}
         }
         \caption{Step 1: $J_3^1$}
         \label{fig:J31}
      \end{subfigure}
      \begin{subfigure}[t]{.5\textwidth}
         \scalebox{0.65}{
            \begin{tikzpicture}
	[
		scale=.6,
		every node/.style={
			circle,
			draw,
			scale=0.8,
			text=black,
			line width=1.2pt,
			opacity=.2,
			text opacity=0,
			draw opacity=1
		},
		d0Node/.style={fill=ibm_blue,color=ibm_blue,text=black,line width=1.2pt},
		d1Node/.style={fill=ibm_purple,color=ibm_purple,text=black,line width=1.2pt},
		d2Node/.style={fill=ibm_magenta,color=ibm_magenta,text=black,line width=1.2pt},
		d3Node/.style={fill=ibm_orange,color=ibm_orange,text=black,line width=1.2pt},
		minimum size=0.75cm,
	]

	\node[d3Node,text opacity=1] (x1) at (-6, 2) {$x_1$};
	\node[d2Node,text opacity=1] (x2) at (-4, 2) {$x_2$};
	\node[d1Node,text opacity=1] (x3) at (-2, 2) {$x_3$};
	\node[d3Node,text opacity=1] (y1) at (-6, -2) {$y_1$};
	\node[d2Node,text opacity=1] (y2) at (-4, -2) {$y_2$};
	\node[d1Node,text opacity=1] (y3) at (-2, -2) {$y_3$};
	\node[d0Node,text opacity=1] (c) at (0, 0) {$c$};
	\node[d1Node,text opacity=1] (z1) at (2, 0) {$z_1$};
	\node[d2Node,text opacity=1] (z2) at (4, 0) {$z_2$};
	\node[d3Node,text opacity=1,forbidden sign] (z3) at (6, 0) {$z_3$};

	\node[d1Node] (x31) at (-2, 4) {};
	\node[d1Node] (x32) at (-2, 6) {};

	\node[d1Node] (y31) at (-2, -4) {};
	\node[d1Node] (y32) at (-2, -6) {};

	\node[d1Node,text opacity=1] (z11) at (2, 2) {$z'_1$};
	\node[d1Node] (z12) at (2, 4) {};

	\node[d2Node] (x21) at (-4, 4) {};
	\node[d2Node] (x22) at (-4, 6) {};
	\node[d2Node] (x23) at (-4, 8) {};

	\node[d2Node] (y21) at (-4, -4) {};
	\node[d2Node] (y22) at (-4, -6) {};
	\node[d2Node] (y23) at (-4, -8) {};

	\node[d2Node,text opacity=1] (z21) at (4, -2) {$z'_2$};
	\node[d2Node] (z22) at (4, -4) {};
	\node[d2Node] (z23) at (4, -6) {};

	\path[->,color=black] 
		(x1) edge (x2)
		(x2) edge (x3)
		(y1) edge (y2)
		(y2) edge (y3)
		(x3) edge (c)
		(y3) edge (c)
		(c) edge (z1)
		(z1) edge (z2)
		(z2) edge (z3)

		(x31) edge (x3)
		(x32) edge (x31)
		(y31) edge (y3)
		(y32) edge (y31)
		(z11) edge (z1)
		(z12) edge (z11)

		(x21) edge (x2)
		(x22) edge (x21)
		(x23) edge (x22)
		(y21) edge (y2)
		(y22) edge (y21)
		(y23) edge (y22)
		(z21) edge (z2)
		(z22) edge (z21)
		(z23) edge (z22)
	;

	\draw[->,color=ibm_purple] (x31) .. controls (-1,0) and (1.,8)  .. (z12);
	\draw[->,color=ibm_purple] (y31) to[out=80,in=-90] (1,0) to[out=90,in=120] (z12);
	
	\draw[->,color=ibm_magenta] (y21) .. controls (-2.5,-1) and (-5.5,-13) .. (z23);
	\draw[->, color=ibm_magenta] (x21) to[out=290,in=90] (-3,1) 
										to[out=-90,in=95] (-3,-4)
										to[out=-90,in=180] (-1,-7.9)
										to[out=0,in=200] (z23);

\end{tikzpicture}
         }
         \caption{Step 2: $J_3^2$}
         \label{fig:J32}
      \end{subfigure}%
      \begin{subfigure}[t]{.5\textwidth}
         \scalebox{0.65}{
            \begin{tikzpicture}
	[
		scale=.6,
		every node/.style={
			circle,
			draw,
			scale=0.8,
			text=black,
			line width=1.2pt,
			opacity=.2,
			text opacity=1,
			draw opacity=1
		},
		d0Node/.style={fill=ibm_blue,color=ibm_blue,text=black,line width=1.2pt},
		d1Node/.style={fill=ibm_purple,color=ibm_purple,text=black,line width=1.2pt},
		d2Node/.style={fill=ibm_magenta,color=ibm_magenta,text=black,line width=1.2pt},
		d3Node/.style={fill=ibm_orange,color=ibm_orange,text=black,line width=1.2pt},
		minimum size=0.75cm
	]

	\node[d3Node,text opacity=1,label={[label distance=-0.3cm]-90:6/\textcolor{gray}{34}}] (x1) at (-6, 2) {$x_1$};
	\node[d2Node,text opacity=1,label={[label distance=-0.4cm]-90:10/\textcolor{gray}{30}}] (x2) at (-4, 2) {$x_2$};
	\node[d1Node,text opacity=1,label={[label distance=-0.4cm]-90:13/\textcolor{gray}{27}}] (x3) at (-2, 2) {$x_3$};
	\node[d3Node,text opacity=1,label={[label distance=-0.3cm]90:6/\textcolor{gray}{34}}] (y1) at (-6, -2) {$y_1$};
	\node[d2Node,text opacity=1,label={[label distance=-0.4cm]90:10/\textcolor{gray}{30}}] (y2) at (-4, -2) {$y_2$};
	\node[d1Node,text opacity=1,label={[label distance=-0.4cm]90:13/\textcolor{gray}{27}}] (y3) at (-2, -2) {$y_3$};
	\node[d0Node,text opacity=1,label={[label distance=-0.5cm]90:\textcolor{gray}{27}/13}] (c) at (0, 0) {$c$};
	\node[d1Node,text opacity=1] (z1) at (2, 0) {$z_1$};
	\node[d2Node,text opacity=1] (z2) at (4, 0) {$z_2$};
	\node[d3Node,text opacity=1,forbidden sign] (z3) at (6, 0) {$z_3$};

	\node[d1Node] (x31) at (-2, 4) {};
	\node[d1Node] (x32) at (-2, 6) {};

	\node[d1Node] (y31) at (-2, -4) {};
	\node[d1Node] (y32) at (-2, -6) {};

	\node[d1Node,text opacity=1,label={[label distance=-0.1cm]0:6/\textcolor{gray}{34}}] (z11) at (2, 2) {$z'_1$};
	\node[d1Node] (z12) at (2, 4) {};

	\node[d2Node] (x21) at (-4, 4) {};
	\node[d2Node] (x22) at (-4, 6) {};
	\node[d2Node] (x23) at (-4, 8) {};

	\node[d2Node] (y21) at (-4, -4) {};
	\node[d2Node] (y22) at (-4, -6) {};
	\node[d2Node] (y23) at (-4, -8) {};

	\node[d2Node,text opacity=1,label={[label distance=-0.1cm]0:9/\textcolor{gray}{31}}] (z21) at (4, -2) {$z'_2$};
	\node[d2Node] (z22) at (4, -4) {};
	\node[d2Node] (z23) at (4, -6) {};

	\node[d3Node] (x11) at (-6, 4) {};
	\node[d3Node] (x12) at (-6, 6) {};
	\node[d3Node] (x13) at (-6, 8) {};
	\node[d3Node] (x14) at (-6, 10) {};
	\node[d3Node] (x15) at (-6, 12) {};

	\node[d3Node] (y11) at (-6, -4) {};
	\node[d3Node] (y12) at (-6, -6) {};
	\node[d3Node] (y13) at (-6, -8) {};
	\node[d3Node] (y14) at (-6, -10) {};
	\node[d3Node] (y15) at (-6, -12) {};

	\node[d3Node,text opacity=1,label={[label distance=-0.1cm]180:15/\textcolor{gray}{25}}] (z31) at (6, 2) {$z'_3$};
	\node[d3Node] (z32) at (6, 4) {};
	\node[d3Node] (z33) at (6, 6) {};
	\node[d3Node] (z34) at (6, 8) {};
	\node[d3Node] (z35) at (6, 10) {};

	\path[->,color=black] 
		(x1) edge (x2)
		(x2) edge (x3)
		(y1) edge (y2)
		(y2) edge (y3)
		(x3) edge (c)
		(y3) edge (c)
		(c) edge (z1)
		(z1) edge (z2)
		(z2) edge (z3)

		(x31) edge (x3)
		(x32) edge (x31)
		(y31) edge (y3)
		(y32) edge (y31)
		(z11) edge (z1)
		(z12) edge (z11)

		(x21) edge (x2)
		(x22) edge (x21)
		(x23) edge (x22)
		(y21) edge (y2)
		(y22) edge (y21)
		(y23) edge (y22)
		(z21) edge (z2)
		(z22) edge (z21)
		(z23) edge (z22)

		(x11) edge (x1)
		(x12) edge (x11)
		(x13) edge (x12)
		(x14) edge (x13)
		(x15) edge (x14)
		(y11) edge (y1)
		(y12) edge (y11)
		(y13) edge (y12)
		(y14) edge (y13)
		(y15) edge (y14)
		(z31) edge (z3)
		(z32) edge (z31)
		(z33) edge (z32)
		(z34) edge (z33)
		(z35) edge (z34)
	;

	\draw[->,color=ibm_purple] (x31) .. controls (-1,0) and (1.,8)  .. (z12);
	\draw[->,color=ibm_purple] (y31) to[out=80,in=-90] (1,0) to[out=90,in=120] (z12);
	
	\draw[->,color=ibm_magenta] (y21) .. controls (-2.5,-1) and (-5.5,-13) .. (z23);
	\draw[->, color=ibm_magenta] (x21) to[out=290,in=90] (-3,1) 
										to[out=-90,in=95] (-3,-4)
										to[out=-90,in=180] (-1,-7.9)
										to[out=0,in=200] (z23);

	\draw[->, color=ibm_orange] (x11) to[out=290,in=180] (-5.5,3) 
										to[out=0,in=270] (-5.1,5) 
										to[out=90,in=-90] (-5.3,10)
										to[out=90,in=120] (z35);

	\draw[->, color=ibm_orange] (y11) to[out=65,in=-100] (-5.1,-2) 
										to[out=80,in=-90] (-4.7,3)
										to[out=90,in=-90] (-4.7,9.9)
										to[out=90,in=140] (z35);

\end{tikzpicture}
         }
         \caption{Step 3: $J_3$}
         \label{fig:J33}
      \end{subfigure}
      \caption{Construction of $J_3$.}
      \label{fig:construction_of_J3}
   \end{figure}
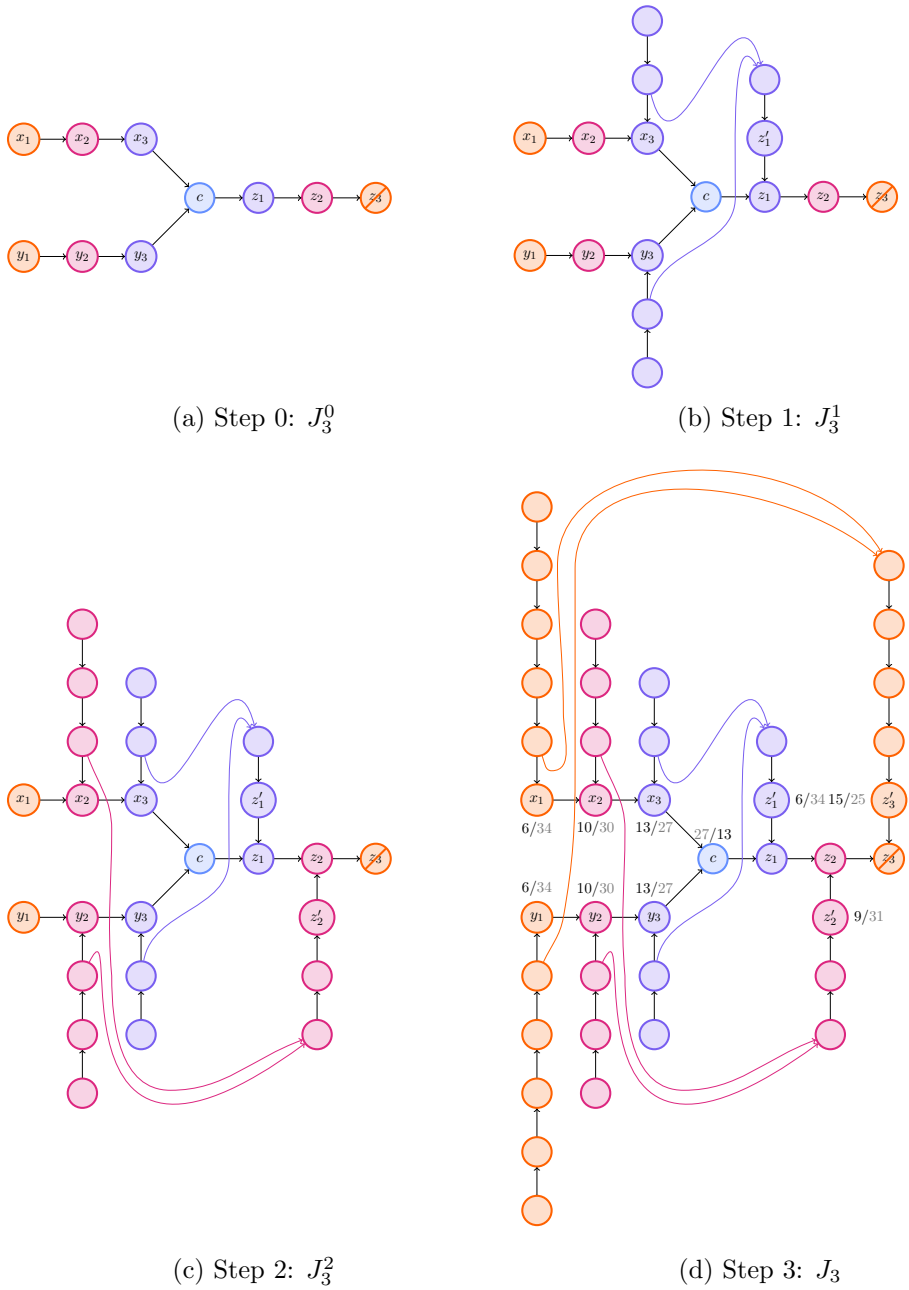

   The construction for $k=3$ is shown in Figure~\ref{fig:construction_of_J3}.

   \noindent\textbf{Number of vertices in the final digraph.}
   For each $d > 1$, the number of vertices $n_d$ satisfies the inequality
   \[n_d = n_{d-1} + 3 \ell_d \leq \dfrac 3 2 n_{d-1} + 3.\]
   A quick induction shows that
   \[ n_d \leq \left( \dfrac 3 2 \right)^d (3k+7) -6.\]
   Remember that, if $n_k$ is even, we have added $3$ vertices in the final digraph. In any case, the number of vertices in $J_k$ is bounded by $\left( \dfrac 3 2 \right)^d (3k+7)$.
   
   \begin{claim}
      The sequence $(\ell_d)_{d \in \{1,\dots,k\}}$ is strictly increasing.
      \label{claim:increasing}
   \end{claim}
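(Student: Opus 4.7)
The plan is to reduce the claim to the more concrete growth estimate $n_d \geq n_{d-1} + 6$ for every relevant index $d$. Once this is in hand, strict monotonicity follows immediately from the definition, since
\[
\ell_{d+1} = \left\lfloor \frac{n_d}{6} + 1\right\rfloor \geq \left\lfloor \frac{n_{d-1}}{6} + 2\right\rfloor = \ell_d + 1.
\]
So the real content of the proof lies in proving the lower bound on the per-step growth.

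The recurrence $n_d = n_{d-1} + 3\ell_d$ (three paths of length $\ell_d$ are attached at each step) shows that $n_d - n_{d-1} \geq 6$ is equivalent to $\ell_d \geq 2$. Thus I would restate the claim as: \emph{all the $\ell_d$ are at least $2$}, and prove this by induction on $d$. For the base case, the backbone $J_k^0$ consists of three paths on $k+1$ vertices sharing the common endpoint $\vertex c$, giving $n_0 = 3(k+1) - 2 = 3k+1$ and $\ell_1 = \lfloor (3k+7)/6 \rfloor$. This is at least $2$ as soon as $k \geq 2$. For the inductive step, if $\ell_d \geq 2$ then $n_d \geq n_{d-1} + 6$, and the display above yields $\ell_{d+1} \geq \ell_d + 1 \geq 3$, closing the induction and simultaneously giving the desired strict inequality.

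Finally I would note that the case $k = 1$ is vacuous, since the sequence $(\ell_d)$ then reduces to the single term $\ell_1$ (and indeed $\ell_1 = 1$ in that case), so no monotonicity statement needs to be checked. I do not expect any real obstacle in this argument: the only point requiring a bit of care is handling the small base case separately, since the estimate $\ell_1 \geq 2$ fails for $k=1$ but the claim is vacuous there; for $k \geq 2$ the induction goes through cleanly.
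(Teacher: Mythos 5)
Your proof is correct and follows essentially the same route as the paper's: both arguments boil down to combining the recurrence $n_d = n_{d-1} + 3\ell_d$ with the bound $\ell_d \geq 2$ (valid for $k \geq 2$) to force $\ell_{d+1} > \ell_d$, with the $k=1$ case dismissed as vacuous. The only cosmetic difference is that you package the bound $\ell_d \geq 2$ as an induction on $d$ and work with floor arithmetic to get $\ell_{d+1} \geq \ell_d + 1$, whereas the paper obtains $\ell_d \geq 2$ directly from $n_{d-1} \geq n_0$ and concludes via the estimate $\ell_{d+1} - \ell_d > \frac{\ell_d}{2} - 1$.
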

   
   The sequence is reduced to one element for $k=1$, so we can assume that $k \geq 2$.
   Let $ 1 \leq d < k$. By definition of $\ell_d$, we have \( \ell_{d+1} > \frac{n_d} 6 \) and \( \ell_{d} \leq \frac{n_{d-1}} 6 + 1\). So 
   \[ \ell_{d+1}-\ell_d > \frac {n_d} 6 - \frac{n_{d-1}} 6 - 1 = \frac{\ell_d} 2 - 1.\]
   But \[\ell_d > \frac{n_{d-1}} 6 \geq \frac{n_{0}} 6 = \frac{3k+1}{6} \geq 2,\]
   whenever $k\geq 2$. We conclude from the former that $\ell_{d+1} - \ell_d > 0$.

   \begin{claim}
      When $\vertex c$ is the \faulty, \gbalgo\ uses $k + \left\lceil \log_2(k+1) \right\rceil + 2$ queries on $J_k$.
      \label{claim:gb_for_Jk}
   \end{claim}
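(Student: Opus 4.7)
The plan is to trace \gb's execution on $J_k$ when $\vertex c$ is the faulty commit, splitting into a peeling phase costing exactly $k$ queries and a resolution phase on the backbone costing $\lceil \log_2(k+1) \rceil + 2$ queries, for a total of $k + \lceil \log_2(k+1) \rceil + 2$.

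For the peeling phase, I would argue by backward induction from $d=k$ down to $d=1$ that the $(k-d+1)$-th query of \gb\ is $\vertex{z'_d}$. The essential computation is that in the intermediate digraph $J_k^d$, the set of ancestors of $\vertex{z'_d}$ is exactly the union of the three length-$\ell_d$ paths added at step $d$ (the $z_d$-extension itself, together with the $x_{k+1-d}$- and $y_{k+1-d}$-extensions, which feed into the top of the $z_d$-extension through the cross-edges). Hence $\ancestors{z'_d} = 3\ell_d$ in $J_k^d$. Thanks to the choice $\ell_d = \lfloor n_{d-1}/6 + 1 \rfloor$, this quantity sits just above $n_d/3$; combined with Claim~\ref{claim:increasing} on the monotonicity of $(\ell_d)$, one verifies that no other vertex of $J_k^d$ has a larger score than $\score(\vertex{z'_d}) = 3\ell_d$. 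Because $\vertex{z'_d}$ is not a descendant of $\vertex c$, the query returns ``clean'' and removes all $3\ell_d$ ancestors, yielding exactly $J_k^{d-1}$. After $k$ such queries \gb\ is working on the backbone $J_k^0$.

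For the resolution phase, I would analyse \gb\ on $J_k^0$, which has $3k+1$ vertices: the two paths $\vertex{x_1} \to \dots \to \vertex{x_k}$ and $\vertex{y_1} \to \dots \to \vertex{y_k}$ meeting $\vertex c$ from above, and the chain $\vertex c \to \vertex{z_1} \to \dots \to \vertex{z_k}$ from below. A direct score calculation shows that the maximum score equals $k$ and is realised simultaneously by $\vertex c$, $\vertex{x_k}$, and $\vertex{y_k}$. With the labelling convention chosen so that $\vertex{x_k}$ and $\vertex{y_k}$ precede $\vertex c$ in the tie-breaking, \gb\ first queries $\vertex{x_k}$ (clean, removing the entire $x$-chain), then $\vertex{y_k}$ (clean, removing the $y$-chain); the remaining digraph is the length-$(k+1)$ chain $\vertex c, \vertex{z_1}, \dots, \vertex{z_k}$, on which \gb\ with an adversarial tie-breaking performs $\lceil \log_2(k+1) \rceil$ queries to isolate $\vertex c$ as the faulty vertex.

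The main obstacle will be the Phase~1 score argument: I must rigorously show that $3\ell_d$ strictly exceeds both $\ancestors v$ and $n_d - \ancestors v$ for every other vertex $\vertex v$ of $J_k^d$, most notably $\vertex c$, whose ancestor count grows like $2k+1 + 2(\ell_1 + \dots + \ell_d)$, and the backbone vertices $\vertex{x_i}, \vertex{y_i}, \vertex{z_i}$ together with the interior vertices of the added paths. The required inequalities can be extracted from the recurrence $n_d = n_{d-1} + 3\ell_d$ and the definition $\ell_d = \lfloor n_{d-1}/6 + 1 \rfloor$. Secondary care is needed for the parity correction $\ell_k \mapsto \ell_k + 1$ (which must be shown to preserve these inequalities) and for the tie-breaking in Phase~2 and in the binary search on the chain; since the proposition merely needs a bad instance, any labelling realising the stated count suffices.
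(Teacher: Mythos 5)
Your proposal follows essentially the same route as the paper's proof: a peeling phase in which \gb\ successively queries $\vertex{z'_k},\dots,\vertex{z'_1}$ (each clean, each with $\ancestors{z'_d}=3\ell_d$ being the unique maximum score thanks to the monotonicity of $(\ell_d)$ and the choice $\ell_d=\lfloor n_{d-1}/6+1\rfloor$), reducing $J_k^{d}$ to $J_k^{d-1}$, followed by two queries on the backbone and a binary search on a $(k+1)$-vertex path — exactly the paper's decomposition and invariants. One small correction to your stated ``main obstacle'': what must be shown is that $3\ell_d$ exceeds $\min(\ancestors{v}, n_d-\ancestors{v})$ (i.e.\ the score) for every other vertex $\vertex{v}$, not both quantities simultaneously — the latter is false, e.g.\ for $\vertex{c}$ or for any source vertex.
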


   We are going to show that the resulting digraph just after the $i$-th step of \gbalgo\,is $J^{k-i}_k$, for $i \in \{0,\dots,k\}$.
   In other words, after $k$ \gb\ queries, we end up with the backbone $J_k^0$. After we show this fact, the claim is easily proved. Indeed, two extra queries from $J_k^0$  lead to a binary search on a directed path with $k+1$ vertices, for which \gb\ uses $\lceil \log_2(k+1) \rceil$ queries to find the \faulty.
   This explains why the number of \gb\ queries is $k + 2 + \lceil \log_2(k+1) \rceil$.
   To do so, we prove by induction on $d$ the construction invariants in $J_k^d$:
   \begin{itemize}
      \item $\score(\vertex{x_k}) = |x_k| = \frac{n_d - 1}{3}$
      \item $\score(\vertex{y_k}) = |y_k| = \frac{n_d - 1}{3}$
      \item $\score(\vertex{c}) = n_d - |c| = \frac{n_d - 1}{3}$
      \item for all $1 \leq i \leq d$, $\score(\vertex{z'_i}) = |z'_i| = 3\ell_i$
   \end{itemize}

   All these properties clearly hold for $d=0$.

   Let us assume now the induction hypotheses for $d-1$.
   By construction, the number of ancestors of $\vertex{x_k}$ in $J_k^{d}$ increases by $\ell_d$ in comparison with its number in $J_k^{d-1}$, while its number of non-ancestors
   increases by $2\ell_d$. It is the same for $\vertex{y_k}$. As for $\vertex{c}$, its number of ancestors increases by $2\ell_d$, while its number  of non-ancestors increases by $\ell_d$. From these observations, we inductively infer the first three invariants. As for the last item, it is obvious that $|z'_i| = 3 \ell_i$ by construction. The number of non-ancestors of $\vertex{z'_i}$ is $n_d - 3 \ell_i$, which is at least $n_d - 3 \ell_d=n_{d-1}$ by Claim~\ref{claim:increasing}. For $k=1$, we have $n_0 = 4 > 3 = 3 \ell_1$.  For $k \geq 2$, we saw in the proof of Claim~\ref{claim:increasing} that $\ell_i \geq 2$, hence \[n_{d-1} \geq 6 (\ell_d - 1) \geq 6 (\ell_i - 1) \geq 3 \ell_i.\]
The score of $\vertex{z'_i}$ is thus  $|z'_i| = 3 \ell_i$.

   Now, let us suppose that the digraph just before the $i$-th step is $J_k^{m}$, where $m = k - i + 1$, and let us show that after the $i$-th step, the digraph becomes $J_k^{m-1}$.
   To do so, we have to investigate the scores of all vertices in $J_k^{m}$.
   By construction, each vertex is either an ancestor of $\vertex{x_k}$,
   an ancestor of $\vertex{y_k}$, a descendant of $\vertex{c}$,
   or an ancestor of a vertex $\vertex{z'_j}$ with $j \in \{1,\dots,m\}$.
   The vertex $\vertex{x_k}$ having fewer ancestors than non-ancestors,
   the ancestors of $\vertex{x_k}$ different from $\vertex{x_k}$ have a smaller score than $\vertex{x_k}$.
   Thus \gbalgo\ never queries an ancestor of $\vertex{x_k}$ different from $\vertex{x_k}$.
   Similarly, we can eliminate every other vertex, excepted $\vertex{x_k}$,
   $\vertex{y_k}$, $\vertex{c}$ and $\vertex{z'_j}$ with $j \in \{1,\dots,m\}$.

   We already saw that the scores of $\vertex{x_k}$, $\vertex{y_k}$ and $\vertex{c}$  are the same and bounded by $\frac {n_m} {3}$.
   As for the vertex $\vertex{z'_j}$, its score is equal to $3  \ell_j$.
   Since $\ell_j$ is strictly increasing by Claim~\ref{claim:increasing}, we can eliminate every vertex $\vertex{z'_j}$ for $j < m$.
   It remains to compute the score of $\vertex{z'_m}$. Remark that $6 \ell_d > n_{d-1}$ by the definition of $\ell_d$.
   We deduce that \[ n_m = n_{m-1} + 3 \ell_m < 9 \ell_m.\]
   But $\score(\vertex{z'_m}) = |z'_m| = 3 \ell_m$, which is bigger than $\dfrac{n_m} 3$ by the above inequality.

   So $\vertex{z'_m}$ is the only vertex with a maximal score; \gbalgo\ will query this vertex. Since $\vertex c$ is not an ancestor of $\vertex{z'_m}$, \gb\ will remove every ancestor of $\vertex{z'_m}$: we recover $J_k^{m-1}$.

   \medskip

   \noindent\textbf{Conclusion.} By Theorem~\ref{theo:comb} and Claim~\ref{claim:gb_for_Jk}, \gbalgo\ uses $k + \lceil \log_2(k+1) \rceil + 2 + 1$ queries on $Comb(J_k)$. Also by Theorem~\ref{theo:comb}, since the number of vertices in $J_k$ is odd and is bounded by $\left( \dfrac 3 2 \right)^k (3k+7)$, the optimal strategy uses at most $\left\lceil \log_2\left(2\left(\dfrac 3 2 \right)^k (3k+7)\right) \right\rceil \le \log_2\left(\frac 3 2 \right)\,k + \log_2(3k+7) + 2$ queries.
   This concludes the proof of Proposition~\ref{prop:tight_case}.
\end{proof}

By Proposition~\ref{prop:tight_case}, we cannot find a better approximation ratio than $\frac{1}{\log_2(3/2)}$ for \gb.

\begin{corollary}
    For any $\varepsilon > 0$, \gbalgo\ is not a $\left( \frac 1 {\log_2(3/2)} - \varepsilon \right)$-approximation algorithm for binary DAGs.
\end{corollary}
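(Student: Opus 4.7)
The plan is to argue by contradiction using the family $comb(J_k)$ constructed in the proof of Proposition~\ref{prop:tight_case}. Suppose \gb\ were a $C$-approximation algorithm on binary DAGs with $C = \frac{1}{\log_2(3/2)} - \varepsilon$ for some fixed $\varepsilon > 0$. Then for every binary DAG $D$ we would have $\mathrm{gb}(D) \leq C \cdot \mathrm{opt}(D)$, where $\mathrm{gb}(D)$ and $\mathrm{opt}(D)$ denote the worst-case numbers of queries used by \gb\ and by an optimal strategy on $D$ respectively.

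First I would check that $comb(J_k)$ is indeed a binary DAG. The construction of $J_k$ only attaches directed paths and performs at most pairwise merges (each new vertex added has indegree at most $2$, and each backbone vertex gains at most one new parent), so $J_k$ is binary. The comb addition gives each $\vertex{u_i}$ with $i>1$ exactly two parents ($\vertex{u_{i-1}}$ and $\vertex{v_i}$), while leaving the indegrees of the $\vertex{v_j}$ unchanged, so $comb(J_k)$ remains binary.

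Next, I would use the two bounds from Proposition~\ref{prop:tight_case} to form the ratio
\[
r_k \;=\; \frac{\mathrm{gb}(comb(J_k))}{\mathrm{opt}(comb(J_k))} \;\geq\; \frac{k + \lceil \log_2(k+1) \rceil + 3}{\log_2(3/2)\,k + \log_2(3k+7) + 2}.
\]
The main observation is that both numerator and denominator are dominated by their linear-in-$k$ terms, so $r_k \to \frac{1}{\log_2(3/2)}$ as $k \to \infty$.

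Finally, I would pick $k$ large enough that $r_k > \frac{1}{\log_2(3/2)} - \varepsilon = C$, which contradicts the assumed approximation bound applied to the binary DAG $comb(J_k)$. The only subtlety is making the asymptotic estimate rigorous—this amounts to verifying that the lower-order logarithmic corrections in the numerator and denominator vanish in the limit, which is a routine computation with no real obstacle. I do not expect any step of this argument to be delicate; the whole content is packaged in Proposition~\ref{prop:tight_case}, and the corollary is essentially just the statement that the ratio witnessed by that proposition approaches $\frac{1}{\log_2(3/2)}$.
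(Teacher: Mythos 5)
Your argument is correct and is exactly the route the paper takes: the paper derives this corollary directly from Proposition~\ref{prop:tight_case} by observing that the ratio of the two bounds on $comb(J_k)$ tends to $\frac{1}{\log_2(3/2)}$ as $k\to\infty$. Your additional checks (that $comb(J_k)$ is binary, and that the inequality direction in the ratio is the right one given that the proposition only upper-bounds the optimum) are sound and merely make explicit what the paper leaves implicit.
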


\subsection{Generalisation for \texorpdfstring{$\Delta$}{Delta}-ary DAGs}

For any $\Delta \geq 1$, a DAG is said to be \emph{$\Delta$-ary} if each of its vertices has indegree at most equal to $\Delta$. It is worth noting that the results for binary DAGs can be naturally extended to $\Delta$-ary DAGs.

Indeed, Lemma~\ref{lem:good_score_vertex}, which is of paramount importance to understand the structure of binary DAGs, can be generalised as follows.

\begin{lemma}
   In every $\Delta$-ary DAG with $n$ vertices, there exists a vertex $\vertex{v}$ such that $\ancestors v$, its number of ancestors, satisfies $\frac{n-1}{\Delta+1} < \ancestors v \le \frac{\Delta n+1}{\Delta+1}$.
\end{lemma}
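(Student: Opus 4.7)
The plan is to transpose the proof of Lemma~\ref{lem:good_score_vertex} almost verbatim, replacing the constants $3$ and $2$ by $\Delta+1$ and $\Delta$ respectively. The structural idea is the same: pick a vertex $\vertex u$ whose ancestor count sits just above the desired upper bound, then pass to the parent of $\vertex u$ with the most ancestors. The choice of the threshold $\frac{\Delta n + 1}{\Delta+1}$ is dictated by the algebraic identity $\frac{1}{\Delta}\bigl(\frac{\Delta n + 1}{\Delta+1} - 1\bigr) = \frac{n-1}{\Delta+1}$, which is exactly what we need to convert the pigeonhole lower bound on $\ancestors v$ into the desired strict inequality.

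First I would dispose of the trivial base case $n = 1$: the unique vertex has one ancestor, and clearly $0 = \frac{n-1}{\Delta+1} < 1 \le 1 = \frac{\Delta n + 1}{\Delta+1}$. For $n \geq 2$, I would let $\vertex u$ be a vertex satisfying $\ancestors u > \frac{\Delta n + 1}{\Delta+1}$, chosen so that $\ancestors u$ is minimum. Such a $\vertex u$ exists since the sink, with $n$ ancestors, lies above this threshold whenever $n \geq 2$. Because the threshold is at least $1$, we also have $\ancestors u \geq 2$, so $\vertex u$ has between $1$ and $\Delta$ parents (by the $\Delta$-ary hypothesis). I would then pick $\vertex v$ to be a parent of $\vertex u$ with the largest number of ancestors.

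From there, the upper bound $\ancestors v \leq \frac{\Delta n + 1}{\Delta+1}$ is immediate: $\vertex u$ is an ancestor of itself but not of $\vertex v$, so $\ancestors v < \ancestors u$, and minimality of $\ancestors u$ closes the argument. For the lower bound, every strict ancestor of $\vertex u$ is an ancestor of at least one parent of $\vertex u$; since there are at most $\Delta$ parents, a pigeonhole bound yields $\ancestors v \geq \frac{\ancestors u - 1}{\Delta}$, and combining this with $\ancestors u > \frac{\Delta n + 1}{\Delta+1}$ together with the identity above gives $\ancestors v > \frac{n-1}{\Delta+1}$. No step presents a genuine obstacle: the only thing to double-check is that the threshold $\frac{\Delta n + 1}{\Delta+1}$ is the correct one to pick, and this is forced precisely by the algebraic identity that makes the pigeonhole estimate land at $\frac{n-1}{\Delta+1}$.
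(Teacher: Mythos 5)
Your proof is correct and is exactly the intended generalisation: the paper gives no separate argument for the $\Delta$-ary case, presenting it as the verbatim transposition of the proof of Lemma~\ref{lem:good_score_vertex} with $2$ and $3$ replaced by $\Delta$ and $\Delta+1$, which is precisely what you carry out (minimal $\vertex u$ above the threshold, parent with the most ancestors, pigeonhole over at most $\Delta$ parents). All the constants check out, including the identity $\frac{1}{\Delta}\bigl(\frac{\Delta n+1}{\Delta+1}-1\bigr)=\frac{n-1}{\Delta+1}$, and your justification of the existence of $\vertex u$ via the sink is consistent with the paper's standing single-sink assumption.
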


This leads to the following theorem.

\begin{theorem}
   On any $\Delta$-ary DAG with $n$ vertices, the number of queries of \gbalgo\ is at most $\frac{\log_2(n)}{\log_2\left(\frac{\Delta+1}{\Delta}\right)}$.

   Consequently, \gbalgo\ is a $\frac{1}{\log_2\left(\frac{\Delta+1}{\Delta}\right)}$-approximation algorithm on $\Delta$-ary DAGs.
\end{theorem}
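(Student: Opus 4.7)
The plan is to follow the proof of Theorem~\ref{theo:bound_gb_binary_dag} essentially verbatim, with the generalised lemma above playing the role of Lemma~\ref{lem:good_score_vertex}. Let $D$ be a $\Delta$-ary DAG with $n$ vertices, let $D_k$ be the subgraph remaining after $k$ \gb\ queries, and put $n_k := |V(D_k)|$. Each $D_k$ is an induced subgraph of $D$ and is therefore still $\Delta$-ary, so the generalised lemma supplies a vertex $\vertex v$ with $\ancestors v > \frac{n_k - 1}{\Delta + 1}$ and $n_k - \ancestors v \geq \frac{n_k - 1}{\Delta + 1}$, hence $\score(\vertex v) \geq \frac{n_k - 1}{\Delta + 1}$. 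Since \gb\ queries a vertex of maximum score, at least that many vertices are eliminated, yielding the recurrence
\[ n_{k+1} \;\leq\; n_k - \frac{n_k - 1}{\Delta+1} \;=\; \frac{\Delta\, n_k + 1}{\Delta+1}. \]

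A routine induction then gives $n_k \leq 1 + \bigl(\tfrac{\Delta}{\Delta+1}\bigr)^k (n-1)$, so $n_k = 1$ as soon as $k > \log_{(\Delta+1)/\Delta}(n-1)$. This delivers the announced upper bound of $\log_2(n)/\log_2\bigl((\Delta+1)/\Delta\bigr)$ on the number of \gb\ queries. The approximation ratio then follows immediately by dividing this upper bound by the lower bound $\lceil \log_2 n\rceil$ coming from Proposition~\ref{prop:bounding_nb_queries}, and the inequality is monotone in $n$ so it applies uniformly once $n \geq 2$.

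The one delicate point, exactly as in the binary case, is the treatment of small $n$: the integer-valued recurrence introduces rounding effects that do not match the clean $\log_{(\Delta+1)/\Delta}$ bound for the first few values. This is dispatched as before by tabulating the recurrence for small $n_k$ (the analogue of Table~\ref{table:Fn}) and using those base cases as a buffer once $n_k$ drops below a fixed constant depending on $\Delta$. Beyond that bookkeeping, no new ideas are required; the real substance lies in the generalised lemma itself, and that is where the main obstacle sits. Its proof would mimic Lemma~\ref{lem:good_score_vertex}: choose $\vertex u$ with $\ancestors u$ minimal among those exceeding $\tfrac{\Delta n + \Delta}{\Delta + 1}$, pick a parent $\vertex v$ of $\vertex u$ maximising $\ancestors v$, and observe that by a pigeonhole over $\vertex u$'s at most $\Delta$ parents, $\vertex v$ inherits at least a $1/\Delta$ fraction of $\vertex u$'s strict ancestors, which combined with the minimality of $\ancestors u$ pins $\ancestors v$ in the desired interval.
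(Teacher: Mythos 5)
Your proposal matches the paper's (implicit) proof: the paper establishes this theorem precisely by substituting the generalised $\Delta$-ary lemma for Lemma~\ref{lem:good_score_vertex} and rerunning the argument of Theorem~\ref{theo:bound_gb_binary_dag} verbatim, including the recurrence $n_{k+1}\le\frac{\Delta n_k+1}{\Delta+1}$ and the small-$n$ tabulation. The only nit is in your sketch of the lemma itself: the threshold $\frac{\Delta n+\Delta}{\Delta+1}$ yields the slightly weaker upper bound $\ancestors{v}\le\frac{\Delta n+\Delta-1}{\Delta+1}$ for $\Delta\ge 3$ (versus the stated $\frac{\Delta n+1}{\Delta+1}$, which would follow from the threshold $\frac{\Delta n+2}{\Delta+1}$), but this only perturbs the additive constant and is absorbed by your base-case buffer.
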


Note that the bound above is tight.
Indeed, the previous construction of graphs $J_k^d$ can be extended by merging $\Delta+1$ paths, $\Delta$ of which are analogous to $\vertex{x_1} \to \vertex{x_2} \to \cdots \to \vertex{x_{k+1}}$.
We get that $n_d = n_{d-1} + (\Delta+1) \ell_d$, and $\vertex c$ as well as each vertex $\vertex{x_k}$ has score $\frac{n_{d-1}}{\Delta+1}$. The \gb\ algorithm selects the vertex $\vertex{z'_d}$, which has $(\Delta+1) \ell_d$ ancestors.

Choosing an appropriate value for $\ell_d$ (namely $\lfloor n_{d-1}/\Delta(\Delta+1) \rfloor + 1$),
we end up with a graph of order $(1+1/\Delta)^k(\Delta k + O(1))$.
Thus, the optimal strategy on the comb requires $k \log_2((\Delta+1)/\Delta) + o(k)$ queries.

\section{A new algorithm with a better approximation ratio for binary DAGs}
\label{sec:golden}

In this section, we describe a new algorithm improving the number of queries in the worst-case scenario compared to \gb.

\subsection{Description of \golden}

We design a new algorithm for the Regression Search Problem, named \emph{\golden}, which is a slight modification of \gb.
It is so called because it is based on the \emph{golden ratio}, defined as $\phi = \frac{1+\sqrt{5}}{2}$.

The difference of \golden\ with respect to \gb\ is that it may not query a vertex with the maximum score if the maximum score is too small.
Let us give some preliminary definitions.

\begin{definition}[Subsets $B^\ge$ and $B^<$]
   Let $D$ be a DAG. We define $V^\ge$ as the set of vertices which have more ancestors than non-ancestors. Let $B^\ge$ (for ``Best'' or ``Boundary'') denote the subset of vertices $\vertex{v}$ of $V^\ge$ such that no parent of $\vertex{v}$ belongs to $V^\ge$ and let $B^<$ be the set of parents of vertices of $B^\ge$.
\end{definition}

The reader can look at Figure~\ref{fig:sets_of_vertices} for an illustrative example.
Note that the score of a vertex $\vertex{v}$ with $\ancestors{v}$ ancestors is $n-\ancestors{v}$ if $\vertex{v} \in V^{\geq}$, or $\ancestors{v}$ if $\vertex{v} \notin V^{\geq}$.

\begin{figure}[ht]
   \centering
   \scalebox{0.8}{
      \begin{tikzpicture}
	[
		scale=.6,
		every node/.style={
			circle,
			draw,
			scale=0.8,
			text=black,
			line width=1.2pt,
			opacity=.2,
			text opacity=1,
			draw opacity=1
		},
		vPlusNode/.style={fill=ibm_orange,color=ibm_orange,text=ibm_orange,line width=1.2pt},
		bPlusNode/.style={fill=ibm_magenta,color=ibm_magenta,text=ibm_magenta,line width=1.2pt},
		bMinusNode/.style={fill=ibm_blue,color=ibm_blue,text=ibm_blue,line width=1.2pt},
		normalNode/.style={text opacity=1,opacity=0,draw opacity=1,color=darkgray,text=darkgray,line width=1.pt}
	]

	\draw[color=ibm_orange,fill=ibm_orange!20]  plot[smooth cycle, tension=.6] coordinates {(17,7) (21,5) (19,-1.6)  (13,-1.6) (11,3) (11.8,6.5)};
	\draw[color=ibm_magenta,fill=ibm_magenta!20]  plot[smooth cycle, tension=.8] coordinates {(12,3) (11,4) (11.6,5.6) (13,5) (13,3.5)};
	\draw[color=ibm_blue,fill=ibm_blue!20]  plot[smooth cycle, tension=.65] coordinates {(9,7) (11,7) (10.6,4) (11,1.7) (10,0.9) (9.2,1.3) (8.8,3)};

	\node[draw=none,color=ibm_orange] at (16,8) {$\boldsymbol{V^\ge}$};
	\node[draw=none,color=ibm_magenta] at (13,6) {$\boldsymbol{B^\ge}$};
	\node[draw=none,color=ibm_blue] at (10,8) {$\boldsymbol{B^<}$};

	\node[normalNode,label={[label distance=-0.2cm]90:1/\textcolor{gray}{20}}] (1) at (0, 6) {$1$};
	\node[normalNode,label={[label distance=-0.2cm]90:2/\textcolor{gray}{19}}] (2) at (2, 6) {$2$};
	\node[normalNode,label={[label distance=-0.2cm]90:1/\textcolor{gray}{20}}] (3) at (8,8) {$6$};
	\node[normalNode,label={[label distance=-0.2cm]90:3/\textcolor{gray}{18}}] (4) at (4,6) {$3$};
	\node[normalNode,label={[label distance=-0.2cm]90:4/\textcolor{gray}{17}}] (5) at (6,6) {$4$};
	\node[normalNode,label={[label distance=-0.2cm]90:5/\textcolor{gray}{16}}] (6) at (8,6) {$5$};
	\node[bMinusNode,label={[label distance=-0.2cm]90:7/\textcolor{gray}{14}}] (7) at (10, 6) {$7$};
	\node[normalNode,label={[label distance=-0.2cm]90:1/\textcolor{gray}{20}}] (8) at (0, 2) {$8$};
	\node[normalNode,label={[label distance=-0.2cm]90:2/\textcolor{gray}{19}}] (9) at (2,2) {$9$};
	\node[normalNode,label={[label distance=-0.2cm]90:3/\textcolor{gray}{18}}] (10) at (4, 2) {$10$};
	\node[normalNode,label={[label distance=-0.2cm]90:4/\textcolor{gray}{17}}] (11) at (6, 2) {$11$};
	\node[normalNode,label={[label distance=-0.2cm]90:5/\textcolor{gray}{16}}] (12) at (8, 2) {$12$};
	\node[bMinusNode,label={[label distance=-0.2cm]90:6/\textcolor{gray}{15}}] (13) at (10, 2) {$13$};
	\node[bPlusNode,label={[label distance=-0.2cm]90:\textcolor{gray}{14}/7}] (14) at (12, 4) {$14$};
	\node[vPlusNode,label={[label distance=-0.2cm]90:\textcolor{gray}{15}/6}] (15) at (14.5, 4) {$15$};
	\node[vPlusNode,label={[label distance=-0.2cm]90:\textcolor{gray}{16}/5}] (16) at (17, 4) {$16$};
	\node[vPlusNode,label={[label distance=-0.2cm]90:\textcolor{gray}{21}/0},forbidden sign] (17) at (19.5, 4) {$21$};
	\node[normalNode,label={[label distance=-0.3cm]-90:7/\textcolor{gray}{14}}] (18) at (7.5, -0.5) {$17$};
	\node[normalNode,label={[label distance=-0.3cm]-90:8/\textcolor{gray}{13}}] (19) at (11, -0.5) {$18$};
	\node[vPlusNode,label={[label distance=-0.3cm]-90:\textcolor{gray}{17}/4}] (20) at (14.5, -0.5) {$19$};
	\node[vPlusNode,label={[label distance=-0.3cm]-90:\textcolor{gray}{18}/3}] (21) at (18, -0.5) {$20$};

	\path[->,color=black]
		(1) edge (2)
		(2) edge (4)
		(4) edge (5)
		(5) edge (6)
		(6) edge (7)
		(3) edge (7)
		(7) edge (14)
		
		(8) edge (9)
		(9) edge (10)
		(10) edge (11)
		(11) edge (12)
		(12) edge (13)
		(13) edge (14)
		(14) edge (15)
		(15) edge (16)
		(16) edge (17)
		
		(18) edge (19)
		(19) edge (20)
		(20) edge (21)
		(21) edge (17)
		
		(5) edge (18)
		(9) edge (18)
		(14) edge (20)
	;
\end{tikzpicture}
   }
   \caption{A binary DAG with the $3$ sets of vertices $V^\ge$, $B^\ge$ and $B^<$.}
   \label{fig:sets_of_vertices}
\end{figure}
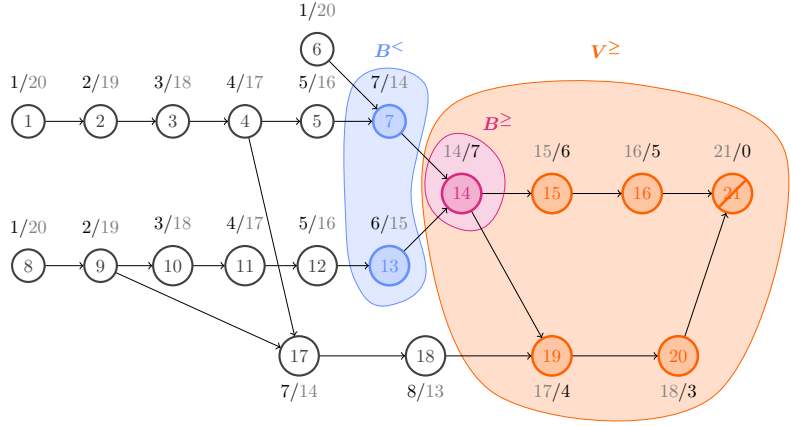

Using the sets defined above, we propose a refinement of Lemma~\ref{lem:good_score_vertex}.

\begin{lemma}
   Given any DAG with $n$ vertices, there exists a vertex $\vertex v \in B^\ge \cup B^<$ such that $\score(\vertex v) \ge \dfrac{n-1} 3$.
   \label{lem:score_B_sets}
\end{lemma}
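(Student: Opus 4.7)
The plan is to locate a ``lowest'' vertex $\vertex u$ of $V^\ge$ and then argue either about $\vertex u$ itself or about its parents. Since the sink has all $n$ vertices as ancestors, it lies in $V^\ge$, so $V^\ge$ is nonempty; starting from the sink and repeatedly stepping to a parent in $V^\ge$ whenever one exists, I reach in finitely many steps (the DAG is acyclic) a vertex $\vertex u$ all of whose parents lie outside $V^\ge$. By construction, $\vertex u \in B^\ge$, so $\ancestors{u} > n/2$ and $\score(\vertex u) = n - \ancestors{u}$.

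If $\ancestors{u} \le (2n+1)/3$, the lemma follows immediately from $\score(\vertex u) = n - \ancestors{u} \ge (n-1)/3$. Otherwise $\ancestors{u} > (2n+1)/3 \ge 2$, so $\vertex u$ has at least one parent. In the binary setting implicit to this section (an octopus shows the statement fails without a degree bound), there are at most two parents $\vertex{w_1}$ and possibly $\vertex{w_2}$, and both belong to $B^<$ by the very definition of $B^\ge$. The key inequality is that every strict ancestor of $\vertex u$ is an ancestor of one of its parents, hence
\[
\ancestors{u} \;\le\; 1 + \ancestors{w_1} + \ancestors{w_2},
\]
where I set $\ancestors{w_2} = 0$ when the second parent does not exist.

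Combining this with $\ancestors{u} > (2n+1)/3$ yields $\ancestors{w_1} + \ancestors{w_2} > (2n-2)/3$, so $\max(\ancestors{w_1}, \ancestors{w_2}) > (n-1)/3$. Since each $\vertex{w_i}$ lies in $B^< \subseteq V \setminus V^\ge$, its score equals its number of ancestors, and the parent with the larger ancestor count therefore witnesses the claim.

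The genuine obstacle is not the calculation but the choice of witness: one cannot simply invoke Lemma~\ref{lem:good_score_vertex}, because the vertex it supplies need not sit in $B^\ge \cup B^<$. Fixing $\vertex u$ to be a minimum-level element of $V^\ge$ (reachable by the backward walk) is precisely what forces its parents into $B^<$, and the averaging argument over at most two parents --- which critically uses the bounded-indegree hypothesis --- then closes the gap.
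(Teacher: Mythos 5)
Your proof is correct and takes essentially the same route as the paper: pick a vertex of $B^\ge$ and show that either it or one of its at most two parents (which lie in $B^<$) has score at least $\frac{n-1}{3}$, via the inequality $\ancestors{u} \le 1 + \ancestors{w_1} + \ancestors{w_2}$. Your version merely unifies the paper's separate one-parent, two-parent and small-$n$ cases and adds an explicit construction of a vertex of $B^\ge$; the dichotomy on $\ancestors{u}$ versus $\frac{2n+1}{3}$ and the choice of witnesses are identical.
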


\begin{proof}
   The lemma is obvious whenever $n \leq 3$ (choose a vertex with no parent).

   Let us choose any $\vertex{v}$ in $B^\ge$.
   Since the graph is binary, $\vertex{v}$ has 1 or 2 parents. Let us study both cases separately.

   \textbf{Vertex $\vertex{v}$ has only one parent $\vertex{p}$.} Thus, $\vertex{p}$ has exactly $\ancestors v-1$ ancestors  and $n - \ancestors v + 1$ non-ancestors.
   Since $\vertex{p}$ is not in $V^\ge$, $\ancestors v-1 < n- \ancestors v +1$, and thus $\ancestors v < \frac{n}{2}+1$.

   Also, since $\vertex v \in V^\ge$, $\score(\vertex v)=n-\ancestors v \geq \frac{n}{2}-1$, which satisfies the lemma whenever $n \geq 4$.

   \textbf{Vertex $\vertex{v}$ has two parents $\vertex{x}$ and $\vertex{y}$,} belonging to $B^<$, respectively having $\ancestors{x}$ and $\ancestors{y}$ ancestors.
   If any of $\vertex{x}$ or $\vertex{y}$ has $\dfrac{n-1}{3}$ or more ancestors, then the lemma holds for $\vertex{x}$ or $\vertex{y}$, since $\score(\vertex x)=\ancestors{x}$ and $\score(\vertex y)=\ancestors{y}$.

   Let us assume the contrary, that is $\ancestors{x} < \dfrac{n-1}{3}$ and $\ancestors{y} < \dfrac{n-1}{3}$.
   But aside itself, every ancestor of $\vertex{v}$ must be an ancestor of $\vertex{x}$ or $\vertex{y}$.
   Hence
   \[\ancestors{v} \le \ancestors{x}+\ancestors{y}+1 < \dfrac{n-1}{3} + \dfrac{n-1}{3}+1 = \dfrac{2n+1}{3}.\]
   Since $\vertex v$ is in $V^\geq$, $\score(\vertex v) \ge n-\dfrac{2n+1}{3} = \dfrac{n-1}{3}$. Thus $\vertex v$ satisfies the condition of the lemma.
\end{proof}

A description of \golden\ is given by Algorithm~\ref{algo:golden}.

\begin{algorithm}[ht]
   \caption{\golden}
   \textbf{Input.} A DAG $D$ and a bugged vertex $\vertex{b}$. \\
   \textbf{Output.} The \faulty\ of $D$. \\
   \textbf{Steps:}
   \begin{enumerate}
      \item Remove from $D$ all non-ancestors of $\vertex{b}$.
      \item If $D$ has only one vertex, return this vertex. \label{item:start_golden}
      \item Compute the score for each vertex of $D$.
      \textbf{\item If the maximum score is at least $\frac n {\phi^2} \approx 38.2 \% \times  n$ (where $\phi = \frac{1+\sqrt{5}}{2}$), query a vertex with the maximum score. \label{item:one_move}
      \item Otherwise, query a vertex of $B^{\geq} \cup B^<$ which has the maximum score among vertices of $B^{\geq} \cup B^<$, even though it may not be the overall maximum score. \label{item:two_moves} }
      \item If the queried vertex is bugged, remove from $D$ all non-ancestors of the queried vertex. Otherwise, remove from $D$ all ancestors of the queried vertex.
      \item Go to Step~\ref{item:start_golden}.
   \end{enumerate}
   (The differences with \gb\ are displayed in \textbf{bold}.)
   \label{algo:golden}
\end{algorithm}

Now, let us describe the behavior of the \golden\ algorithm on the example of Figure \ref{fig:sets_of_vertices}. We have $21/\phi^2 \approx 8.02$. The maximum score $8$ is smaller than this number, so we run Step \ref{item:two_moves} instead of Step~\ref{item:one_move}.
Thus as its first query, \golden\ chooses $\vertex{7}$, which belongs to $B^<$ and has score $7$.
Another possible first query is to choose $\vertex{14}$, which has the same score as $\vertex{7}$, but belongs to $B^\ge$.
In both cases, \golden\ uses 5 queries in the worst-case scenario (see Figure~\ref{fig:complete_golden_strategy_example} for a possible strategy tree whenever $\vertex{7}$ is queried). 
This diverges from \gb, which would pick $\vertex{18}$ (with a score of $8$) with 6 queries in the worst-case scenario.

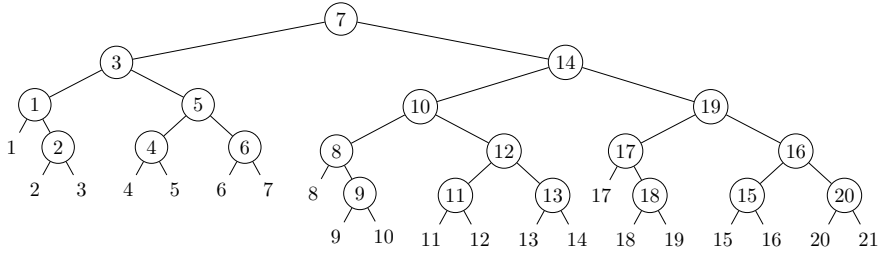
\begin{figure}[ht]
   \centering
   \scalebox{0.7}{
      \begin{forest} 
    for tree={inner sep=2pt,l=10pt,l sep=5pt,circle,draw,minimum size=1.6em}
    [7
    [3
        [1
            [1,draw=none]
            [2
                [2,draw=none]
                [3,draw=none]
            ]
        ]
        [5
            [4
                [4,draw=none]
                [5,draw=none]
            ]
            [6
                [6,draw=none]
                [7,draw=none]
            ]
        ]
    ]
    [14
        [10
            [8
                [8,draw=none]
                [9
                    [9,draw=none]
                    [10,draw=none]
                ]
            ]
            [12
                [11
                    [11,draw=none]
                    [12,draw=none]
                ]
                [13
                    [13,draw=none]
                    [14,draw=none]
                ]
            ]
        ]
        [19
            [17
                [17,draw=none]
                [18
                    [18,draw=none]
                    [19,draw=none]
                ]
            ]
            [16
                [15
                    [15,draw=none]
                    [16,draw=none]
                ]
                [20
                    [20,draw=none]
                    [21,draw=none]
                ]
            ]
        ]
    ]
    ]
\end{forest}
   }
   \caption{A \golden\ strategy tree for the digraph of Figure \ref{fig:sets_of_vertices}. In case of equality of score, the vertex with the smallest label is chosen.}
   \label{fig:complete_golden_strategy_example}
\end{figure}

The \golden\ strategy is not always better than \gb: Figure~\ref{fig:gb_better_than_golden} shows an example of binary DAG where \gb\ is thriftier than \golden. 

In this figure, vertex $\vertex a$ is queried first by \golden\ contrary to \gb\ which starts with vertex $\vertex b$. Each grey rectangle represents a directed path and the number inside is the number of vertices.

To understand why \golden\ is less efficient than \gb, both strategies are also presented under the form of trees where the nodes represent the induced subgraphs. 
The directed paths are not developed here and the colored squares represent the number of additional queries used to find the \faulty\  in the worst-case scenario.

\begin{figure}[h!]
   \centering
   \begin{subfigure}[c]{\textwidth}
      \centering
      (a)\hspace{0.5cm}\vspace{.2cm}
      \scalebox{0.6}{
         $\begin{array}{l}
            \begin{tikzpicture}
	[
		scale=.6, every node/.style={circle,draw,scale=0.8, text=black, line width=1.2pt, text opacity=1, draw opacity=1},
		minimum size=0.45cm,
		normalNode/.style={opacity=1,draw opacity=1,color=darkgray,text=black, line width=1.1pt,fill=white}, 
		labelNode/.style={opacity=0,draw opacity=1,color=darkgray,text=black, line width=1.1pt,draw=none},
		goldenNode/.style={opacity=0.1,draw opacity=1,fill=ibm_orange, color=ibm_orange,text=ibm_orange, line width=1.2pt,minimum size=0.7cm},
		gbNode/.style={opacity=0.1,draw opacity=1,fill=ibm_purple, color=ibm_purple,text=ibm_purple, line width=1.2pt, minimum size=0.7cm},
	]

	\draw[rounded corners, densely dashed, line width=1.5pt, color=darkgray, fill=grey] (-4,1.3) rectangle (4,2.70);
	\draw[rounded corners, densely dashed, line width=1.5pt, color=darkgray, fill=grey] (0,-4.7) rectangle (6,-3.3);
	\draw[rounded corners, densely dashed, line width=1.5pt, color=darkgray, fill=grey] (13,-1.7) rectangle (21,-0.3);

	\node[normalNode] (1) at (-6, 2) {};
	\node[normalNode] (2) at (-4, 2) {};
	\node[normalNode] (3) at (4, 2) {};
	\node[labelNode] (33) at (0, 2) {\Large 14};
	\node[normalNode] (4) at (6, 2) {};
	\node[goldenNode,label={[label distance=-0.2cm]90:\large 17/\textcolor{gray}{32}}] (5) at (8, 2) {\large \textbf{a}};
	\node[normalNode,label={[label distance=-0.2cm]90:\large \textcolor{gray}{33}/16}] (6) at (11, -1) {};
	\node[normalNode,label={[label distance=-0.2cm]90:\large 15/\textcolor{gray}{34}}] (7) at (8, -4) {};
	\node[normalNode] (8) at (6, -4) {};
	\node[normalNode] (9) at (0, -4) {};
	\node[labelNode] (99) at (3, -4) {\Large 11};
	\node[normalNode] (10) at (-2, -4) {};
	\node[normalNode] (11) at (-4, -3) {};
	\node[normalNode] (12) at (-4, -5) {};
	\node[normalNode] (13) at (13, -1) {};
	\node[normalNode] (14) at (21, -1) {};
	\node[labelNode] (144) at (17, -1) {\Large 14};
	\node[normalNode] (15) at (23, -1) {};
	\node[gbNode,label={[label distance=-1.9cm]90:\large 18/\textcolor{gray}{31}}] (16) at (11, -6) {\large \textbf{b}};

	\path[->,color=black]
		(1) edge (2)
		(3) edge (4)
		(4) edge (5)
		(5) edge (6)
		(7) edge (6)
		(8) edge (7)
		(10) edge (9)
		(11) edge (10)
		(12) edge (10)
		(6) edge (13)
		(14) edge (15)
		(4) edge[bend left=5] (16)
		(12) edge[bend right=5] (16)
		(16) edge[bend right=20] (15)
	;

\end{tikzpicture}
         \end{array}$
      }
   \end{subfigure}

   \begin{subfigure}[c]{\textwidth}
      \centering
      (b)\hspace{0.5cm} \vspace{-.0cm}
      \scalebox{0.5}{
         $\begin{array}{l}
            \begin{tikzpicture}
	[
		scale=.6, every node/.style={circle,draw,scale=0.8, text=black, line width=1.2pt, text opacity=1, draw opacity=1},
		minimum size=0.45cm,
		normalNode/.style={opacity=1,draw opacity=1,color=darkgray,text=black, line width=1.1pt,fill=white}, 
		labelNode/.style={opacity=0,draw opacity=1,color=darkgray,text=black, line width=1.1pt,draw=none},
		gbNode/.style={fill,opacity=1,color=ibm_purple!15,draw=ibm_purple,text=ibm_purple, line width=1.2pt},
	]

	\newcommand*{\dipath}[3]{
		\pgfmathsetmacro{\x}{#1}
		\pgfmathsetmacro{\y}{#2}
		\pgfmathsetmacro{\yRectangleTop}{#2+0.5}
		\pgfmathsetmacro{\yRectangleBottom}{#2-0.5}
		\pgfmathsetmacro{\xRectangleBottom}{#1-0.8}

		\draw[fill,opacity=0.5,color=light_grey,rounded corners=5pt] (\xRectangleBottom,\the\numexpr\y-0.8) rectangle (\the\numexpr\x+2.8,\the\numexpr\y+0.8); 

		\draw[rounded corners, densely dashed, line width=1.5pt, color=darkgray, fill=grey] (\the\numexpr\x,\the\numexpr\yRectangleBottom) rectangle (\the\numexpr\x+2,\the\numexpr\yRectangleTop);

		\node[normalNode] (1) at (\the\numexpr\x, \the\numexpr\y) {};
		\node[normalNode] (2) at (\the\numexpr\x+2, \the\numexpr\y) {};
		\node[labelNode] (3) at (\the\numexpr\x+1, \the\numexpr\y) {\large #3};
	}
	\newcommand*{\dipathsimple}[2]{
		\pgfmathsetmacro{\x}{#1}
		\pgfmathsetmacro{\y}{#2}
		\pgfmathsetmacro{\xRectangleBottom}{#1-0.5}

		\draw[fill,opacity=0.5,color=light_grey,rounded corners=5pt] (\xRectangleBottom,\the\numexpr\y-0.5) rectangle (\the\numexpr\x+2,\the\numexpr\y+0.5); 

		\node[normalNode] (1) at (\the\numexpr\x, \the\numexpr\y) {};
		\node[normalNode] (2) at (\the\numexpr\x+1.5, \the\numexpr\y) {};

		\path[->,color=black]
			(1) edge (2)
		;
	}
	\newcommand*{\dipathmerge}[4]{
		\pgfmathsetmacro{\x}{#1}
		\pgfmathsetmacro{\y}{#2}
		\pgfmathsetmacro{\yRectangleTop}{#2+0.5}
		\pgfmathsetmacro{\yRectangleBottom}{#2-0.5}

		\pgfmathsetmacro{\xRectangleBottom}{#1-0.8}

		\draw[fill,opacity=0.5,color=light_grey,rounded corners=5pt] (\xRectangleBottom,\the\numexpr\y-0.8) rectangle (\the\numexpr\x+7.2,\the\numexpr\y+2.0); 

		\draw[rounded corners, densely dashed, line width=1.5pt, color=darkgray, fill=grey] (\the\numexpr\x,\the\numexpr\yRectangleBottom) rectangle (\the\numexpr\x+2,\the\numexpr\yRectangleTop);
		\draw[rounded corners, densely dashed, line width=1.5pt, color=darkgray, fill=grey] (\the\numexpr\x+3,\the\numexpr\yRectangleBottom) rectangle (\the\numexpr\x+5,\the\numexpr\yRectangleTop);

		\node[normalNode] (1) at (\the\numexpr\x, \the\numexpr\y) {};
		\node[gbNode] (2) at (\the\numexpr\x+2, \the\numexpr\y) {};
		\node[labelNode] (8) at (\the\numexpr\x+1, \the\numexpr\y) {\large #3};

		\node[normalNode] (3) at (\the\numexpr\x+3, \the\numexpr\y) {};
		\node[normalNode] (4) at (\the\numexpr\x+5, \the\numexpr\y) {};
		\node[labelNode] (5) at (\the\numexpr\x+4, \the\numexpr\y) {\large #4};

		\node[normalNode] (6) at (\the\numexpr\x+6.5, \the\numexpr\y) {};
		\node[normalNode] (7) at (\the\numexpr\x+5, \the\numexpr\y+1.5) {};

		\path[->,color=black]
			(2) edge (3)
			(4) edge (6)
			(7) edge[bend left=25] (6)
		;
	}
	\newcommand*{\dipathmergehalfsimple}[4]{
		\pgfmathsetmacro{\x}{#1}
		\pgfmathsetmacro{\y}{#2}
		\pgfmathsetmacro{\yRectangleTop}{#2+0.5}
		\pgfmathsetmacro{\yRectangleBottom}{#2-0.5}

		\pgfmathsetmacro{\xRectangleBottom}{#1-0.8}

		\draw[fill,opacity=0.5,color=light_grey,rounded corners=5pt] (\xRectangleBottom,\the\numexpr\y-0.8) rectangle (\the\numexpr\x+6.8,\the\numexpr\y+2.0); 

		\draw[rounded corners, densely dashed, line width=1.5pt, color=darkgray, fill=grey] (\the\numexpr\x,\the\numexpr\yRectangleBottom) rectangle (\the\numexpr\x+2,\the\numexpr\yRectangleTop);

		\node[normalNode] (1) at (\the\numexpr\x, \the\numexpr\y) {};
		\node[gbNode] (2) at (\the\numexpr\x+2, \the\numexpr\y) {};
		\node[labelNode] (8) at (\the\numexpr\x+1, \the\numexpr\y) {\large #3};

		\node[normalNode] (3) at (\the\numexpr\x+3, \the\numexpr\y) {};
		\node[normalNode] (4) at (\the\numexpr\x+4.5, \the\numexpr\y) {};

		\node[normalNode] (6) at (\the\numexpr\x+6, \the\numexpr\y) {};
		\node[normalNode] (7) at (\the\numexpr\x+4.5, \the\numexpr\y+1.5) {};

		\path[->,color=black]
			(2) edge (3)
			(3) edge (4)
			(4) edge (6)
			(7) edge[bend left=25] (6)
		;
	}
	\newcommand*{\dipathmergesimple}[2]{
		\pgfmathsetmacro{\x}{#1}
		\pgfmathsetmacro{\y}{#2}
		\pgfmathsetmacro{\xRectangleBottom}{#1-0.8}

		\draw[fill,opacity=0.5,color=light_grey,rounded corners=5pt] (\xRectangleBottom,\the\numexpr\y-0.6) rectangle (\the\numexpr\x+3.9,\the\numexpr\y+2.0); 

		\node[normalNode] (1) at (\the\numexpr\x, \the\numexpr\y) {};
		\node[gbNode] (2) at (\the\numexpr\x+1.5, \the\numexpr\y) {};

		\node[normalNode] (4) at (\the\numexpr\x+3, \the\numexpr\y) {};
		\node[normalNode] (5) at (\the\numexpr\x+1.5, \the\numexpr\y+1.5) {};

		\path[->,color=black]
			(1) edge (2)
			(2) edge (4)
			(5) edge[bend left=25] (4)
		;
	}
	\newcommand*{\firstgbright}[2]{
		\pgfmathsetmacro{\x}{#1}
		\pgfmathsetmacro{\y}{#2}
		\pgfmathsetmacro{\yRectangleTop}{#2+0.5}
		\pgfmathsetmacro{\yRectangleBottom}{#2-0.5}
		\pgfmathsetmacro{\xRectangleBottom}{#1-0.8}

		\draw[fill,opacity=0.5,color=light_grey,rounded corners=5pt] (\xRectangleBottom,\the\numexpr\y-0.8) rectangle (\the\numexpr\x+8,\the\numexpr\y+2.0); 

		\draw[rounded corners, densely dashed, line width=1.5pt, color=darkgray, fill=grey] (\the\numexpr\x,\the\numexpr\yRectangleBottom) rectangle (\the\numexpr\x+2,\the\numexpr\yRectangleTop);
		\draw[rounded corners, densely dashed, line width=1.5pt, color=darkgray, fill=grey] (\the\numexpr\x+5,\the\numexpr\yRectangleBottom) rectangle (\the\numexpr\x+7,\the\numexpr\yRectangleTop);

		\node[normalNode] (1) at (\the\numexpr\x, \the\numexpr\y) {};
		\node[normalNode] (2) at (\the\numexpr\x+2, \the\numexpr\y) {};
		\node[labelNode] (3) at (\the\numexpr\x+1, \the\numexpr\y) {\large 14};

		\node[gbNode] (4) at (\the\numexpr\x+3.5, \the\numexpr\y) {};
		\node[normalNode] (5) at (\the\numexpr\x+2, \the\numexpr\y+1.5) {};

		\node[normalNode] (6) at (\the\numexpr\x+5, \the\numexpr\y) {};
		\node[normalNode] (7) at (\the\numexpr\x+7, \the\numexpr\y) {};
		\node[labelNode] (8) at (\the\numexpr\x+6, \the\numexpr\y) {\large 15};

		\path[->,color=black]
			(2) edge (4)
			(5) edge[bend left=25] (4)
			(4) edge (6)
		;
	}
	\newcommand*{\querybox}[4]{
		\pgfmathsetmacro{\x}{#1}
		\pgfmathsetmacro{\y}{#2}
		\pgfmathsetmacro{\additional}{#3}
		\pgfmathsetmacro{\depth}{#4}

		\node[color=ibm_purple,rectangle] (b) at (\x,\the\numexpr\y-0.4) {\textbf{+\additional\  = \depth}};
	}
	\newcommand*{\normalbranch}[4]{
		\draw [-,darkgray,line cap=round](#1,#2) -- (#3,#4);
	}

	\draw[fill,opacity=0.5,color=light_grey,rounded corners=5pt] (1,2) rectangle (5,4); 
	\node[draw=none,normalNode,opacity=0,text opacity=1] (b) at (3,3) {\Large $G$};

	\dipathmerge{-10.5}{-2}{9}{7}
	\dipath{-11}{-5}{9}
	\dipathmergehalfsimple{-6.5}{-6}{5}{2}
	\dipath{-8}{-9}{5}
	\dipathmergesimple{-3}{-10}
	\firstgbright{10.5}{-2}
	\dipath{17}{-5}{15}
	\dipathmerge{7}{-6}{8}{6}
	\dipath{6.5}{-9}{8}
	\dipathmergehalfsimple{11}{-10}{4}{2}
	\dipath{9.5}{-13}{4}
	\dipathmergesimple{14.5}{-14}

	\normalbranch{3}{2}{-7}{0.3};
	\normalbranch{3}{2}{13.5}{0.3};

	\normalbranch{-7.5}{-3}{-5.5}{-4};
	\normalbranch{-7.5}{-3}{-9.5}{-4};

	\normalbranch{-4.5}{-7}{-2.5}{-8};
	\normalbranch{-4.5}{-7}{-6.5}{-8};

	\normalbranch{14}{-3}{11}{-4};
	\normalbranch{14}{-3}{17}{-4};

	\normalbranch{10}{-7}{12}{-8};
	\normalbranch{10}{-7}{8}{-8};

	\normalbranch{13}{-11}{15}{-12};
	\normalbranch{13}{-11}{11}{-12};

	\querybox{-10}{-6}{4}{6}
	\querybox{-7}{-10}{3}{6}
	\querybox{-1.5}{-11}{2}{5}
	\querybox{7.5}{-10}{3}{6}
	\querybox{10.5}{-14}{2}{6}
	\querybox{16}{-15}{2}{6}
	\querybox{18}{-6}{4}{6}

\end{tikzpicture}
         \end{array}$
      }
   \end{subfigure}

   \begin{subfigure}[c]{\textwidth}
      \centering
      (c)\hspace{0.5cm}
      \scalebox{0.5}{
         $\begin{array}{l}
            \input{figures/golden_graph_tree.tex}
         \end{array}$
      }
   \end{subfigure}
   \caption{(a) A graph of size 49 where \golden\ uses one more query than \gb. Dashed boxes represent directed paths.  (b) The strategy tree for \gb. (c) The strategy tree for \golden.}
   \label{fig:gb_better_than_golden}
\end{figure}

\subsection{Results for \golden\ on binary DAGs}

This subsection lists the main results about the complexity analysis of \golden. First, note that Theorem~\ref{theo:pathological} also holds for \golden, so the general case (i.e, whenever the DAGs are not necessarily binary) is as bad as \gb.

As for binary DAGs, we establish that the \golden\ algorithm has a better upper bound for the number of queries, in comparison with \gb.

\begin{theorem}
   On any binary DAG with $n$ vertices, the number of \golden\ queries is at most $\log_\phi(n) + 1 = \frac{\log_2(n)}{\log_2(\phi)}+1$, where $\phi$ is the golden ratio.
   \label{theo:bound_golden_binary_dag}
\end{theorem}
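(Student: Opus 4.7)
\textit{\textbf{Proof sketch.}}
The plan is to prove $T(n) \le \log_\phi(n) + 1$ by strong induction on $n$, where $T(n)$ denotes the worst-case number of \golden\ queries on a binary DAG of size $n$. Base cases for small $n$---roughly $n \le 13$, which will match the threshold $n \ge 2\phi^4 \approx 13.7$ appearing later---are verified directly by enumeration, in the same spirit as Table~\ref{table:Fn} handles small sizes in the proof of Theorem~\ref{theo:bound_gb_binary_dag}.

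For the inductive step, I would analyse the first query $\vertex v$ chosen by \golden\ and split on which step of Algorithm~\ref{algo:golden} applies. If Step~\ref{item:one_move} fires, then $\score(\vertex v) \ge n/\phi^2$, the sub-DAG has at most $n(1-1/\phi^2) = n/\phi$ vertices, and the induction yields $T(n) \le 1 + \log_\phi(n/\phi) + 1 = \log_\phi n + 1$. If Step~\ref{item:two_moves} fires with a ``favourable'' outcome---$\vertex v \in B^\ge$ and clean, or $\vertex v \in B^<$ and bugged---the sub-DAG has at most $\score(\vertex v) < n/\phi^2$ vertices and the induction likewise concludes. This leaves the two symmetric ``unfavourable'' outcomes: $\vertex v \in B^\ge$ bugged, or $\vertex v \in B^<$ clean; I focus on the first since the second is handled by swapping ancestor/non-ancestor roles.

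For $\vertex v \in B^\ge$ bugged, the sub-DAG $G'$ has size $m = \ancestors v > n/\phi$ (since $n - m = \score(\vertex v) < n/\phi^2$), so a one-step induction is too weak. The hard part is to bridge this gap via a two-query amortised analysis. Let $\vertex x$ be a parent of $\vertex v$ in $D$ with the largest number of ancestors; since $\vertex v$'s at most two parents together cover its strict ancestors, $\ancestors x \ge (m-1)/2$, and since $\vertex x \in B^<$ with $\score(\vertex x) = \ancestors x \le \score(\vertex v) < n/\phi^2$, we also have $\ancestors x < n/\phi^2$. Using the bound $m \le (2n+1)/3$ (which follows from $m - 1 \le \ancestors x + \ancestors y \le 2(n-m)$ with $\vertex y$ the other parent), a short calculation gives
\[
m - \score_{G'}(\vertex x) \;=\; \max\bigl(\ancestors x,\, m - \ancestors x\bigr) \;\le\; \max\bigl(n/\phi^2,\, (m+1)/2\bigr) \;\le\; n/\phi^2,
\]
valid for $n \ge 2\phi^4 \approx 13.7$, matching the base-case threshold.

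To finish, I need to show that \golden\ applied to $G'$ picks a vertex of score at least $\score_{G'}(\vertex x)$, so that after two queries the remaining DAG has at most $n/\phi^2$ vertices. If Step~\ref{item:one_move} fires on $G'$, \golden\ selects a vertex of maximum score in $G'$, which is trivially at least $\score_{G'}(\vertex x)$. If Step~\ref{item:two_moves} were to fire, the maximum score in $G'$ would be below $m/\phi^2$; combined with $\score_{G'}(\vertex x) \ge m - n/\phi^2$, this would give $m - n/\phi^2 < m/\phi^2$, i.e., $m < n/\phi$, contradicting $m > n/\phi$. Hence Step~\ref{item:two_moves} cannot apply in $G'$, and the inductive bound $T(n) \le 2 + T(n/\phi^2) \le 2 + \log_\phi(n/\phi^2) + 1 = \log_\phi n + 1$ follows. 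The main obstacle I anticipate is this delicate two-query amortisation: single-step induction fails precisely because $m > n/\phi$, so the choice of threshold $n/\phi^2$ in Step~\ref{item:one_move} must be tuned so that the contradiction closes and the base-case cutoff $2\phi^4$ is matched.
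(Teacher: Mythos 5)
Your proposal follows essentially the same route as the paper: its Lemma~\ref{lem:upper_bound_golden} (one query reduces to $n/\phi$, or two queries reduce to $n/\phi^2$) is exactly your one-/two-query amortisation, its Claim~\ref{claim:defi} is your argument that Step~\ref{item:one_move} must fire on $G'$, and the arithmetic ($(n+2)/3 < n/\phi^2$ precisely when $n \geq 14 > 2\phi^4$, with base cases up to $n=13$ checked by hand) coincides. The only point to tighten is the ``$\vertex v \in B^<$ and clean'' case, which is not a literal ancestor/non-ancestor swap: there the paper must pass to the child $\vertex c \in B^\ge$ of the queried vertex and, if $\vertex c$ does not already satisfy the hypotheses, to the \emph{other} parent $\vertex{z'}$ of $\vertex c$, though the same computation closes that case as well.
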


\noindent\textit{\textbf{Proof idea}} The \golden\ algorithm has the remarkable following property:
starting from a graph with $n$ vertices, either the subgraph remaining after one query is of size at most $\frac{n}{\phi}$, or the subgraph obtained after two queries is of size at most $\frac{n}{\phi^2}$. If we admit this point, the proof of Theorem~\ref{theo:bound_golden_binary_dag} has no difficulty.

The reason why we have such a guarantee on the size of the remaining graph after one or two queries comes from the choices of the sets $B^{\geq}$ and $B^<$. If \golden\ first queries a bugged vertex of $B^{\geq}$ with a ``bad" score, then the parents of this vertex must have a ``really good" score in the new resulting graph.

Let us take a critical example: \golden\ queries a bugged vertex of $B^{\geq}$, let us say $\vertex{q}$, with $\score(\vertex q)=(n-1)/3$ --- which is the worst possible score for such vertices, by Lemma~\ref{lem:score_B_sets}. In this case, each of the two parents of $\vertex q$ has the really good score of $(n-1)/3$ in the new graph, which is approximately half of its size. So, even if the first query just removes one third of the vertices, the size of the graph after two queries is more or less $n/3$ (which is smaller than $n / \phi^2$).
Similar arguments hold whenever the first query concerns a vertex of $B^<$.

The ratio $1/\phi$ appears in fact whenever we try to balance what could go wrong after one query and what could go wrong after two queries.

\medskip

As first corollary, \golden\ is a better approximation algorithm than \gb\ (in the binary case):

\begin{corollary}
   For every $\varepsilon > 0$, \golden\ is a $\left(\frac{1}{\log_2(\phi)} + \varepsilon\right)$-approximation algorithm on binary DAGs with a sufficiently large size.
   \label{cor:approx_golden}
\end{corollary}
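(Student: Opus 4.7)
The plan is to obtain the corollary as a direct arithmetic consequence of Theorem~\ref{theo:bound_golden_binary_dag} combined with the universal lower bound on the optimal number of queries from Proposition~\ref{prop:bounding_nb_queries}. For a binary DAG $D$ with $n$ vertices, let $\mathrm{GOLD}(D)$ denote the worst-case number of queries used by \golden\ on $D$ and $\mathrm{OPT}(D)$ the worst-case number of queries used by an optimal strategy. Theorem~\ref{theo:bound_golden_binary_dag} gives
$$\mathrm{GOLD}(D) \leq \log_\phi(n) + 1 = \frac{\log_2(n)}{\log_2(\phi)} + 1,$$
while Proposition~\ref{prop:bounding_nb_queries} guarantees $\mathrm{OPT}(D) \geq \lceil \log_2(n) \rceil \geq \log_2(n)$ (for $n \geq 2$, so that the denominator is nonzero). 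Dividing these two inequalities yields
$$\frac{\mathrm{GOLD}(D)}{\mathrm{OPT}(D)} \leq \frac{1}{\log_2(\phi)} + \frac{1}{\log_2(n)}.$$

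To conclude, given $\varepsilon > 0$ I would restrict attention to binary DAGs with $n \geq 2^{1/\varepsilon}$ vertices; the second term above is then at most $\varepsilon$, which gives the claimed approximation ratio of $\frac{1}{\log_2(\phi)} + \varepsilon$. There is no genuine obstacle in this argument: it is a purely arithmetic passage between the two bounds already established. The only mild subtlety, which explains why the statement requires ``sufficiently large size'' rather than holding for all $n$, is that the additive $+1$ appearing in Theorem~\ref{theo:bound_golden_binary_dag} must be absorbed into the multiplicative slack $\varepsilon$, and this absorption is possible only once $\log_2(n)$ is large enough relative to $1/\varepsilon$.
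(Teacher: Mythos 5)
Your proposal is correct and is exactly the derivation the paper intends (the paper states the corollary without an explicit proof, as an immediate consequence of Theorem~\ref{theo:bound_golden_binary_dag} and the lower bound $\mathrm{OPT} \geq \lceil \log_2(n) \rceil$ from Proposition~\ref{prop:bounding_nb_queries}). The arithmetic, the handling of the additive $+1$, and the role of the ``sufficiently large size'' hypothesis are all as they should be.
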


This also gives an upper bound for the optimal number of queries in the worst-case scenario, using the fact that no power of $\phi$ is an integer and thus that $\lfloor \log_\phi(n) + 1 \rfloor = \lceil \log_\phi(n) \rceil$.

\begin{corollary}
   For any binary DAG $D$ with $n$ vertices, the optimal number $\textnormal{opt}$ of queries in the worst-case scenario satisfies
   \[ \lceil \log_2(n) \rceil \leq \textnormal{opt} \leq   \lceil \log_\phi(n) \rceil .\]
   \label{cor:bounds_opt}
\end{corollary}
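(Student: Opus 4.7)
The plan is to assemble the two inequalities from results that are already established in the paper. The lower bound $\lceil \log_2(n) \rceil \leq \mathrm{opt}$ is just the lower-bound half of Proposition~\ref{prop:bounding_nb_queries}: any strategy corresponds to a binary tree with at least $n$ leaves, hence height at least $\lceil \log_2(n)\rceil$. This applies a fortiori to any optimal strategy, binary DAG or not, so there is nothing to do beyond citing the proposition.

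For the upper bound, the key observation is that \golden\ is \emph{a} valid strategy for any binary DAG, so its worst-case query count is an upper bound on $\mathrm{opt}$. By Theorem~\ref{theo:bound_golden_binary_dag}, this count is at most $\log_\phi(n)+1$, so
\[
\mathrm{opt} \le \log_\phi(n) + 1.
\]
Because $\mathrm{opt}$ is a nonnegative integer, we may replace the right-hand side by $\lfloor \log_\phi(n) + 1 \rfloor = \lfloor \log_\phi(n) \rfloor + 1$.

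The final step is to justify the identity $\lfloor \log_\phi(n) \rfloor + 1 = \lceil \log_\phi(n) \rceil$ for every integer $n\ge 1$; this is the only place where any care is needed. For $n=1$ both sides equal $0$ (and indeed $\mathrm{opt}=0$). For $n\ge 2$, the identity holds exactly when $\log_\phi(n)$ is not an integer, i.e., when $n$ is not a positive integer power of $\phi$. I would justify this from the Fibonacci identity $\phi^k = F_k\phi + F_{k-1}$ for $k\ge 1$: since $\phi$ is irrational and $F_k\ne 0$ for $k\ge 1$, $\phi^k$ is irrational and therefore never equal to an integer $n\ge 2$. Combining the three ingredients yields $\mathrm{opt} \le \lceil \log_\phi(n)\rceil$, completing the proof.

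There is no real obstacle here; the mildly technical point is the floor/ceiling manipulation, which reduces to the irrationality of positive integer powers of $\phi$. The whole argument should fit in a short paragraph once Theorem~\ref{theo:bound_golden_binary_dag} and Proposition~\ref{prop:bounding_nb_queries} are invoked.
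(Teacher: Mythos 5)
Your proposal is correct and follows essentially the same route as the paper: the lower bound is the one from Proposition~\ref{prop:bounding_nb_queries}, and the upper bound comes from Theorem~\ref{theo:bound_golden_binary_dag} applied to \golden\ together with the observation that no positive integer power of $\phi$ is an integer, so $\lfloor \log_\phi(n) + 1 \rfloor = \lceil \log_\phi(n) \rceil$. Your treatment is in fact slightly more careful than the paper's one-line justification, since you check $n=1$ separately (where that floor/ceiling identity actually fails but the corollary holds trivially) and you supply the Fibonacci-identity argument for the irrationality of $\phi^k$.
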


\vspace*{-.7cm}

Note that the latter corollary is an analogue of Proposition~\ref{prop:bounding_nb_queries}, but for binary DAGs. The lower bound is satisfied for a large variety of DAGs, the most obvious ones being the directed paths. As for the upper bound, there is a 4-vertices graph, commonly named claw (see Figure
\ref{fig:claw_graph}), that uses $\lceil\log_{\phi}(4)\rceil = 3$ queries in the worst-case scenario.

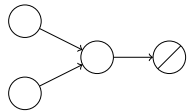
\begin{figure}[ht]
   \centering
   \scalebox{0.8}{
      \begin{tikzpicture}
    [scale=.6,auto=left,every node/.style={circle,draw,scale=0.8},text opacity=0]

    \node (1) at (0,2) {1};
    \node (2) at (0,0) {2};
    \node (3) at (2,1) {3};
    \node[forbidden sign] (4) at (4,1) {4};

    \path[->] 
        (1) edge (3)
        (2) edge (3)
        (3) edge (4)
    ;

\end{tikzpicture}
   }
   \caption{Claw graph.}
   \label{fig:claw_graph}
\end{figure}

\subsection{Proof of the upper bound}

We prove here the upper bound for the number of queries used by \golden\ for binary graphs.

Recall that $\phi = \dfrac{1+\sqrt{5}}{2}$ is the golden ratio.
We also have $1+\phi - \phi^2 = 0$, and thus $n - \frac{n}{\phi} = \frac{n}{\phi^2}$.

\begin{lemma}
   For any binary DAG with $n \geq 14$ vertices,
   \begin{enumerate}[(i)]
      \item either the \golden\ reduces the searching area to at most $\frac{n}{\phi}$ in one query, \label{item_1_lem_up}
      \item or it reduces the searching area to at most $\frac{n}{\phi^2}$ in two queries.
      \label{item_2_lem_up}
   \end{enumerate}
   \label{lem:upper_bound_golden}
\end{lemma}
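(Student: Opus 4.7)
The plan is to partition on which branch of \golden\ is executed on the initial DAG $D$ and, within Case~2, on the set ($B^{\geq}$ or $B^<$) and outcome (bugged or clean) of the first query. In Case~1, the maximum score in $D$ is at least $n/\phi^2$, so \golden\ follows Step~\ref{item:one_move} and queries a vertex of maximum score. Since the identity $1/\phi + 1/\phi^2 = 1$ gives $n - n/\phi^2 = n/\phi$, the remaining DAG has at most $n/\phi$ vertices, yielding case (i).

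In Case~2 every vertex has score strictly less than $n/\phi^2$, so \golden\ follows Step~\ref{item:two_moves} and queries some $\vertex{v} \in B^{\geq} \cup B^<$ with the largest score in that set. In each sub-case, one of the two possible outcomes already yields case (i) immediately: if $\vertex{v} \in B^{\geq}$ and clean, the remaining has size $n - |v| < n/\phi^2$, and if $\vertex{v} \in B^<$ and bugged, it has size $|v| < n/\phi^2$. It therefore suffices to prove that in the two other outcomes a single further query reduces the DAG to at most $n/\phi^2$ vertices.

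For $\vertex{v} \in B^{\geq}$ bugged, the new DAG $D'$ is the induced subgraph on the $|v|$ ancestors of $\vertex{v}$, and $|v| > n/\phi$. I would first argue that $\vertex{v}$ must have two parents $\vertex{p_1}, \vertex{p_2}$: for $n \geq 14$ a single parent would force $|v| \leq \lceil n/2 \rceil < n/\phi$. Both parents lie in $B^<$, so $|p_i| < n/\phi^2$, and $|v| \leq |p_1| + |p_2| + 1$. Taking $|p_1| \geq |p_2|$ without loss of generality, either $|p_1| \geq |v|/2$ and $\score_{D'}(\vertex{p_1}) = |v| - |p_1| > |v| - n/\phi^2$, or $|p_1| < |v|/2$ and $\score_{D'}(\vertex{p_1}) = |p_1|$, in which case the bound $|v| - |p_1| \leq |p_2| + 1$ combined with $|p_2| < n/\phi^2$ and the hypothesis $n \geq 14$ (which absorbs integer rounding) yields the same inequality. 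A short computation using $|v| > n/\phi$ shows this score also exceeds $|v|/\phi^2$, so \golden\ enters Step~\ref{item:one_move} on $D'$ and removes at least $|v| - n/\phi^2$ vertices, giving case (ii).

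For $\vertex{v} \in B^<$ clean I would exploit a child $\vertex{c} \in B^{\geq}$ of $\vertex{v}$, for which $|c| > n/\phi$. In $D'$ (the non-ancestors of $\vertex{v}$, of size $n - |v|$) one has $|c|_{D'} = |c| - |v|$, and for $n \geq 14$ the other parent $\vertex{v'}$ of $\vertex{c}$ survives in $D'$ with $|v'|_{D'} = |c|_{D'} - 1$. If $|c|_{D'} \leq (n - |v|)/2$ then $\score_{D'}(\vertex{c}) = |c| - |v| > n/\phi - |v|$ directly; otherwise $\vertex{c}$'s score $n - |c|$ is on the small side of its split, and I would invoke $\vertex{v'}$, whose score $\min(|c| - |v| - 1, n - |c| + 1)$ compensates by the same arithmetic. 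Because $|v| < n/\phi^2$, the threshold $n/\phi - |v|$ automatically dominates $(n - |v|)/\phi^2$, so \golden\ again enters Step~\ref{item:one_move} and finishes under $n/\phi^2$. The main obstacle I expect is precisely this last sub-case $|c|_{D'} > (n - |v|)/2$: here both $\vertex{c}$ and $\vertex{v'}$ must be analyzed jointly, and the threshold $n \geq 14$ is invoked to suppress the integer rounding errors on all sides.
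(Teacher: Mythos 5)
Your case decomposition is the same as the paper's, and the two ``easy'' outcomes together with the sub-case $|p_1|\geq |v|/2$ are handled correctly. But there is a genuine gap in the other sub-case (and the same gap recurs at the end of your $B^<$-clean analysis): the step ``$|v|-|p_1|\leq |p_2|+1$ combined with $|p_2|<n/\phi^2$ and the hypothesis $n\geq 14$ yields the same inequality'' is not a valid deduction. Since $|p_2|$ is an integer and $n/\phi^2$ is irrational, $|p_2|<n/\phi^2$ only gives $|p_2|\leq\lfloor n/\phi^2\rfloor$, hence $|v|-|p_1|\leq\lfloor n/\phi^2\rfloor+1$, which \emph{exceeds} $n/\phi^2$ no matter how large $n$ is. Concretely, the constraints you actually invoke are simultaneously satisfiable with $n=21$, $|p_1|=|p_2|=8<21/\phi^2\approx 8.02$, $|v|=17>21/\phi$: then the second query can leave $|v|-|p_1|=9>n/\phi^2$ candidates. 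One integer unit above an irrational threshold is a real failure, not a rounding technicality that $n\geq 14$ can absorb.

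The missing ingredient is the one fact you state but never use: the first queried vertex has the \emph{maximum} score among $B^\geq\cup B^<$. Since $\vertex{p_1},\vertex{p_2}\in B^<$, this gives $|p_2|\leq|p_1|=\score_D(\vertex{p_1})\leq\score_D(\vertex v)=n-|v|$, whence $|v|-|p_1|\leq|p_2|+1\leq|p_1|+1\leq n-|v|+1$, and summing the three inequalities $3(|v|-|p_1|)\leq n+2$. It is the comparison $(n+2)/3<n/\phi^2$ that requires exactly $n\geq14$ --- this is where the threshold in the lemma actually comes from. The analogous comparison $\score_D(\vertex c)\leq\score_D(\vertex v)$, i.e.\ $n-|c|\leq|v|$, is what closes your last sub-case, where you correctly identify the difficulty but leave the arithmetic unfinished: combined with $|c|-|v|\geq n-|c|+1$ it yields $3(n-|c|+1)\leq n+2$, so the non-ancestor count of $\vertex{v'}$ in $D'$ is at most $(n+2)/3<n/\phi^2$. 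Without these max-score comparisons each of your inequality chains falls one unit short of the bound you need.
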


Note that the lemma does not hold for $n=13$, as shown by Figure~\ref{fig:backbone}. Here,  the digraph after $1$ \golden\ step has $9$ vertices, which is larger than $\frac{13}{\phi} \approx 8.03$, and after $2$ \golden\ steps, it has $5$ vertices, which is larger than $\frac{13}{\phi^2} \approx 4.96$.

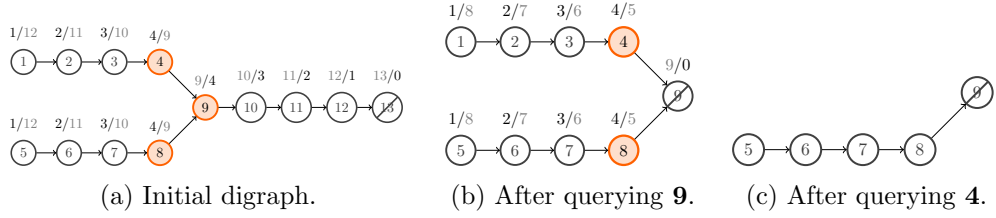
\begin{figure}[ht]
   \centering
   \begin{subfigure}[t]{.42\textwidth}
      \centering
      \resizebox{\textwidth}{!}{
         \begin{tikzpicture}
	[
		scale=.5,
		every node/.style={
			circle,
			draw,
			scale=0.8,
			text=black,
			line width=1.2pt,
			opacity=.2,
			text opacity=1,
			draw opacity=1
		},
		queryNode/.style={fill=ibm_orange, color=ibm_orange,text=black, line width=1.2pt},
		normalNode/.style={text opacity=1,opacity=0,draw opacity=1,color=darkgray,text=darkgray, line width=1.pt}
	]

	\node[normalNode,label={[label distance=-0.2cm]90:1/\textcolor{gray}{12}}] (x0) at (-8, 2) {$1$};
	\node[normalNode,label={[label distance=-0.2cm]90:2/\textcolor{gray}{11}}] (x1) at (-6, 2) {$2$};
	\node[normalNode,label={[label distance=-0.2cm]90:3/\textcolor{gray}{10}}] (x2) at (-4, 2) {$3$};
	\node[queryNode,label={[label distance=-0.2cm]90:4/\textcolor{gray}{9}}] (x3) at (-2, 2) {$4$};
	\node[normalNode,label={[label distance=-0.2cm]90:1/\textcolor{gray}{12}}] (y0) at (-8, -2) {$5$};
	\node[normalNode,label={[label distance=-0.2cm]90:2/\textcolor{gray}{11}}] (y1) at (-6, -2) {$6$};
	\node[normalNode,label={[label distance=-0.2cm]90:3/\textcolor{gray}{10}}] (y2) at (-4, -2) {$7$};
	\node[queryNode,label={[label distance=-0.2cm]90:4/\textcolor{gray}{9}}] (y3) at (-2, -2) {$8$};
	\node[queryNode,label={[label distance=-0.2cm]90:\textcolor{gray}{9}/4}] (c) at (0, 0) {$9$};
	\node[normalNode,label={[label distance=-0.2cm]90:\textcolor{gray}{10}/3}] (z1) at (2, 0) {$10$};
	\node[normalNode,label={[label distance=-0.2cm]90:\textcolor{gray}{11}/2}] (z2) at (4, 0) {$11$};
	\node[normalNode,label={[label distance=-0.2cm]90:\textcolor{gray}{12}/1}] (z3) at (6, 0) {$12$};
	\node[normalNode,label={[label distance=-0.2cm]90:\textcolor{gray}{13}/0},forbidden sign] (z4) at (8, 0) {$13$};

	\path[->,color=black]
		(x0) edge (x1)
		(x1) edge (x2)
		(x2) edge (x3)
		(y0) edge (y1)
		(y1) edge (y2)
		(y2) edge (y3)
		(x3) edge (c)
		(y3) edge (c)
		(c) edge (z1)
		(z1) edge (z2)
		(z2) edge (z3)
		(z3) edge (z4)
	;

\end{tikzpicture}
      }
      \caption{Initial digraph.}
   \end{subfigure}
   \hfill
   \begin{subfigure}[t]{.28\textwidth}
      \centering
      \resizebox{\textwidth}{!}{
         \begin{tikzpicture}
	[
		scale=.5,
		every node/.style={
			circle,
			draw,
			scale=0.8,
			text=black,
			line width=1.2pt,
			opacity=.2,
			text opacity=1,
			draw opacity=1
		},
		queryNode/.style={fill=ibm_orange, color=ibm_orange,text=black, line width=1.2pt},
		normalNode/.style={text opacity=1,opacity=0,draw opacity=1,color=darkgray,text=darkgray, line width=1.pt}
	]

	\node[normalNode,label={[label distance=-0.2cm]90:1/\textcolor{gray}{8}}] (x0) at (-8, 2) {$1$};
	\node[normalNode,label={[label distance=-0.2cm]90:2/\textcolor{gray}{7}}] (x1) at (-6, 2) {$2$};
	\node[normalNode,label={[label distance=-0.2cm]90:3/\textcolor{gray}{6}}] (x2) at (-4, 2) {$3$};
	\node[queryNode,label={[label distance=-0.2cm]90:4/\textcolor{gray}{5}}] (x3) at (-2, 2) {$4$};
	\node[normalNode,label={[label distance=-0.2cm]90:1/\textcolor{gray}{8}}] (y0) at (-8, -2) {$5$};
	\node[normalNode,label={[label distance=-0.2cm]90:2/\textcolor{gray}{7}}] (y1) at (-6, -2) {$6$};
	\node[normalNode,label={[label distance=-0.2cm]90:3/\textcolor{gray}{6}}] (y2) at (-4, -2) {$7$};
	\node[queryNode,label={[label distance=-0.2cm]90:4/\textcolor{gray}{5}}] (y3) at (-2, -2) {$8$};
	\node[normalNode,label={[label distance=-0.2cm]90:\textcolor{gray}{9}/0},forbidden sign] (c) at (0, 0) {$9$};

	\path[->,color=black]
		(x0) edge (x1)
		(x1) edge (x2)
		(x2) edge (x3)
		(y0) edge (y1)
		(y1) edge (y2)
		(y2) edge (y3)
		(x3) edge (c)
		(y3) edge (c)
	;

\end{tikzpicture}
      }
      \caption{After querying $\vertex{9}$.}
   \end{subfigure}
   \hfill
   \begin{subfigure}[t]{.28\textwidth}
      \centering
      \resizebox{\textwidth}{!}{
         \begin{tikzpicture}
	[
		scale=.5,
		every node/.style={
			circle,
			draw,
			scale=0.8,
			text=black,
			line width=1.2pt,
			opacity=.2,
			text opacity=1,
			draw opacity=1},
		normalNode/.style={text opacity=1,opacity=0,draw opacity=1,color=darkgray,text=darkgray, line width=1.pt}
	]

	\node[normalNode] (y0) at (-8, -2) {$5$};
	\node[normalNode] (y1) at (-6, -2) {$6$};
	\node[normalNode] (y2) at (-4, -2) {$7$};
	\node[normalNode] (y3) at (-2, -2) {$8$};
	\node[normalNode,forbidden sign] (c) at (0, 0) {$9$};

	\path[->,color=black]
		(y0) edge (y1)
		(y1) edge (y2)
		(y2) edge (y3)
		(y3) edge (c)
	;

\end{tikzpicture}
      }
      \caption{After querying $\vertex 4$.}
   \end{subfigure}
   \caption{First two steps of \golden\  for a DAG of size $13$.}
   \label{fig:backbone}
\end{figure}

\begin{proof}
   If there exists a vertex $\vertex v$ in $D$ with $\score (\vertex v) \geq \frac n {\phi^2}$, then item~\eqref{item_1_lem_up} holds since \golden\ will query such a vertex via Step~\ref{item:one_move}, and then there will remain at most $n - \frac n {\phi^2} = \frac n \phi$ vertices.

   Thus we can assume that all vertices have a score smaller than $\frac n {\phi^2}$. Note that under this assumption, every vertex of $B^\geq$ has exactly two parents.
   Indeed, if a vertex $\vertex z \in B^\geq$ has only one parent, say $\vertex x$, then
   \[\score(\vertex x) = \ancestors x = \ancestors z - 1 \geq \frac n 2 - 1 \geq \frac n {\phi^2}\]
   (the last inequality is true whenever $n \geq 9$).

   Let us start by proving a useful claim.

   \begin{claim} 
      Suppose that after the first query in $D$, the resulting digraph, say $D'$, satisfies the following two properties :
      \begin{itemize}
         \item $|D'|$, the number of vertices of $D'$, is greater than $\frac {n} {\phi}$;
         \item there exists a vertex $\vertex v$ in $D'$ with no more than $\frac{n} {\phi^2}$ ancestors in $D'$, and no more than $\frac{n} {\phi^2}$ non-ancestors in $D'$.
      \end{itemize}
      Then the score of $\vertex v$ in $D'$ is greater or equal than $\frac{|D'|}{\phi^2}$, and item~\ref{item_2_lem_up} of the lemma holds.
      \label{claim:defi}
   \end{claim}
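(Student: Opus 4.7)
The plan is to prove both conclusions of the claim from a single, short chain of inequalities, with no case analysis. The driving observation is that the two hypotheses on $\vertex v$ are completely symmetric: both $\ancestors{v}_{D'}$ and $|D'|-\ancestors{v}_{D'}$ are bounded above by $n/\phi^2$, hence both are bounded below by $|D'|-n/\phi^2$. Therefore
\[
\score_{D'}(\vertex v)=\min\!\bigl(\ancestors{v}_{D'},\,|D'|-\ancestors{v}_{D'}\bigr)\ \geq\ |D'|-\tfrac{n}{\phi^2}.
\]

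Next, I would convert this into the first claimed conclusion using the golden-ratio identity $1-\tfrac{1}{\phi^2}=\tfrac{1}{\phi}$ (an immediate consequence of $\phi^2=\phi+1$, already noted just before the lemma). Combined with the hypothesis $|D'|>n/\phi$, this gives $|D'|/\phi>n/\phi^2$, equivalently $|D'|-n/\phi^2>|D'|-|D'|/\phi=|D'|/\phi^2$. Chaining with the previous inequality yields
\[
\score_{D'}(\vertex v)\ \geq\ |D'|-\tfrac{n}{\phi^2}\ >\ \tfrac{|D'|}{\phi^2},
\]
which is the first conclusion.

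For the second conclusion, I would argue that this score inequality forces \golden, when applied to $D'$, to enter Step~\ref{item:one_move}: the maximum score in $D'$ is at least $|D'|/\phi^2$, so the algorithm queries some vertex $\vertex u$ realising this maximum, i.e.\ with $\score_{D'}(\vertex u)\geq \score_{D'}(\vertex v)\geq |D'|-n/\phi^2$. In the worst case this second query removes exactly $\score_{D'}(\vertex u)$ vertices, so the resulting graph $D''$ satisfies
\[
|D''|\ \leq\ |D'|-\score_{D'}(\vertex u)\ \leq\ |D'|-\bigl(|D'|-\tfrac{n}{\phi^2}\bigr)\ =\ \tfrac{n}{\phi^2},
\]
which is exactly item~\eqref{item_2_lem_up} of Lemma~\ref{lem:upper_bound_golden} (counting the first query that produced $D'$ from $D$ as one step and this one as the second).

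There is no real obstacle here: the proof is essentially a two-line calculation. The only subtlety is that the queried vertex $\vertex u$ need not be $\vertex v$ itself, but since $\vertex u$ has the maximum score in $D'$, the bound $\score_{D'}(\vertex u)\geq \score_{D'}(\vertex v)$ is all that is needed, and the golden-ratio identity is what lets the strict hypothesis $|D'|>n/\phi$ feed directly into the threshold test of Step~\ref{item:one_move}.
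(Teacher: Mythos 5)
Your proof is correct and follows essentially the same route as the paper's: the same symmetric lower bound $\min(\#a,\#na)\ge |D'|-n/\phi^2$, the same golden-ratio identity to convert the hypothesis $|D'|>n/\phi$ into the threshold $|D'|/\phi^2$ for Step~\ref{item:one_move}, and the same observation that the queried vertex has score at least $\score_{D'}(\vertex v)$, so at most $n/\phi^2$ vertices survive the second query. The only difference is presentational: you make explicit the intermediate bound $|D'|-n/\phi^2$ that the paper leaves implicit in its final sentence.
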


   \begin{proof} 
      Let $\#a$ and $\#na$ be respectively the number of ancestors and non-ancestors of $\vertex v$ in $D'$, with $\#a \leq \frac{n}{\phi^2}$ and $\#na \leq \frac{n}{\phi^2}$.

      First notice that the score of $\vertex v$ in $D'$ is the minimum between $\#a$ and $\#na$. So, if we show that both $\#a$ and $\#na$ are no less than $\frac{|D'|}{\phi^2}$, the first part of the claim is proved.

      Thus
      \[\#a =  |D'| - \#na \geq |D'| - \frac{n}{\phi^2} = \frac{|D'|}{\phi^2} + \frac 1 {\phi} \left(  |D'| - \frac n \phi \right) \]
      (the last equality can be derived from the identity $1 = \frac 1 \phi + \frac 1 {\phi^2}$). But since $|D'| > \frac n \phi$  by hypothesis, we  deduce that
      $\#a \geq \frac{|D'|}{\phi^2}$.
      The numbers $\#a$ and $\#na$ play symmetric roles in this claim, so we can similarly infer that $\#na \geq  \frac{|D'|}{\phi^2}$. Thereby we have proved that $\score_{D'}(\vertex v)\geq  \frac{|D'|}{\phi^2}$.

      As for the second part of the claim, \golden\ will run Step~\ref{item:one_move} and query $\vertex v$ or a vertex with a larger score. The searching area is thus reduced to at most $\frac{n}{\phi^2}$ vertices.
   \end{proof}

   Let $\vertex z$ be the first vertex queried by \golden. Note that $\vertex z$ belongs to $B^\ge$ or $B^<$. In either case, we are going to show that if (\ref{item_1_lem_up}) fails, then the hypotheses of the above claim are satisfied,
   and consequently (\ref{item_2_lem_up}) holds.

   \begin{figure}[ht]
      \centering
      \begin{subfigure}[t]{.45\textwidth}
         \centering
         \scalebox{0.75}{
            \begin{tikzpicture}
	[scale=.6, every node/.style={circle,draw,scale=0.8, text=black, text opacity=1, draw opacity=1},
	minimum size=0.75cm,
	]

	\pgfdeclarepatternformonly{south west lines}{\pgfqpoint{-0pt}{-0pt}}{\pgfqpoint{3pt}{3pt}}{\pgfqpoint{3pt}{3pt}}{
		\pgfsetlinewidth{0.4pt}
		\pgfpathmoveto{\pgfqpoint{0pt}{0pt}}
		\pgfpathlineto{\pgfqpoint{3pt}{3pt}}
		\pgfpathmoveto{\pgfqpoint{2.8pt}{-.2pt}}
		\pgfpathlineto{\pgfqpoint{3.2pt}{.2pt}}
		\pgfpathmoveto{\pgfqpoint{-.2pt}{2.8pt}}
		\pgfpathlineto{\pgfqpoint{.2pt}{3.2pt}}
		\pgfusepath{stroke}
	}

	\draw[pattern=south west lines, pattern color=gray] (4.8,1.5) ellipse (2.3cm and 1.8cm);
	\draw[pattern=south west lines, pattern color=gray] (4.8,-1.5) ellipse (2.3cm and 1.8cm);
	
	\node[] (z) at (9,0) {\LARGE$z$};
	\node[fill=white] (x) at (7, 1.5) {\LARGE$x$};
	\node[fill=white] (y) at (7, -1.5) {\LARGE$y$};
	
	\path[->,color=black]
	(x) edge (z)
	(y) edge (z)
	;

\end{tikzpicture}
         }
         \caption{First queried vertex $\vertex z$ is in $B^\geq$, and $\vertex z$ is bugged.}
      \end{subfigure}
      \begin{subfigure}[t]{.45\textwidth}
         \centering
         \scalebox{0.7}{
            \begin{tikzpicture}
	[scale=.6, every node/.style={circle,draw,scale=0.8, text=black, text opacity=1, draw opacity=1},
	minimum size=1cm,
	]

	\begin{scope}
		\clip (-7,-10.5)rectangle(-2,-7);
		\draw[pattern=south west lines, pattern color=gray] (-4.5,-8.5) ellipse (2.5cm and 1.7cm);
	\end{scope}
	\draw[pattern=south west lines, pattern color=gray,opacity=0.5] (-4.5,-5.5) ellipse (2.5cm and 1.7cm);
	\draw[pattern=south west lines, pattern color=gray] (2.5,-7) ellipse (2.5cm and 1.7cm);
	
	\node[fill=white] (z) at (0,-7) {\LARGE$c$};
	\node[fill=white,draw opacity=0.5,text opacity=0.5] (x) at (-2, -5.5) {\LARGE$z$};
	\node[fill=white] (y) at (-2, -8.5) {\LARGE$z'$};

	\draw[color=black,opacity=0.7] (-5.67,-7) arc
    [
        start angle=242,
        end angle=298,
        x radius=2.5cm,
        y radius =1.7cm
    ] ;

	\path[->,color=black]
		(x) edge (z)
		(y) edge (z)
	;

\end{tikzpicture}
         }
         \captionsetup{width=.9\linewidth}
         \caption{First queried vertex $\vertex z$ is in $B^<$, and $\vertex z$ is clean.}
      \end{subfigure}
      \caption{Possible digraphs after the first  query of \golden.}
      \label{fig:golden_cases}
   \end{figure}
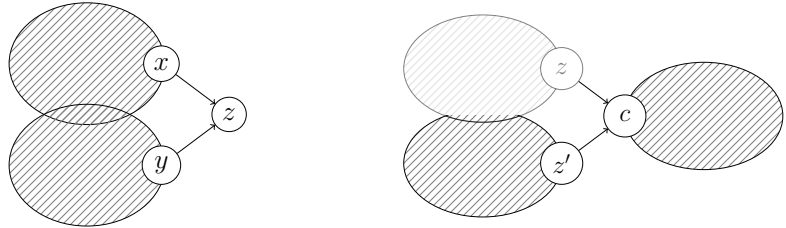

   \noindent\textbf{Case 1: $\vertex z \in B^\ge$.}
   Since $\vertex z \in V^\ge$ by hypothesis, its score corresponds to the number of non-ancestors, thus $\score_D(\vertex z) = n - \ancestors z \le \frac n {\phi^2}$.
   If $\vertex z$ is clean, only the non-ancestors of $\vertex z$ remain after one step of \golden, which is fewer than $\frac n {\phi^2}$ vertices, and \eqref{item_1_lem_up} holds.

   Suppose now that $\vertex z$ is bugged. Let $D'$ denote the DAG obtained from $D$ after querying $\vertex z$ (constisting in only the ancestors of $\vertex z$, which becomes the new marked sink, see Figure~\ref{fig:golden_cases}(a)).
Note that $D'$ has $\ancestors z= n - \score_D(\vertex z)$ vertices, which is greater than $n - \frac n {\phi^2} =  \frac n \phi$. 
   
   Let $\vertex x$ and $\vertex y$ be the parents of $\vertex z$, and assume that $\ancestors{x} \geq \ancestors{y}$.
   Since $\ancestors x = \score_D(\vertex x) \leq \frac{n}{\phi^2}$, vertex $\vertex x$ has at most $\frac{n}{\phi^2}$ ancestors in $D'$. Moreover, vertex $\vertex x$ has $\ancestors z - \ancestors x$ non-ancestors in $D'$.
   But since a non-ancestor of $\vertex x$ in $D'$ is either $\vertex z$ or an ancestor of $\vertex y$, we have
   \[\ancestors z - \ancestors x \leq \ancestors y + 1 \]
   and because
   $\ancestors y \leq  \ancestors x \leq \score_D(\vertex z) = n - \ancestors z$, we deduce
   \[\ancestors z - \ancestors x \leq \ancestors x + 1 \leq n - \ancestors z + 1,\]
   hence
   \[3 (\ancestors z - \ancestors x)  \leq (\ancestors z - \ancestors x) + (\ancestors x + 1) + (n - \ancestors z + 1) = n + 2. \]
   Thus $\vertex x$ has at most $\frac{n + 2} 3$ non-ancestors, which is smaller than $\frac n {\phi^2}$ whenever $n \geq 14$. The hypotheses of Claim~\ref{claim:defi} hold,
   which concludes Case~1.

   \noindent\textbf{Case 2: $\vertex z \in B^<$.}
   Since $\vertex z$ belongs to $B^<$, it has a child in $B^\ge$, denoted by $\vertex c$.
   By assumption, $\vertex c$ has $2$ parents, let $\vertex{z'}$ be the other parent of $\vertex c$ (also in $B^<$).

   If $\vertex z$ is bugged, then there remain at most $\score(\vertex z)=\ancestors{z} < \frac n {\phi^2}$ vertices, which makes \eqref{item_1_lem_up} true.
   So assume that the queried vertex $\vertex z$ is clean, and after one step of $\golden$, we end up with a new DAG $D'$, obtained from $D$ by removing all ancestors of $\vertex z$ (see Figure~\ref{fig:golden_cases}(b)). Note that $D'$ has $n-\ancestors{z}$ vertices, which is greater than $n - \score_D(\vertex z) >\frac n \phi$.

   Note that the non ancestors of $\vertex c$ in $D'$ are exactly the non ancestors of $\vertex c$ in $D$, and thus $\vertex c$ has no more than $\frac{n}{\phi^2}$ non ancestors in $D'$.
   Thus if $\vertex {c}$ has no more than $\frac n {\phi^2}$ ancestors in $D'$, $\vertex c$
   satisfies the hypotheses of Claim~\ref{claim:defi} and the result holds.

   Assume that $\vertex{c}$ has more than $\frac n {\phi^2}$ ancestors in $D'$, which means that
   \[ \ancestors{c} - \ancestors z \geq \frac n {\phi^2} \geq n - \ancestors{c}.\]
   Since  $\frac n {\phi^2}$ cannot be an integer, $\ancestors{c} - \ancestors z \geq  n - \ancestors{c} + 1$.

   Moreover because $n - \ancestors{c} =  \score_D(\vertex  c)  \leq  \score_D(\vertex z)  = \ancestors z $, we have
   \[ 3 (n - \ancestors{c} + 1)  \leq  (n - \ancestors{c} + 1) +  (\ancestors z + 1)  + (\ancestors{c} - \ancestors z)  = n + 2.\]

   So the number of non-ancestors of $\vertex {z'}$ in $D'$, that is $n - \ancestors{c} + 1$, is less than $\frac{n+2} 3 < \frac n {\phi^2}$ when $n \geq 14$.
   Moreover, the number of ancestor of $\vertex{z'}$ in $D'$ is no more than its number of ancestors in $D$, which satisfies $|z'| = \score_D(\vertex{z'}) \le \frac{n}{\phi^2}$.
   So $\vertex{z'}$ satisfies the hypotheses of Claim~\ref{claim:defi} and the result holds.
\end{proof}

We can now establish the upper bound for the overall number of \golden\ queries.

\begin{proof}[Proof of Theorem~\ref{theo:bound_golden_binary_dag}]
   We prove by induction on $n$ that for any binary DAG with $n$ vertices, the number of \golden\ queries is at most $\log_\phi(n) + 1$.

   The base case contains all graphs up to  $n = 13$. To prove it, we use Lemma~\ref{lem:score_B_sets} to show that \golden\ eliminates at least $\frac {n-1} 3$ vertices at the first step. So the maximal number of queries for  size $n$ is bounded by one plus the maximal number of queries for size $n - \left\lceil \frac{n - 1} 3\right\rceil$. The first values are given by the sequence $F(n)$ in Table~\ref{table:Fn}.
   We remark that the second row is bounded above by the last row. So the property holds for $n \leq 13$.

   As for $n \geq 14$, the induction is straightforward by Lemma~\ref{lem:upper_bound_golden}. Indeed, if item~\ref{item_1_lem_up} is satisfied, then the number of \golden\ queries is bounded by $1 + \left(\log_\phi\left(\frac n \phi\right) + 1 \right) = 1 + \log_\phi(n)$. If item~\ref{item_2_lem_up} is satisfied, then it is also bounded by $2 + \left(\log_\phi\left(\frac n {\phi^2}\right) + 1 \right) = 1 + \log_\phi(n)$.
\end{proof}

\subsection{Fibonacci trees}

In order to prove the sharpness of the constant $\frac{1}{\log_2(\phi)}$ from Corollary~\ref{cor:approx_golden}, we define a new family of digraphs:
the \emph{Fibonacci trees}.

\begin{definition}[Fibonacci trees]
   For $i \geq 1$, the $i$-th Fibonacci tree $F_i$ is defined as followed.

   $F_1$ is a single vertex, $F_2$ is an arc between two vertices, and for $i\ge 2$, $F_{i+1}$ is a sink with two parents, one being the sink of a tree $F_i$ and the other the sink of a tree $F_{i-1}$.
\end{definition}

Figure~\ref{fig:first_fibonacci_trees} shows the six first Fibonacci trees.
Noting $|F_i|$ the number of vertices of the $i$-th Fibonacci tree $F_i$, we have by construction, that
\[|F_i| = |F_{i-1}| + |F_{i-2}| + 1, \quad |F_1| = 1 \quad \textrm{ and } \quad |F_2| = 2.\]
This recurrence has for solution $|F_i| = fib_{i+2} - 1$, where $fib_i$ is the $i$-th Fibonacci number.

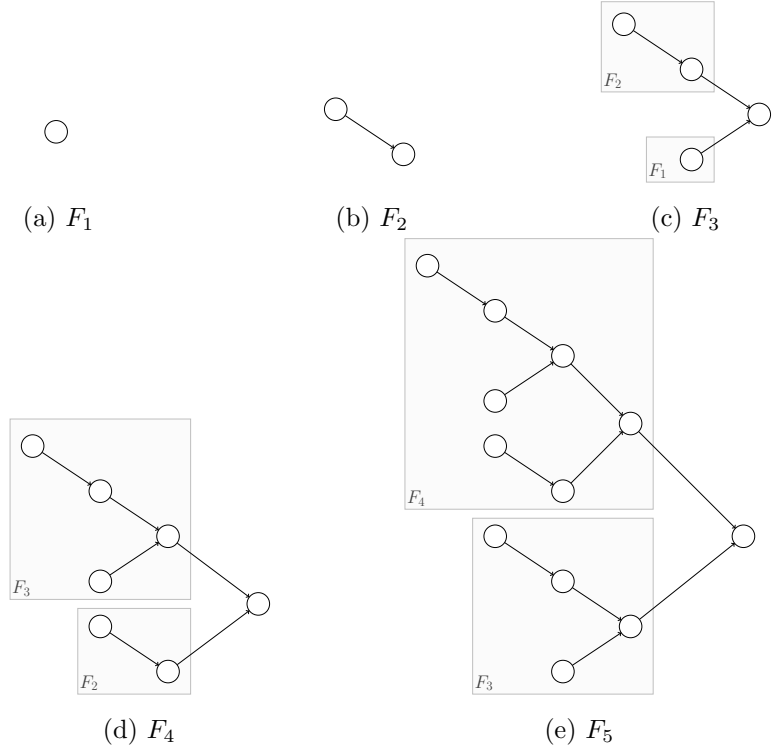
\begin{figure}[ht]
   \centering
   \begin{subfigure}[t]{.3\textwidth}
      \centering
      \scalebox{0.5}{
         \begin{tikzpicture}
    [
        scale=.6,
        auto=left,
        every node/.style={circle,draw,scale=0.8},
        minimum size=0.75cm,
        text opacity=0,
    ]

    \node (51) at (-2,10) {};

    \node[draw=none] (empty2) at (-2,8) {};

\end{tikzpicture}
      }
      \caption{$F_1$}
   \end{subfigure}
   \begin{subfigure}[t]{.3\textwidth}
      \centering
      \scalebox{0.5}{
         \begin{tikzpicture}
    [
        scale=.6,
        auto=left,
        every node/.style={circle,draw,scale=0.8,fill=white},
        minimum size=0.75cm,
        text opacity=0,
    ]

    \node (51) at (-2,10) {};
    \node (52) at (1,8) {};

    \path[->]
        (51) edge (52)
    ;

    \node[draw=none] (empty3) at (-2,7) {};

\end{tikzpicture}
      }
      \caption{$F_2$}
   \end{subfigure}
   \begin{subfigure}[t]{.3\textwidth}
      \centering
      \scalebox{0.5}{
         \begin{tikzpicture}
    [
        scale=.6,
        auto=left,
        every node/.style={circle,draw,scale=0.8,fill=white},
        minimum size=0.75cm,
        text opacity=0,
    ]

    \draw[draw=black,color=lightgray,fill,opacity=0.06,text opacity=1,draw opacity=1] (-3,7) rectangle ++(5,4);
    \draw[draw=black,color=lightgray,fill,opacity=0.06,text opacity=1,draw opacity=1] (-1,3) rectangle ++(3,2);

    \node (51) at (-2,10) {};
    \node (52) at (1,8) {};
    \node (53) at (4,6) {};
    \node (54) at (1,4) {};

    \path[->]
        (51) edge (52)
        (52) edge (53)
        (54) edge (53)
    ;

    \node[draw=none,fill=none,text opacity=1,color=darkgray] (text1) at (-2.5, 7.5) {\LARGE$F_2$};
    \node[draw=none,fill=none,text opacity=1,color=darkgray] (text2) at (-0.5, 3.5) {\LARGE$F_1$};

\end{tikzpicture}
      }
      \caption{$F_3$}
   \end{subfigure}

   \begin{subfigure}[t]{.4\textwidth}
      \centering
      \scalebox{0.5}{
         \begin{tikzpicture}
    [
        scale=.6,
        auto=left,
        every node/.style={circle,draw,scale=0.8,fill=white},
        minimum size=0.75cm,
        text opacity=0,
    ]

    \draw[draw=black,color=lightgray,fill,opacity=0.06,text opacity=1,draw opacity=1] (-3,3.2) rectangle ++(8,8);
    \draw[draw=black,color=lightgray,fill,opacity=0.06,text opacity=1,draw opacity=1] (0,-1) rectangle ++(5,3.8);

    \node (51) at (-2,10) {};
    \node (52) at (1,8) {};
    \node (53) at (4,6) {};
    \node (54) at (1,4) {};
    \node (55) at (8,3) {};
    \node (56) at (4,0) {};
    \node (57) at (1,2) {};

    \path[->]
        (51) edge (52)
        (52) edge (53)
        (54) edge (53)
        (53) edge (55)
        (56) edge (55)
        (57) edge (56)
    ;

    \node[draw=none,fill=none,text opacity=1,color=darkgray] (text1) at (-2.5, 3.7) {\LARGE$F_3$};
    \node[draw=none,fill=none,text opacity=1,color=darkgray] (text2) at (0.5, -0.5) {\LARGE$F_2$};

\end{tikzpicture}
      }
      \caption{$F_4$}
   \end{subfigure}
   \begin{subfigure}[t]{.45\textwidth}
      \centering
      \scalebox{0.5}{
         \begin{tikzpicture}
    [
        scale=.6,
        auto=left,
        every node/.style={circle,draw,scale=0.8,fill=white},
        minimum size=0.75cm,
        text opacity=0,
    ]

    \draw[draw=black,color=lightgray,fill,opacity=0.06,text opacity=1,draw opacity=1] (-3,-0.8) rectangle ++(11,12);
    \draw[draw=black,color=lightgray,fill,opacity=0.06,text opacity=1,draw opacity=1] (0,-9) rectangle ++(8,7.8);

    \node (51) at (-2,10) {};
    \node (52) at (1,8) {};
    \node (53) at (4,6) {};
    \node (54) at (1,4) {};
    \node (55) at (7,3) {};
    \node (56) at (4,0) {};
    \node (57) at (1,2) {};
    \node (58) at (12,-2) {};
    \node (59) at (7,-6) {};
    \node (510) at (4,-4) {};
    \node (511) at (1,-2) {};
    \node (512) at (4,-8) {};

    \path[->]
        (51) edge (52)
        (52) edge (53)
        (54) edge (53)
        (53) edge (55)
        (56) edge (55)
        (57) edge (56)
        (55) edge (58)
        (59) edge (58)
        (510) edge (59)
        (511) edge (510)
        (512) edge (59)
    ;

    \node[draw=none,fill=none,text opacity=1,color=darkgray] (text1) at (-2.5, -0.3) {\LARGE$F_4$};
    \node[draw=none,fill=none,text opacity=1,color=darkgray] (text2) at (0.5, -8.5) {\LARGE$F_3$};

\end{tikzpicture}
      }
      \caption{$F_5$}
   \end{subfigure}
   \caption{First Fibonacci trees.}
      \label{fig:first_fibonacci_trees}
\end{figure}

\medskip

We can establish an optimal strategy for the Fibonacci trees.

\begin{theorem}
   For any $i \geq 1$, the optimal strategy for the $i$-th Fibonacci tree $F_i$ uses $i-1$ queries in the worst-case scenario.
   \label{theo:fibonacci}
\end{theorem}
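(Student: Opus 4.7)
The plan is to prove the theorem by induction on $i$, with the base cases $i=1$ (a single vertex, requiring no queries) and $i=2$ (a single arc, requiring one query) being immediate. For the inductive step, I treat the upper and lower bounds separately. For the upper bound $f(F_i) \leq i-1$, I would exhibit an explicit recursive strategy: query the sink $s_{i-1}$ of the $F_{i-1}$-subtree first. A ``bugged'' response restricts the search to the induced sub-DAG on $V(F_{i-1})$, resolved in $i-2$ further queries by the IH. A ``clean'' response removes $V(F_{i-1})$, leaving an induced DAG on $V(F_{i-2}) \cup \{s_i\}$ that is isomorphic to $F_{i-2}$ with an extra sink $s_i$ appended; an analogous recursive strategy (query the sink of the $F_{i-2}$-subtree next, etc.) completes within $i-2$ further queries, verified by a parallel induction on this ``Fibonacci with added sink'' family. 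The total is $i-1$.

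For the lower bound $f(F_i) \geq i-1$, I take an arbitrary strategy $S$ with first query $q$ and show its height is at least $i-1$. The structural property I exploit is that $V(F_{i-1})$ and $V(F_{i-2})$ are disjoint down-closed subsets of $V(F_i)$: ancestors of any vertex in either subset remain in that subset. Combined with a standard projection/monotonicity principle --- any strategy for a candidate set $R$ in a DAG $G$ has height at least the optimum of $G[W]$ whenever $W \subseteq R$ is down-closed with ancestry preserved --- I case-analyse on $q$. If $q = s_{i-1}$, the bugged branch is exactly $F_{i-1}$, forcing $i-2$ further queries by IH. If $q$ lies in $V(F_{i-2}) \cup \{s_i\}$, the clean branch preserves $V(F_{i-1})$ as a down-closed sub-DAG, so projection forces $\geq f(F_{i-1}) = i-2$ further queries.

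The main obstacle is the remaining case $q \in V(F_{i-1}) \setminus \{s_{i-1}\}$, where neither branch obviously retains a full $F_{i-1}$ or $F_{i-2}$ sub-DAG. My plan is to show by a refined structural argument that the clean branch $D_q^-$ still contains $V(F_{i-2}) \cup \{s_i\}$ as an ancestry-preserved sub-DAG (structurally $F_{i-2}^+$), together with the leftover vertices of $V(F_{i-1}) \setminus \text{anc}(q)$; the combined Fibonacci-like complexity should force $f(D_q^-) \geq i-2$. If instead the bugged branch $D_q^+ \subseteq F_{i-1}$ is already large enough to have $f \geq i-2$, that branch alone closes the case by an inductive application to sub-DAGs of $F_{i-1}$. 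I anticipate needing a strengthened induction hypothesis, tracking $f$ jointly on Fibonacci trees and their ``plus-a-sink'' variants $F_k^+$, in order to carry the right inductive leverage through this subcase. Once $\max(f(D_q^+), f(D_q^-)) \geq i-2$ is established in every subcase, we conclude $f(F_i) \geq i-1$, completing the induction.
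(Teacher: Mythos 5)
Your upper bound is essentially the paper's own argument (query the sink of the $F_{i-1}$-component and run a parallel induction on the ``Fibonacci tree plus an appended sink'' family), and that half is fine. The genuine gap is in the lower bound, precisely in the case you flag as the main obstacle: $q \in V(F_{i-1}) \setminus \{s_{i-1}\}$ with the adversary answering ``clean''. The remaining graph is then neither a Fibonacci tree nor a Fibonacci-tree-plus-sink, so an induction hypothesis that only tracks $f$ on $F_k$ and $F_k^+$ has nothing to apply to; and the quantity you would want to fall back on is too weak. Concretely, $f(F_k^+) \geq k$ is \emph{false}: in $F_4^+$ (8 vertices), querying the sink of the $F_3$-component splits the candidates into a claw and a directed path on 4 vertices, each resolvable in 2 further queries, so $f(F_4^+) = 3 < 4$. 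Hence the surviving copy of $F_{i-2}^+$ alone cannot force $i-2$ further queries; the leftover fragment of $F_{i-1}$ is indispensable, but it depends on $q$ and is not itself an $F_k$ or $F_k^+$, so your proposed strengthening cannot close the induction.

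The paper escapes this by proving a genuinely more general statement (Claim~\ref{claim:opt_fibonacci}): \emph{any} tree containing two disjoint ancestor-closed copies of $F_k$ and $F_{k+1}$ with incomparable sinks requires at least $k+1$ queries. The induction then survives the problematic case because of one structural fact missing from your plan: $F_{k+1}$ contains \emph{two} disjoint copies of $F_{k-1}$ (one as its own $F_{k-1}$-component, one inside its $F_k$-component), and the ancestor set of any non-root vertex of $F_{k+1}$ lies entirely inside one of these two components, so at least one copy of $F_{k-1}$ survives the clean branch untouched; together with the other, untouched copy of $F_k$, the remaining tree again satisfies the generalized hypothesis with parameter $k-1$. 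You should replace your strengthened hypothesis on $\{F_k, F_k^+\}$ by a statement of this kind. Separately, the ``projection/monotonicity principle'' you invoke is false for general DAGs as stated (a query outside an ancestor-closed $W$ can cut $W$ along a non-principal down-set and be strictly more informative than any query available in $G[W]$), so if you keep it you must prove the tree-specific version; the paper's generalized claim avoids needing it altogether.
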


We decompose this proof in two claims.

\begin{claim}
   For $i \geq 1$, we define $F'_i$ as a sink with one parent which is the sink of an $i$-th Fibonacci tree $F_i$.
   For any $F_i$ and $F'_{i-1}$ with $i \geq 2$, there exists a strategy which finds the \faulty\ in at most $i-1$ queries.
   \label{claim:fibonacci_tree_strategies}
\end{claim}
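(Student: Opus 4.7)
The plan is to prove the claim by strong induction on $i$, treating $F_i$ and $F'_{i-1}$ simultaneously.

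For the base case $i = 2$, both $F_2$ and $F'_1$ consist of a single directed arc on two vertices. In either case, querying the non-sink vertex immediately determines whether the faulty commit is that vertex (if bugged) or the sink (if clean), so $1 = i-1$ query suffices.

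For the inductive step ($i \geq 3$), the same key idea handles both graphs: query the sink of the $F_{i-1}$ substructure. In $F_i$, call this vertex $\vertex{a}$; by construction the ancestors of $\vertex{a}$ in $F_i$ induce $F_{i-1}$. If $\vertex{a}$ is bugged, we recurse on $F_{i-1}$, which by the induction hypothesis is solved in at most $i-2$ further queries. If $\vertex{a}$ is clean, deleting its ancestors leaves the $F_{i-2}$ substructure rooted at the other parent $\vertex{b}$ of the global sink $\vertex{c}$, together with $\vertex{c}$ and the single arc $\vertex{b} \to \vertex{c}$; this is exactly $F'_{i-2}$, which by the induction hypothesis (applied at index $i-1$) is also solvable in $i-2$ queries. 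Either branch totals $1 + (i-2) = i-1$.

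For $F'_{i-1}$, let $\vertex{c'}$ denote the extra sink and let $\vertex{a}$ be its unique parent, namely the sink of the $F_{i-1}$ substructure. Query $\vertex{a}$: if bugged, the residual graph is $F_{i-1}$, solved in $i-2$ queries by induction; if clean, removing $\vertex{a}$ and its ancestors leaves only $\{\vertex{c'}\}$, so $\vertex{c'}$ is identified as the faulty commit with no further queries. Either outcome uses at most $i-1$ queries in total.

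I do not anticipate any serious obstacle. The whole proof hinges on identifying correctly the residual graph after the first query, and the recursive Fibonacci structure makes this immediate. The only reason to package $F_i$ and $F'_{i-1}$ together in a single induction is that the "clean" branch for $F_i$ produces an $F'$-type graph, while conversely the "bugged" branch for $F'_{i-1}$ produces an ordinary Fibonacci tree; the joint hypothesis closes the induction cleanly.
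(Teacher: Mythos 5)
Your proof is correct and follows essentially the same route as the paper's: a joint induction on $F_i$ and $F'_{i-1}$, querying the sink of the larger Fibonacci substructure, with the ``clean'' branch of $F_i$ reducing to an $F'$-type graph and the ``bugged'' branch of $F'_{i-1}$ reducing to an ordinary Fibonacci tree. The only difference is cosmetic (you phrase the step as $i-1 \to i$ where the paper writes $i \to i+1$).
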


\begin{proof}
   We prove the claim by induction on $i \geq 2$.
   The base case $i=2$ is obvious since $F_2$ and $F'_1$ are each composed of two vertices including the marked bugged vertex, and that in this case one query is required to discover the \faulty.
   Suppose by induction we have a strategy for $F'_{i-1}$ and $F_{i}$ with $i-1$ queries.

   To find the \faulty\ in $F_{i+1}$, we query the sink of the subtree $F_{i}$.
   If this sink is bugged, then we continue with our strategy on $F_{i}$. If this sink is clean, then by removing this subtree from $F_{i+1}$, we recognize $F'_{i-1}$ and use the corresponding strategy. In both cases, we have performed no more than $1 + (i-1)$ queries.

   To find the \faulty\ in $F'_{i}$, we query the parent of the sink. If it is bugged, then we continue with our strategy on the resulting $F_{i}$. Otherwise, the sink of $F'_{i}$ is the \faulty. This strategy uses at most $1 + (i-1)$ queries, as required.
\end{proof}

\begin{claim} 
   If $T$ is a tree containing two subtrees isomorphic to Fibonacci trees $F_k$ and $F_{k+1}$ and whose sinks are non-ancestors of each other (cf Figure~\ref{fig:fibonacci_proof} top), then, for any strategy searching for the \faulty\ in $T$, there exists a vertex $\vertex v$ in $T$ such that this strategy uses at least $k+1$ queries to identify $\vertex v$ as the \faulty.
   \label{claim:opt_fibonacci}
\end{claim}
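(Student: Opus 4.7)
I would prove this by induction on $k$, via an adversary argument. For the base case $k=1$, the hypothesis forces $T$ to contain at least three candidate vertices (the unique vertex of $F_1$ and the two vertices of $F_2$, distinct because the sinks are non-ancestors of each other); since a binary strategy tree with three or more leaves has height at least $\lceil \log_2 3 \rceil = 2 = k+1$, the claim follows for some candidate vertex.

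For the inductive step, assume the claim holds for $k-1$. Let $S_k \cong F_k$ and $S_{k+1} \cong F_{k+1}$ denote the two disjoint subtrees of $T$, with sinks $\vertex{s_k}$ and $\vertex{s_{k+1}}$; by the recursive definition of $F_{k+1}$, decompose $S_{k+1} = A \cup B \cup \{\vertex{s_{k+1}}\}$, where $A \cong F_k$ has sink $\vertex{a}$ and $B \cong F_{k-1}$ has sink $\vertex{b}$, both being the parents of $\vertex{s_{k+1}}$. Consider an arbitrary strategy and its first query $\vertex{q}$. I would exhibit, depending on the location of $\vertex{q}$, an adversarial response (``bugged'' or ``clean'') that preserves in the remaining DAG of candidates a pair of disjoint subtrees isomorphic to $F_{k-1}$ and $F_k$ whose sinks are non-ancestors of each other, then apply the induction hypothesis to obtain $k$ further queries, totalling $k+1$.

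The case analysis goes as follows. If $\vertex{q}$ lies in $S_k$ or outside $S_k \cup S_{k+1}$, the adversary answers ``clean'': the subtree $S_{k+1}$ is left intact and we use the pair $(B, A)$ inside it. If $\vertex{q} = \vertex{s_{k+1}}$, the adversary answers ``bugged'': the candidate set shrinks to $S_{k+1}$, and again the pair $(B, A)$ applies. If $\vertex{q} \in A$, the adversary answers ``clean'': $S_k$ and $B$ survive intact, giving the pair $(B, S_k) \cong (F_{k-1}, F_k)$. If $\vertex{q} \in B$, the adversary answers ``clean'': $S_k$ and $A$ survive intact, and we pair $S_k$ with the $F_{k-1}$ sub-subtree $A' \subseteq A$, which exists by the recursive structure of $A \cong F_k$. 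The main subtlety will be verifying, in each subcase, that the two newly chosen sinks are non-ancestors of each other in the reduced graph; this relies on the disjointness of $S_k$ and $S_{k+1}$ in $T$ combined with the recursive construction of $F_{k+1}$, which guarantees that the arms $A$ and $B$ are independent substructures within $S_{k+1}$. Once this invariant is maintained, the induction closes with no further work.
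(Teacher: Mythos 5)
Your overall strategy — induction on $k$ with an adversary choosing the answer to the first query so that a disjoint pair $(F_{k-1},F_k)$ with mutually non-ancestral sinks survives — is exactly the paper's approach, and your subcases for $\vertex q = \vertex{s_{k+1}}$, $\vertex q \in A$, $\vertex q \in B$ and $\vertex q \in S_k$ all work. However, your first case contains a genuine error. When $\vertex q$ lies outside $S_k \cup S_{k+1}$, it may be a \emph{descendant} of $\vertex{s_{k+1}}$ (for instance the global sink of $T$, or any vertex on the unique descending path from $\vertex{s_{k+1}}$). Answering ``clean'' then removes all ancestors of $\vertex q$, which include $\vertex{s_{k+1}}$ and hence the whole of $S_{k+1}$ — and possibly all of $S_k$ as well, since in a tree with a single sink the two subtrees always have a common descendant. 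So the assertion that ``$S_{k+1}$ is left intact'' fails, and in the worst subcase nothing usable survives, breaking the induction.

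The fix is the missing third adversary answer: when $\vertex{s_{k+1}}$ is an ancestor of $\vertex q$ (which covers both $\vertex q = \vertex{s_{k+1}}$ and the problematic downstream queries), the adversary must answer ``bugged''. The surviving candidates are then all ancestors of $\vertex q$, which contain the whole of $S_{k+1} \cong F_{k+1}$ and therefore disjoint copies of $F_k$ and $F_{k-1}$ (its two arms $A$ and $B$), to which the induction hypothesis applies. This is precisely how the paper partitions the cases: (1) $\vertex{s_{k+1}}$ is an ancestor of $\vertex q$, answer bugged; (2) $\vertex{s_{k+1}}$ is not an ancestor of $\vertex q$ and $\vertex q \notin S_{k+1}$, answer clean; (3) $\vertex q \in S_{k+1}\setminus\{\vertex{s_{k+1}}\}$, answer clean. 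With your case (i) split along the line ``is $\vertex{s_{k+1}}$ an ancestor of $\vertex q$?'' and the answer switched to ``bugged'' on the affirmative side, your argument closes correctly.
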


\begin{figure}[ht]
   \centering
   \includegraphics[width=0.9\textwidth]{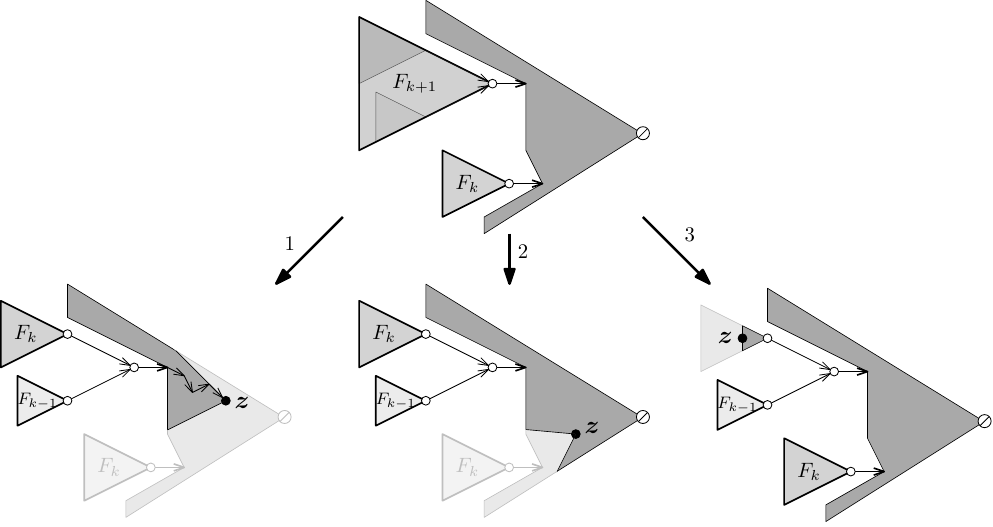}
   \caption{Illustration of the proof of Claim~\ref{claim:opt_fibonacci}.}
   \label{fig:fibonacci_proof}
\end{figure}

\begin{proof}
   We prove this claim by induction on $k$. For $k=1$, $T$ must contain at least 4 vertices, so by Proposition~\ref{prop:bounding_nb_queries}, one must query at least 2 vertices.

   Now suppose the claim statement true for a positive integer $k - 1$, and consider a strategy for the Regression Search Problem on a tree $T$ strictly containing $F_{k+1}$ and $F_{k}$. Let $\vertex z$ be the first query of this strategy. Let us investigate every possibility for $\vertex z$ (the reader can refer to Figure~\ref{fig:fibonacci_proof} for an illustration):
   \begin{enumerate}
      \item \textbf{The root of the subtree $F_{k+1}$ is an ancestor of $\vertex z$.} Then we force the \faulty\ to be an ancestor of $\vertex z$ (i.e. $\vertex z$ is bugged). Then after querying $\vertex z$, there remains all ancestors of $\vertex z$, which contains $F_{k+1}$, which, by definition of Fibonacci trees, strictly contains $F_{k}$ and $F_{k-1}$. By induction hypothesis, we need to query $k$ extra vertices to find the \faulty.
      \item \textbf{The root of the subtree $F_{k+1}$ is not an ancestor of $\vertex z$ and $\vertex z$ is not in the subtree $F_{k+1}$.} Here the \faulty\ will be a non-ancestor of $\vertex z$ (i.e. $\vertex z$ is not bugged). Like in the previous case, the remaining digraph will include $F_{k+1}$, hence copies of $F_k$ and $F_{k-1}$. We then use the induction hypothesis.
      \item  \textbf{$\vertex z$ is in the subtree $F_{k+1}$, but it is not its root. } We set $\vertex z$ to be clean so that the \faulty\ will be among the non-ancestors of $\vertex z$. The subtree $F_{k+1}$ contains two disjoint copies of $F_{k-1}$,
      one of which is in the non-ancestors of $\vertex z$. By hypothesis, $T$ includes also another copy of $F_{k}$. So the query of $\vertex z$ leads to a tree containing $F_k$ and $F_{k-1}$: the induction hypothesis indicates that we need $k$ other queries.
   \end{enumerate}
   For each of these three possibilities, the strategy uses in total $k+1$ queries, which concludes the induction.
\end{proof}

\begin{proof}[Conclusion of the proof of Theorem~\ref{theo:fibonacci}.] 
   The Fibonacci tree $F_{i}$ contains disjoint copies of $F_{i-1}$ and $F_{i-2}$. By Claim~\ref{claim:opt_fibonacci}, any strategy, in particular an optimal one, uses at least $i-1$ queries to find the \faulty\ in $F_i$ in the worst-case scenario. The optimal number of queries is then exactly $i-1$, because  by Claim~\ref{claim:fibonacci_tree_strategies}, there exists a strategy with that many queries.
\end{proof}

The first consequence of Theorem~\ref{theo:fibonacci} is that the upper bound $\lceil \log_{\phi}(n) \rceil$ from Corollary~\ref{cor:bounds_opt} is asymptotically sharp:

\begin{corollary}
   Any optimal strategy solving the Regression Search Problem for any Fibonacci tree of size $n \geq 7$ uses $\lceil \log_{\phi}(n) \rceil - 2$ queries in the worst-case scenario.
   \label{cor:sharpness_fibonacci}
\end{corollary}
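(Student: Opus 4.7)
The plan is to combine Theorem~\ref{theo:fibonacci} with Binet's formula. By construction $|F_i| = fib_{i+2} - 1$, and Theorem~\ref{theo:fibonacci} asserts that the optimal number of queries on $F_i$ equals $i-1$. Thus, writing $n = |F_i|$, it suffices to establish $\lceil \log_\phi n \rceil = i + 1$ for every $i \geq 4$ (this is precisely the regime $n \geq 7$, since $|F_4| = fib_6 - 1 = 7$).

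I would introduce $\psi = (1-\sqrt{5})/2 = -1/\phi$ so that Binet's identity reads $fib_k = (\phi^k - \psi^k)/\sqrt{5}$. Because $\phi^k$ is irrational for every $k \geq 1$, $\log_\phi n$ is never an integer; hence $\lceil \log_\phi n \rceil = i+1$ is equivalent to the two strict inequalities $\phi^i < |F_i| < \phi^{i+1}$.

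The upper bound is essentially free: after multiplying through by $\sqrt{5}$ and using the identity $\phi - \sqrt{5} = \psi$, it reduces to $\phi^{i+1}\psi < \sqrt{5} + \psi^{i+2}$, whose left-hand side is negative (since $\psi < 0$) while the right-hand side is positive (since $\sqrt{5} > |\psi|^{i+2}$ for $i\geq 0$). The lower bound $\phi^i < |F_i|$ rearranges to $\phi^i(\phi^2 - \sqrt{5}) > \sqrt{5} + \psi^{i+2}$, and here $\phi^2 - \sqrt{5} = (3 - \sqrt{5})/2 > 0$. This is the only substantive step: a direct numerical check at $i=4$ (where $\phi^4(\phi^2 - \sqrt{5}) \approx 2.62$ beats $\sqrt{5} + \psi^6 \approx 2.29$) confirms the inequality, and it propagates to all larger $i$ because the left-hand side grows by a factor $\phi > 1$ at each step while $|\psi^{i+2}|$ tends monotonically to $0$.

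The only potential obstacle is this routine verification pinning down exactly $i=4$ as the threshold at which the lower bound first becomes valid; everything else follows mechanically. Once both inequalities are in place, $\lceil \log_\phi n \rceil - 2 = i - 1$, matching the optimal count given by Theorem~\ref{theo:fibonacci}, which is the claim.
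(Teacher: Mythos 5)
Your proposal is correct and follows essentially the same route as the paper: both combine Theorem~\ref{theo:fibonacci} with Binet's formula and a numerical check at the threshold $i=4$ to show $\lceil \log_\phi(|F_i|)\rceil = i+1$. Your bracketing $\phi^i < |F_i| < \phi^{i+1}$ (noting $\phi^2-\sqrt{5}=\psi^2$, so the left-hand side of your lower bound is just $\phi^{i-2}$) is a slightly more elementary packaging of the paper's estimate $\log_\phi(|F_i|) = i+2-\log_\phi(\sqrt 5)+\varepsilon_i$, but the substance is identical.
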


The authors do not know if there exist an infinity of graphs for which solving the Regression Search Problem requires exactly $\lceil \log_{\phi}(n) \rceil$ queries.

\begin{proof}[Proof of Corollary~\ref{cor:sharpness_fibonacci}]
   \begin{sloppypar}
   Recall that $|F_i| =  fib_{i+2} - 1$.
   Using that $fib_{i+2} = \left( \phi^{i+2} - (-\phi)^{-i-2} \right)/\sqrt{5}$, we get
   \end{sloppypar}
   \[
   \log_{\phi}(|F_i|) = i + 2 - \log_{\phi}(\sqrt 5) + \varepsilon_i,
   \]
   where
   \[  \varepsilon_i = \log_\phi\left(1 - \frac{\sqrt{5}}{\phi^{i+2}} - \frac {1} {(-\phi^2)^{i+2}}\right) \]
   which increases and tends to $0$, hence  is in absolute value no more than $|\varepsilon_4| \leq 0.3$ whenever $i \geq 4$. If $i \geq 4$, we have $\lceil \log_\phi(|F_i|) \rceil = i + 2 - \lfloor \log_{\phi}(\sqrt 5) + 0.3 \rfloor = i + 1$. We conclude by Theorem~\ref{theo:fibonacci}.
\end{proof}

The previous corollary demonstrates that the Fibonacci trees are inherently flawed for the Regression Search Problem. They are the less pathological analogues of octopuses, but in the context of binary DAGs.

Finally, we show that $\frac{1}{\log_2(\phi)}$ is the good approximation ratio for \golden.

\begin{corollary}
   For $\varepsilon > 0$, \golden\ is not a $\left( \dfrac{1}{\log_2(\phi)} - \varepsilon \right)$ approximation algorithm.
\end{corollary}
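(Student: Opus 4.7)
The plan is to establish tightness via a construction analogous to Proposition~\ref{prop:tight_case}, but based on Fibonacci trees instead of the $J_k$ graphs. Specifically, I would consider $comb(F_k)$ for indices $k$ with $|F_k|$ odd; since $|F_k| = fib_{k+2}-1$ and $fib_j$ is even exactly when $3 \mid j$, this covers the infinite set $k \equiv 1 \pmod 3$.

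First, Theorem~\ref{theo:comb} gives that the optimal strategy on $comb(F_k)$ uses at most $\lceil \log_2(2|F_k|) \rceil$ queries. Next, I would show that \golden\ uses at least $k$ queries in the worst case. Set $n = |F_k|$. By Claim~\ref{claim:comb_ancestors}, the sink $\vertex{v_n}$ of $F_k$ has exactly $n$ ancestors in $comb(F_k)$ and hence score $n$. Since $\phi^2 > 2$, the threshold $2n/\phi^2$ of Step~\ref{item:one_move} satisfies $2n/\phi^2 \le n$, so \golden\ runs Step~\ref{item:one_move} and queries a vertex of maximum score. The same claim tells us each $\vertex{u_i}$ has $2i$ ancestors, so its score is even and cannot equal the odd integer $n$; and every $\vertex{v_i}$ with $i < n$ has fewer than $n$ ancestors by the topological order, hence score strictly less than $n$. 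Thus $\vertex{v_n}$ is the unique vertex of maximum score and \golden\ queries it first. Choosing the \faulty\ to realise \golden's worst case on $F_k$ (which lies among the ancestors of $\vertex{v_n}$, so $\vertex{v_n}$ is indeed bugged), the query returns `bugged' and the reduced DAG is exactly $F_k$. By Theorem~\ref{theo:fibonacci}, every strategy---including \golden---requires at least $k-1$ further queries on $F_k$ in the worst case, for a total of at least $k$ queries on $comb(F_k)$.

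It remains to compute the ratio. Using $fib_{k+2} \sim \phi^{k+2}/\sqrt{5}$, we have $\log_2(|F_k|) = k\log_2(\phi) + O(1)$, so
\[
\frac{k}{\lceil \log_2(2|F_k|) \rceil} = \frac{k}{k\log_2(\phi) + O(1)} \xrightarrow[k \to \infty]{} \frac{1}{\log_2(\phi)}.
\]
Therefore, given any $\varepsilon > 0$, one may pick $k \equiv 1 \pmod 3$ large enough that the ratio of \golden\ queries to optimal queries on $comb(F_k)$ exceeds $1/\log_2(\phi) - \varepsilon$, which establishes the corollary. The main delicate point is the first-query argument: one must verify that \golden\ indeed fires Step~\ref{item:one_move} (via $\phi^2 > 2$) and that $\vertex{v_n}$ is the unique maximiser of the score (via the parity argument, which is the reason for restricting to odd $|F_k|$). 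Once these are in hand, the rest is routine accounting using previously established bounds.
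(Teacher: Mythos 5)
Your proposal is correct and follows essentially the same route as the paper: apply the comb construction to Fibonacci trees $F_k$ with $k \equiv 1 \pmod 3$ so that $|F_k|$ is odd, use Theorem~\ref{theo:comb} for the optimal upper bound, argue that \golden\ queries $\vertex{v_n}$ first (the paper asserts this directly, whereas you also check the Step~\ref{item:one_move} threshold via $\phi^2>2$, a detail worth making explicit), and invoke Theorem~\ref{theo:fibonacci} for the lower bound of $k-1$ further queries. The concluding asymptotic ratio computation matches the paper's.
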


\begin{proof} 
   The idea is to add a comb (see Definition~\ref{def:comb}) to the $i$-th Fibonacci tree $F_i$ to approach the $\frac{1}{\log_2(\phi)}$ ratio.

   Indeed, Theorem~\ref{theo:comb}  still holds if we replace  ``\gb" by ``\golden" since both algorithms will query first $\vertex{v_n}$, which is the only vertex with the maximal score $n$.
   Thus we apply this theorem for all $i$ such that $|F_i|$ is odd, i.e., whenever $i$ is congruent to $1$ modulo $3$.

   We deduce that $comb(F_i)$ is a binary DAG for which:
   \begin{itemize}
      \item the number of \golden\ queries is $\lceil \log_\phi(|F_i|) \rceil - 1$ (see Corollary~\ref{cor:sharpness_fibonacci}),
      \item the optimal number of queries is $\lceil \log_2(|F_i|) \rceil +1$.
   \end{itemize}
   The ratio of these two numbers makes a number tending to $\frac{1}{\log_2(\phi)}$, whenever $i$ goes to $+\infty$. This is why \golden\ cannot be a $\left( \frac{1}{\log_2(\phi)} - \varepsilon \right)$ approximation algorithm, for any $\varepsilon > 0$.
\end{proof}

\section{Is the binary case NP-complete?}
\label{sec:np_complete}

Though \gb\ and \golden\  are non-optimal algorithms, the Regression Search Problem (\textsc{Rsp} -- see Definition~\ref{def:regression_search_problem}) on binary DAGs is not proved to be NP-hard. We still do not know whether it is the case, but we show the NP-completeness of a new related problem, the Confined Regression Search Problem (\textsc{Crsp}). It is a reformulation of \textsc{Rsp} in the general case but not in the binary case.

In the Confined Regression Search Problem, we consider a DAG $D$ with additional information on some vertices. A vertex is said to be \emph{innocent} if it is not the faulty vertex, i.e., if it is not the one that introduced the bug. It is still possible to query an innocent vertex since it can be bugged or clean.
\textsc{Crsp} consists in searching the faulty vertex given a possibly empty set $I$ of innocent vertices. Conversely, it confines the faulty vertex to be in the complementary set of $I$. The DAG does not necessarily have any bugged vertex any more.

The decision version of the Confined Regression Search Problem is formally defined as follows.

\begin{definition} \textbf{Confined Regression Search Problem}  \\
   \textbf{Input.} A DAG $D$, a subset $I$ of innocent vertices, and an integer $k$.

   \noindent\textbf{Output.} Whether there is a strategy that determines in at most $k$ queries in the worst-case scenario whether $D$ has a bugged vertex, and if it is the case, which one is the faulty vertex.
   \label{def:crsp}
\end{definition}

Figure~\ref{fig:reductions_crsp_rsp} illustrates the equivalence between instances of \textsc{Crsp} and \textsc{Rsp} in the general case. The reductions work with the following transformations.
\begin{itemize}
   \item from \textsc{Crsp} to \textsc{Rsp}: create a bugged vertex $b$ and add arcs from all non-innocent vertices to $b$.
   \item from \textsc{Rsp} to \textsc{Crsp}: delete the bugged vertex $b$ and all its descendants. Set as innocent all vertices which were not ancestors of $b$.
\end{itemize}

The strategies are preserved in both reductions, but with the following change: if the DAG in \textsc{Crsp} has no bugged vertex then the marked vertex $b$ is the faulty vertex in \textsc{Rsp} and \textit{vice versa}. However, note that though the reduction from \textsc{Rsp} to \textsc{Crsp} preserves the indegree of the DAG, the reduction from \textsc{Crsp} to \textsc{Rsp} creates a vertex with a large indegree.

\begin{figure}[ht]
   \centering
   \begin{subfigure}[c]{.45\textwidth}
      \centering
      \hfill
      \scalebox{0.8}{
         \begin{tikzpicture}
	[
		scale=.6, every node/.style={circle,draw,scale=0.8, text=black, line width=1.2pt, opacity=.2, text opacity=1, draw opacity=1},
		minimum size=0.45cm,
		innocentNode/.style={fill=deep_purple!10, draw=deep_purple, text=deep_purple, line width=1.2pt},
		normalNode/.style={text opacity=1,opacity=0,draw opacity=1,color=darkgray,text=black, line width=1.1pt},
	]

	\draw[color=deep_purple,fill=ibm_blue!10]  plot[smooth cycle, tension=.7] coordinates {(5.5,9) (8.5,9) (10.5,7.5) (9.5,6) (7.7,7) (5.5,7.2)};
	\node[draw=none,color=deep_purple,opacity=0.2,draw opacity=0.2] at (10,8.7) {$\boldsymbol{I}$};

	\node[normalNode] (1) at (1.5, 6) {};
	\node[normalNode] (2) at (1.5, 4) {};
	\node[normalNode] (3) at (3, 5) {};
	\node[normalNode] (4) at (5, 5) {};
	\node[normalNode] (5) at (7, 5) {};
	\node[normalNode] (6) at (5, 6.5) {};
	\node[innocentNode] (7) at (6, 8) {};
	\node[innocentNode] (8) at (8, 8) {};
	\node[innocentNode] (9) at (9.5, 7) {};

	\node[draw=none] (10) at (8, 3) {};
	
	\path[->,color=black]
		(1) edge (3)
		(2) edge (3)
		(3) edge (4)
		(4) edge (5)
		(3) edge[bend left=40] (6)
		(6) edge[bend left=40] (5)
		(6) edge (7)
		(7) edge (8)
		(8) edge (9)
		(5) edge[bend right=6] (9)
	;

\end{tikzpicture}
      }
      \caption{Instance of \textsc{Crsp}.}
   \end{subfigure}
   $\longmapsto$
   \hfill
   \begin{subfigure}[c]{.45\textwidth}
      \centering
      \scalebox{0.8}{
         \begin{tikzpicture}
	[
		scale=.6, every node/.style={circle,draw,scale=0.8, text=black, line width=1.2pt, opacity=.2, text opacity=1, draw opacity=1},
		minimum size=0.45cm,
		normalNode/.style={text opacity=1,opacity=0,draw opacity=1,color=darkgray,text=black, line width=1.1pt},
		buggedNode/.style={text opacity=1,opacity=0.1,draw opacity=1,fill=ibm_magenta, color=ibm_magenta,text=black, line width=1.2pt},
	]

	\tikzset{
        halfcross/.pic = {
        \draw[rotate = 45,color=ibm_magenta] (-0.6*#1,0) -- (0.6*#1,0);
        }
    }

	\node[draw=none,color=ibm_magenta,opacity=0.2,draw opacity=0.2] at (8.6,3) {$b$};
	\draw[draw=none]  plot[smooth cycle, tension=.7] coordinates {(5.5,9) (8.5,9) (10.5,7.5) (9.5,6) (7.7,7) (5.5,7.2)};

	\node[normalNode] (1) at (1.5, 6) {};
	\node[normalNode] (2) at (1.5, 4) {};
	\node[normalNode] (3) at (3, 5) {};
	\node[normalNode] (4) at (5, 5) {};
	\node[normalNode] (5) at (7, 5) {};
	\node[normalNode] (6) at (5, 6.5) {};
	\node[normalNode] (7) at (6, 8) {};
	\node[normalNode] (8) at (8, 8) {};
	\node[normalNode] (9) at (9.5, 7) {};
	\node[buggedNode] (10) at (8, 3) {};
	
	\draw (8, 3) pic[black,opacity=1] {halfcross=11pt};

	\path[->,color=black]
		(1) edge (3)
		(2) edge (3)
		(3) edge (4)
		(4) edge (5)
		(3) edge[bend left=40] (6)
		(6) edge[bend left=40] (5)
		(6) edge (7)
		(7) edge (8)
		(8) edge (9)
		(5) edge[bend right=6] (9)
		
		(2) edge[bend right=6,color=ibm_magenta] (10)
		(3) edge[bend right=6,color=ibm_magenta] (10)
		(4) edge[bend right=6,color=ibm_magenta] (10)
		(5) edge[bend right=6,color=ibm_magenta] (10)
		(6) edge[bend right=6,color=ibm_magenta] (10)
	;
	
	\draw[->, color=ibm_magenta] (1) to[out=-70,in=170,swap] (10);

\end{tikzpicture}
      }
      \caption{Instance of \textsc{Rsp}.}
   \end{subfigure}
   \begin{subfigure}[c]{.45\textwidth}
      \centering
      \hfill
      \scalebox{0.8}{
         \begin{tikzpicture}
	[
		scale=.6, every node/.style={circle,draw,scale=0.8, text=black, line width=1.2pt, opacity=.2, text opacity=1, draw opacity=1},
		minimum size=0.45cm,
		normalNode/.style={text opacity=1,opacity=0,draw opacity=1,color=darkgray,text=black, line width=1.1pt},
		buggedNode/.style={text opacity=1,opacity=0.1,draw opacity=1,fill=ibm_magenta, color=ibm_magenta,text=black, line width=1.2pt},
	]

	\tikzset{
        halfcross/.pic = {
        \draw[rotate = 45,color=ibm_magenta] (-0.6*#1,0) -- (0.6*#1,0);
        }
    }

	\node[draw=none,color=ibm_magenta,opacity=0.2,draw opacity=0.2] at (5.4,8) {$b$};

	\node[normalNode] (1) at (1.5, 6) {};
	\node[normalNode] (2) at (1.5, 4) {};
	\node[normalNode] (3) at (3, 5) {};
	\node[normalNode] (4) at (5, 5) {};
	\node[normalNode] (5) at (7, 5) {};
	\node[normalNode] (6) at (5, 6.5) {};
	\node[buggedNode] (7) at (6, 8) {};
	\node[normalNode] (8) at (8, 8) {};
	\node[normalNode] (9) at (9.5, 7) {};
	
	\draw (6, 8) pic[black,opacity=1] {halfcross=11pt};

	\path[->,color=black]
		(1) edge (3)
		(2) edge (3)
		(3) edge (4)
		(4) edge (5)
		(3) edge[bend left=40] (6)
		(6) edge[bend left=40] (5)
		(6) edge (7)
		(7) edge (8)
		(8) edge (9)
		(5) edge[bend right=6] (9)
	;

\end{tikzpicture}
      }
      \caption{Instance of \textsc{Rsp}.}
   \end{subfigure}
   $\longmapsto$
   \hfill
   \begin{subfigure}[c]{.45\textwidth}
      \centering
      \scalebox{0.8}{
         \begin{tikzpicture}
	[
		scale=.6, every node/.style={circle,draw,scale=0.8, text=black, line width=1.2pt, opacity=.2, text opacity=1, draw opacity=1},
		minimum size=0.45cm,
		innocentNode/.style={fill=deep_purple!10, draw=deep_purple, text=deep_purple, line width=1.2pt},
		normalNode/.style={text opacity=1,opacity=0,draw opacity=1,color=darkgray,text=black, line width=1.1pt},
		transparentNode/.style={densely dashed,color=darkgray,text=black, line width=1.1pt, opacity=0.5},
	]

	\draw[color=deep_purple,fill=ibm_blue!10]  plot[smooth cycle, tension=.7] coordinates {(4.5,5.75) (7.5,5.75) (7.5,4.25) (4.5,4.25)};
	\node[draw=none,color=deep_purple,opacity=0.2,draw opacity=0.2] at (8.3,5) {$\boldsymbol{I}$};

	\node[normalNode] (1) at (1.5, 6) {};
	\node[normalNode] (2) at (1.5, 4) {};
	\node[normalNode] (3) at (3, 5) {};
	\node[innocentNode] (4) at (5, 5) {};
	\node[innocentNode] (5) at (7, 5) {};
	\node[normalNode] (6) at (5, 6.5) {};

	\node[transparentNode] (7) at (6, 8) {};
	\node[transparentNode] (8) at (8, 8) {};
	\node[transparentNode] (9) at (9.5, 7) {};

	\path[->,color=black]
		(1) edge (3)
		(2) edge (3)
		(3) edge (4)
		(4) edge (5)
		(3) edge[bend left=40] (6)
		(6) edge[bend left=40] (5)

		(6) edge[dashed,opacity=0.5] (7)
		(7) edge[dashed,opacity=0.5] (8)
		(8) edge[dashed,opacity=0.5] (9)
		(5) edge[dashed,bend right=6,opacity=0.5] (9)
	;

\end{tikzpicture}
      }
      \caption{Instance of \textsc{Crsp}.}
   \end{subfigure}
   \caption{\emph{Top.} Example of a transformation from an instance of \textsc{Crsp} to an instance of \textsc{Rsp}. \emph{Bottom.} Example of a transformation from an instance of \textsc{Rsp} to an instance of \textsc{Crsp}.}
   \label{fig:reductions_crsp_rsp}
\end{figure}

To prove that \textsc{Crsp} is NP-hard even restricted to binary DAGs,
we show that there is a polynomial reduction from the problem \textsc{Bounded (2,3)-SAT} (\textsc{Bsat}), proved to be NP-complete in \cite{TOVEY198485}.

\begin{definition} \textbf{Bounded (2,3)-SAT}  \\
   \textbf{Input.} A Boolean formula in Conjonctive Normal Form with the following restrictions:
   \begin{itemize}
      \item each clause contains 2 or 3 literals,
      \item each variable is present in at most 3 clauses.
   \end{itemize}

   \noindent\textbf{Output.} Whether there is an assignment of the variables that satisfies the formula.
   \label{def:bsat}
\end{definition}

Note that we can assume without loss of generality that each literal of any instance of \textsc{Bsat} is present in at most 2 clauses. Indeed, if a literal appears in 3 clauses, the opposite literal does not appear in the formula. Assigning the corresponding variable so that the literal is true makes the three clauses satisfied. Thus they can be removed from the formula without changing its satisfiability.

\begin{theorem}
   Confined Regression Search Problem is NP-hard even when the inputs are restricted to binary DAGs.
   \label{th:crsp_np_hard}
\end{theorem}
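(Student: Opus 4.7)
The plan is to reduce from \textsc{Bounded (2,3)-SAT} (\textsc{Bsat}). Given a \textsc{Bsat} instance $\phi$ with variables $x_1,\dots,x_n$ and clauses $C_1,\dots,C_m$, I would construct in polynomial time a binary DAG $D_\phi$, a set $I$ of innocent vertices, and an integer $k$, such that $\phi$ is satisfiable if and only if \textsc{Crsp} on $(D_\phi,I,k)$ admits a strategy of worst-case cost at most $k$. Polynomiality is immediate since the degree restrictions of \textsc{Bsat} force $m = O(n)$.

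The construction would use two families of gadgets. A \emph{variable gadget} for each $x_i$ would contain exactly two candidate faulty vertices $t_i$ and $f_i$, representing the two truth values of $x_i$, embedded in a small auxiliary structure which forces any strategy to spend at least one query to separate $t_i$ from $f_i$. A \emph{clause gadget} for each $C_j$ would provide a small binary or-tree whose candidate faulty vertices survive until the strategy has \emph{committed}, via previous query results, to a truth assignment that satisfies $C_j$; the literals of $C_j$ are wired to the corresponding variable gadgets so that a ``true'' branch prunes the relevant leaves of the clause gadget. Because each literal appears in at most two clauses (an assumption one may take for free in \textsc{Bsat}), each variable gadget has constant out-wiring and no vertex ends up with indegree more than $2$. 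All auxiliary and merge vertices go into $I$, so only the $t_i$, $f_i$ and the clause-gadget leaves remain as possible faulty candidates.

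The threshold $k$ would be set to $n + \lceil \log_2 M \rceil$, where $M$ bounds the residual size of a fully-collapsed clause gadget. The intended analysis is the following: on a satisfying assignment, the strategy queries each variable gadget once (committing the assignment), which simultaneously collapses every clause gadget, and then finishes with a balanced binary search of cost $\lceil \log_2 M \rceil$; on an unsatisfiable instance, at least one clause gadget resists collapse no matter which assignment is read off the variable queries, and its residual size forces an extra query past the budget. I would check both directions of the equivalence by a structural induction on the strategy tree.

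The main obstacle will be the calibration of $k$ and of the gadget sizes so as to rule out shortcut strategies: one must prevent the solver from skipping a variable gadget (hoping to resolve it ``for free'' later) as well as clever interleavings that mix variable and clause queries in an order tailored to a specific unsatisfied clause. The standard fix is to blow up the variable gadgets with long innocent decoy paths, so that each one individually dominates the worst case and must therefore be attacked by its own dedicated query. Conversely, the clause gadgets must be kept small enough that a correct assignment lets a single binary search finish them off within the remaining $\lceil \log_2 M \rceil$ queries. Once this balancing is achieved, the equivalence $\phi \in \textsc{Bsat} \iff (D_\phi,I,k)$ is a YES-instance of \textsc{Crsp} follows, yielding NP-hardness of the binary case.
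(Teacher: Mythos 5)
Your overall plan --- reduce from \textsc{Bounded (2,3)-SAT}, encode the assignment in which vertex of each variable gadget gets queried, and arrange that clause gadgets are cleared exactly when a satisfying literal is queried --- is the same skeleton as the paper's reduction. But the step you flag as ``the main obstacle'' (forcing the solver to spend one dedicated query per variable gadget, and calibrating $k$) is precisely the content of the proof, and your proposed fix does not work. Padding the variable gadgets with \emph{innocent} decoy paths imposes no obligation whatsoever on a strategy: innocent vertices are not candidates for the faulty vertex, so they never need to be cleared and cannot ``dominate the worst case''. What actually forces the $n$ queries in the paper is a \emph{non-innocent} control vertex $ct_i$ in each variable gadget, placed as a common parent of $x_i$ and $\overline{x_i}$, so that the only queries able to certify that $ct_i$ is not the faulty vertex are queries of $ct_i$, $x_i$ or $\overline{x_i}$. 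Together with three isolated non-innocent vertices $t_1,t_2,t_3$, this pins the budget at exactly $k=n+3$, with no logarithmic term and no calibration of gadget sizes needed.

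The second missing idea is the scenario used for the lower bound. The paper's converse direction looks exclusively at the run of the strategy in which the DAG has \emph{no} faulty vertex at all --- a scenario that exists only because the problem is the \emph{Confined} variant. In that run every non-innocent vertex must be cleared; counting forced queries ($3$ for the $t_i$'s, at least one per group $\{ct_i,x_i,\overline{x_i}\}$) exhausts the whole budget, so each clause vertex $c_j$ (built as an \emph{ancestor} of its literals' vertices, hence cleared when one of them is queried and found clean) can only be cleared if some literal vertex of $C_j$ is among the $n$ chosen queries --- i.e., the read-off assignment satisfies every clause. Your sketch instead reasons about ``residual sizes'' and a terminal binary search of cost $\lceil\log_2 M\rceil$, which leaves the bugged branches of the strategy tree unanalyzed and gives no argument ruling out a shortcut strategy that skips or interleaves variable gadgets. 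Without the control vertices and the no-faulty-vertex scenario, neither direction of the equivalence is established.
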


\begin{proof}
   We show that there exists a polynomial reduction from \textsc{Bsat} to \textsc{Crsp} with inputs restricted to binary DAGs, named \textsc{Bin-crsp}.
   The reduction algorithm takes as input a Boolean formula with $n$ variables and $m$ clauses and computes a binary DAG $D$ in polynomial time as follows.

   First, for each variable $X_i$, create two vertices $x_i$ and $\overline{x_i}$, two \emph{branching} vertices $b_i$ and $\overline{b_i}$, and one \emph{control} vertex $ct_i$. Connect $b_i$ and $ct_i$ to $x_i$, and $\overline{b_i}$ and $ct_i$ to $\overline{x_i}$ (see Figure~\ref{fig:reduction_algo_example} for an example of this \emph{variable gadget}).

   Then for each clause $C_j$, create a vertex $c_j$. For each literal $x_i$ (resp. $\overline{x_i}$) in the clause $C_j$, add an arc from $c_j$ to $b_i$ (resp. $\overline{b_i}$).
   Finally, create three isolated vertices $t_1,t_2$ and $t_3$.
   Set as innocent all vertices except $t_1,t_2,t_3$, the $(c_j)_{1\le j\le m}$ and the $(ct_i)_{1\le j\le n}$ vertices.

   Note that since every literal appears in at most two clauses, each vertex $b_i$ and $\overline{b_i}$ has indegree at most two, and the resulting DAG is binary. Moreover, the size of the Boolean formula and the size of the DAG are polynomials with respect to $(n+m)$.

   For example, Figure~\ref{fig:reduction_algo_example} shows the DAG from the reduction of the following formula in \textsc{Bsat} form:
   \[(x_1 \lor \overline{x_2}) \land (\overline{x_1} \lor \overline{x_2} \lor \overline{x_3}).\]

   \begin{figure}[ht]
      \centering
      \scalebox{0.8}{
         \begin{tikzpicture}
	[
		scale=.6, every node/.style={circle,draw,scale=0.8, text=black, line width=1.2pt, opacity=1, text opacity=1, draw opacity=1},
		innocentNode/.style={fill=deep_purple!10, draw=deep_purple, text=deep_purple, line width=1.2pt},
		normalNode/.style={text opacity=1,opacity=1,draw opacity=1,draw=darkgray,text=darkgray, line width=1.pt, fill=white},
	]

	\tikzset{
		show curve controls/.style={
			decoration={
				show path construction,
				curveto code={
					\draw [blue, dashed]
						(\tikzinputsegmentfirst) -- (\tikzinputsegmentsupporta)
						node [at end, cross out, draw, solid, red, inner sep=2pt]{};
					\draw [blue, dashed]
						(\tikzinputsegmentsupportb) -- (\tikzinputsegmentlast)
						node [at start, cross out, draw, solid, red, inner sep=2pt]{};
				}
			}, decorate
		}
	}

	\draw[color=deep_purple,fill=ibm_blue!10,opacity=1]  plot[smooth cycle, tension=.6] coordinates {(7,18.5) (11,17.5) (11,2.5) (7,1.5) (7,10)};
	\node[draw=none,color=deep_purple,opacity=0.2,draw opacity=0.2] at (11,19) {$\boldsymbol{I}$};
	
	\draw[color=ibm_orange,fill=ibm_orange!20, line width=1.5pt, dashed, text opacity=1, draw opacity=1, fill opacity=0.4]  plot[smooth cycle, tension=.8] coordinates {(5.75,12.5) (10.75,12) (10.75,8) (5.75,7.5)};
	\node[draw=none,color=ibm_orange] at (13.2,10) {\textbf{$\boldsymbol{X_2}$ gadget}};

	\node[normalNode] (1) at (5.5, 16) {$ct_1$};
	\node[innocentNode] (2) at (10, 17) {$x_1$};
	\node[innocentNode] (3) at (10, 15) {$\overline{x_1}$};
	\node[innocentNode] (16) at (8, 18) {$b_1$};
	\node[innocentNode] (17) at (8, 14) {$\overline{b_1}$};

	\node[normalNode] (4) at (6, 10) {$ct_2$};
	\node[innocentNode] (5) at (10, 11) {$x_2$};
	\node[innocentNode] (6) at (10, 9) {$\overline{x_2}$};
	\node[innocentNode] (18) at (8, 12) {$b_2$};
	\node[innocentNode] (19) at (8, 8) {$\overline{b_2}$};

	\node[normalNode] (7) at (5.5, 4) {$ct_3$};
	\node[innocentNode] (8) at (10, 5) {$x_3$};
	\node[innocentNode] (9) at (10, 3) {$\overline{x_3}$};
	\node[innocentNode] (20) at (8, 6) {$b_3$};
	\node[innocentNode] (21) at (8, 2) {$\overline{b_3}$};

	\node[normalNode] (10) at (0, 14) {$c_1$};
	\node[normalNode] (11) at (0, 6) {$c_2$};

	\node[normalNode] (12) at (0, 3) {$t_1$};
	\node[normalNode] (13) at (0, 1.5) {$t_2$};
	\node[normalNode] (14) at (0, 0) {$t_3$};

	\path[->,color=darkgray]
		(1) edge (2)
		(1) edge (3)
		(16) edge[bend left=5] (2)
		(17) edge[bend right=5] (3)

		(4) edge (5)
		(4) edge (6)
		(18) edge[bend left=5] (5)
		(19) edge[bend right=5] (6)

		(7) edge (8)
		(7) edge (9)
		(20) edge[bend left=5] (8)
		(21) edge[bend right=5] (9)

		(10) edge[bend left=25] (16)
		(10) edge[bend right=21] (19)
		
		(11) edge[bend left=21] (17)
		(11) edge[bend left=5] (19)
		(11) edge[bend right=25] (21)
	;

\end{tikzpicture}
      }
      \caption{The graph resulting from the reduction of the formula ${(x_1 \lor \overline{x_2}) \land (\overline{x_1} \lor \overline{x_2} \lor \overline{x_3})}$.}
      \label{fig:reduction_algo_example}
   \end{figure}

   We show that the previous transformation is a reduction: the Boolean formula has a satisfiable assignment if and only if there is a strategy using $n+3$ queries in the worst-case scenario.

   First, let us suppose that $F$ has a satisfiable assignment ${X=\{X_1,X_2,\dots,X_n\}}$ where $X_i$ is true or false and let us describe a strategy that solves \textsc{Bin-crsp}.

   For each $i$ between 1 and $n$, query $x_i$ if $X_i$ is true, $\overline{x_i}$ if $X_i$ is false. If none of the queries reveals a bugged vertex, query the three vertices $t_1,t_2,t_3$. If one is discovered as bugged then it is the faulty vertex. If they are all clean then there is no faulty vertex in the DAG.
   Otherwise, some vertex $x_i$ or $\overline{x_i}$ is found to be bugged.
   Without loss of generality, let us assume it is a vertex $x_i$ corresponding to a positive literal.
   Then, we query the parents of $b_i$ (i.e. the vertices corresponding to clauses where $x_i$ appears), which is done in no more than two queries.
   If a parent of $b_i$ is discovered as bugged, then it is the faulty vertex.
   Otherwise, $ct_i$ is the faulty vertex.

   \medskip

   Conversely, let us suppose there exists a strategy solving \textsc{Bin-crsp} in at most $n+3$ queries. We prove that the Boolean formula $F$ has a satisfiable assignment.

   Consider the scenario with no faulty vertex in the DAG. We show that exactly $n+3$ queries are required and that a solution of the formula can be deduced from which vertices have been queried.
   All the non-innocent vertices must be cleared to guaranty that the DAG has no faulty vertex, which can be done by querying the vertex itself or one of its descendants.

   Concerning the three terminal vertices $t_1, t_2$ and $t_3$, they are isolated so the strategy must query each of them to know that they are clean.
   To know that a $ct_i$ vertex is clean we must query the $ct_i$ vertex itself or one of its two descendants $x_i$ or $\overline{x_i}$. Thus we must make at least one query in each group $\{ct_i,x_i,\overline{x_i}\}$ for $1 \le i\le n$.
   By our assumption that the strategy uses at most $n+3$ queries, there must be exactly one query in each group $\{ct_i,x_i,\overline{x_i}\}$.

   For each $1 \le i\le n$ , we assign $X_i$ as $$X_i = \left\{\begin{array}{ll}\text{true} & \textrm{if } x_i \textrm{ is queried} \\ \text{false} & \textrm{if } \overline{x_i} \textrm{ is queried} \\ \text{true} \textrm{ (arbitrarily)} & \textrm{if } ct_i \textrm{ is queried}\end{array}\right.$$

   Since all clause vertices must be cleared, at least one of the descendants of each $c_j$ is queried ($c_j$ cannot be queried since we have already done $n+3$ queries). This implies that some literal of the corresponding clause $C_j$ is true in the above assignment.\mbox{\qedhere}
\end{proof}

As a corollary we have an alternative proof of the NP-completeness of \textsc{Rsp}.
\begin{corollary}
   The problems \textsc{Rsp}, \textsc{Crsp} and \textsc{Bin-crsp} are NP-complete.
   \label{cor:rsp_crsp_crspbin_npcomplete}
\end{corollary}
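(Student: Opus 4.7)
The plan is twofold: first verify NP-membership for all three problems by exhibiting a common polynomial-size certificate, then chain together the reductions already established in the paper to obtain NP-hardness for each.

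For NP-membership, the natural certificate is a strategy (binary decision) tree $S$. The crucial size bound is that any minimal strategy tree has at most $n+1$ leaves, one for each possible faulty vertex in $D$, plus, for \textsc{Crsp} and \textsc{Bin-crsp}, an extra leaf for the ``no bugged vertex'' verdict; hence $S$ has polynomial size in the input. Given a candidate tree $S$ and the bound $k$, I would check in polynomial time that (i) the height of $S$ is at most $k$, and (ii) for each possible scenario (one per candidate faulty vertex, plus the ``no bug'' case when relevant), the root-to-leaf path dictated by the bugged/clean labeling of queried vertices under that scenario terminates at the correctly labeled leaf. Each scenario induces a single path of length at most $k \leq n$, and there are at most $n+1$ scenarios, so the verification runs in polynomial time.

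For NP-hardness, I would combine three facts. Theorem~\ref{th:crsp_np_hard} establishes NP-hardness of \textsc{Bin-crsp} directly. NP-hardness of \textsc{Crsp} is immediate since \textsc{Bin-crsp} is a syntactic restriction of \textsc{Crsp} to binary inputs, so the identity mapping is a reduction \textsc{Bin-crsp}~$\to$~\textsc{Crsp}. NP-hardness of \textsc{Rsp} follows from the reduction \textsc{Crsp}~$\to$~\textsc{Rsp} described just before Theorem~\ref{th:crsp_np_hard}: given an instance $(D, I, k)$ of \textsc{Crsp}, introduce a fresh sink vertex $\vertex{b}$ together with an arc from every non-innocent vertex of $D$ to $\vertex{b}$. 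This transformation is computable in polynomial time, and, as explained in the paper, strategies correspond bijectively between the two instances, with the ``no faulty'' verdict in \textsc{Crsp} matching the ``$\vertex{b}$ is the faulty commit'' leaf in the resulting \textsc{Rsp} instance. Consequently a strategy with at most $k$ queries exists for one instance if and only if one exists for the other.

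The only bookkeeping that requires care is the treatment of the ``no bugged vertex'' outcome in \textsc{Crsp}, both in the size bound for the certificate and in matching strategy leaves under the reduction to \textsc{Rsp}; in both directions it is absorbed by a single extra leaf, respectively the single auxiliary sink $\vertex{b}$, so no genuine obstacle arises. I do not expect any hard step — the corollary is essentially an assembly of Theorem~\ref{th:crsp_np_hard}, the restriction argument, and the \textsc{Crsp}~$\leftrightarrow$~\textsc{Rsp} equivalence discussed around Figure~\ref{fig:reductions_crsp_rsp}.
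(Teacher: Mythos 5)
Your proposal is correct and follows essentially the same route as the paper: membership in NP via a polynomial-size strategy tree as certificate, NP-hardness of \textsc{Bin-crsp} from Theorem~\ref{th:crsp_np_hard}, hardness of \textsc{Crsp} because it generalises \textsc{Bin-crsp}, and hardness of \textsc{Rsp} via the \textsc{Crsp}~$\to$~\textsc{Rsp} transformation of Figure~\ref{fig:reductions_crsp_rsp}. The extra detail you give on the leaf count and the polynomial-time verification of the certificate is a welcome elaboration of what the paper states in one line, but it is not a different argument.
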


\begin{proof}
   On one hand, certificates for these 3 problems are strategy trees with a polynomial number of nodes. Thus \textsc{Crsp}-bin, \textsc{Crsp} and \textsc{Rsp} are in NP.
   On the other hand, \textsc{Bin-crsp} is NP-hard by Theorem~\ref{th:crsp_np_hard}. As a generalisation of \textsc{Bin-crsp}, \textsc{Crsp} is NP-hard. Since \textsc{Rsp} is equivalent to \textsc{Crsp}, as shown above, \textsc{Rsp} is also NP-hard.
\end{proof}

In the light of the above reduction, let us explain why proving the NP-hardness of the Regression Search Problem for binary graphs (abbreviated \textsc{Bin-rsp}) seems to be arduous. Observe that the reduction from \textsc{Bsat} to \textsc{Bin-crsp} (as well as the existing reductions in the literature) encodes the assignment of the $n$ variables of the SAT formula into a sequence of queries from a specific scenario. Thereby the certificate is a strategy tree of depth at least $n$. By Theorem~\ref{theo:bound_golden_binary_dag}, any binary DAG requiring this number of queries to solve \textsc{Bin-rsp} has at least $\phi^{n-1}$ vertices and thus would not be polynomial in the size of the formula.

Therefore, to find a reduction from a SAT problem to \textsc{Bin-rsp}, it is not possible to encode the assignment of the variables in a single branch of the strategy tree. One needs to consider the tree widthwise, and use the sets of queried vertices in different scenarios. We do not know of any reduction using this thorny approach yet.

\section{Conclusion}

In summary, this paper has established that \gb\ can be very inefficient on very particular digraphs, but under the reasonable hypothesis that merges must not concern more than $2$ branches each, it is proved to be a good approximation algorithm.
This study has also developed a new algorithm, \golden, which displays better theoretical results than \gb.

Notably, some open questions remain:
\begin{itemize}
   \item Is \textsc{Bin-rsp}, the Regression Search Problem for binary DAGs, still NP-complete, as discussed in the previous section?
   \item How do \gb\ and \golden\ compare in terms of average-case complexity?
   \item In \gb\ and in \golden, one never queries vertices which were eliminated from the set of candidates for the \faulty. However, we could speed up the procedure by never removing any vertex after queries. For example, consider the DAG from Figure~\ref{fig:octopus_with_comb}. If we choose $\vertex{v_7}$ as first query and it is bugged, then we remove all $\vertex{u_i}$ (the non-ancestors of  $\vertex{v_7}$). However, querying the vertices $\vertex{u_i}$ in the comb would be more efficient. Could we obtain an improved algorithm by authorising such queries?
   \item If we restrict the DAGs to be trees (oriented from the leaves to the root) with unbounded indegree, are \gb\ and \golden\ good approximation algorithms? We conjecture that they are $2-$approximation algorithms for trees. (We have found examples where the ratio is $2$.)
\end{itemize}

Finally it would be interesting to study the number of queries in the worst-case scenario when the input DAG is taken at random. Indeed, most of the examples described in this paper are not very likely to appear in reality. The notion of randomness for a digraph emanating from a VCS is therefore quite interesting and deserves to be developed. One could for example define a theoretical probabilistic model based on existing workflows.
It would be also quite useful to use random samplers for VCS repositories in order to constitute benchmarks on demand.

\section*{Aknowledgments}

This research was conducted within the project ``DynNet'' supported by the Normandy region.
The European project ``DynNet'' is funded by the European Union within the framework of the Operational Programme ERDF/ESF 2014-2020.
R.L. was also supported by the Normandy RIN project AAAA.

\bibliography{references}

\bibliographystyle{plain}

\end{document}